\numberwithin{equation}{section}
\numberwithin{figure}{section}
\theoremstyle{plain}
\newtheorem{theorem}{Theorem}[section]
\newtheorem{corollary}[theorem]{Corollary}
\newtheorem{lemma}[theorem]{Lemma}
\newtheorem{claim}[theorem]{Claim}
\theoremstyle{remark}
\newtheorem{remark}[theorem]{Remark}
\theoremstyle{definition}
\newtheorem{definition}[theorem]{Definition}
 \newcommand{\TODO}[1]{\textbf{\color{red}[TODO: #1]}}
\algnewcommand{\LeftComment}[1]{\(\triangleright\) #1}
\global\long\def\defeq{\stackrel{\mathrm{{\scriptscriptstyle def}}}{=}}%
\global\long\def\norm#1{\left\Vert #1\right\Vert }%
\def\eps{\varepsilon}
\global\long\def\R{\mathbb{R}}%
\global\long\def\diag{\mathrm{diag}}%
\global\long\def\ot{\overline{t}}%
\global\long\def\ct{\mathcal{T}}%
\global\long\def\cs{\mathcal{S}}%
\global\long\def\treeop{\mathbf{\Delta}}%
\global\long\def\itreeop{\mathbf{\nabla}}%
\global\long\def\sc{\mathbf{Sc}}%
\global\long\def\sketchlen{w}
\newcommand{\hE}{\mathcal{E}}
\newcommand{\zprev}{{\vz^{(\mathrm{step})}}}
\newcommand{\zsum}{{\vz^{(\mathrm{sum})}}}
    \renewcommand*{\bm}[1]{#1}%
\global\long\def\mzero{\mathbf{0}}%
\global\long\def\ma{\mathbf{A}}%
\global\long\def\mb{\mathbf{B}}%
\global\long\def\md{\mathbf{D}}%
\global\long\def\mi{\mathbf{I}}%
\global\long\def\ml{\mathbf{L}}%
\global\long\def\mm{\mathbf{M}}%
\global\long\def\ms{\mathbf{S}}%
\global\long\def\mmu{\mathbf{U}}%
\global\long\def\mv{\mathbf{V}}%
\global\long\def\mw{\mathbf{W}}%
\global\long\def\mx{\mathbf{X}}%
\global\long\def\mpi{\mathbf{\Pi}}%
\global\long\def\mga{\mathbf{\Gamma}}%
\global\long\def\mphi{\mathbf{\Phi}}
\global\long\def\mproj{\mathbf{P}}%
\global\long\def\omd{\overline{\mathbf{D}}}%
\global\long\def\vx{\bm{x}}%
\global\long\def\vz{\bm{z}}%
\global\long\def\vd{\bm{d}}%
\global\long\def\vf{\bm{f}}%
\global\long\def\vq{\bm{q}}%
\global\long\def\vb{\bm{b}}%
\global\long\def\vc{\bm{c}}%
\global\long\def\vf{\bm{f}}%
\global\long\def\vs{\bm{s}}%
\global\long\def\vv{\bm{v}}%
\global\long\def\vw{\bm{w}}%
\global\long\def\vt{\bm{t}}%
\global\long\def\vv{\bm{v}}%
\global\long\def\vy{\bm{y}}%
\global\long\def\vl{\bm{l}}%
\global\long\def\vu{\bm{u}}%
\global\long\def\vx{\bm{x}}%
\global\long\def\ox{\overline{\vx}}%
\global\long\def\os{\overline{\vs}}%
\global\long\def\pf{\bm{f}^{\perp}}%
\global\long\def\O{\widetilde{O}}%
\global\long\def\new{{(\mathrm{new})}}
\global\long\def\old{{(\mathrm{old})}}
\global\long\def\init{{(\mathrm{init})}}
\global\long\def\vzero{\bm{0}}%
\global\long\def\vone{\bm{1}}%
\DeclareMathOperator{\nnz}{nnz}
\global\long\def\region{H}
\newcommand{\atH}{^{(H)}}
\newcommand{\bdry}[1]{\partial #1}
\newcommand{\elim}[1]{{F_{#1}}}
\newcommand{\sep}[1]{{S({#1})}}
\newcommand{\skel}[1]{F_{#1} \cup \partial #1}
\newcommand{\pathT}[1]{\mathcal{P}_{\mathcal{T}}(#1)}
\global\long\def\collN{\mathcal{H}}
\newcommand{\nfrac}[2]{\nicefrac{1}{2}}
\let\ref\cref
\author{
	Sally Dong\thanks{Sally Dong was supported by NSERC PGS-D 557770-2021,  and NSF awards CCF-1749609, DMS-1839116, DMS-2023166, CCF-2105772.} \\ University of Washington \\ sallyqd@uw.edu 
	\and 
	Gramoz Goranci \\ University of Vienna \\ gramoz.goranci@univie.ac.at 
	\and
	Lawrence Li\thanks{Lawrence Li was supported by NSERC
          Discovery Grant RGPIN-2018-06398 and Ontario Early Researcher
          Award (ERA) ER21-16-283 awarded to SS.} \\ University of Toronto \\ lawrenceli@cs.toronto.edu
	\and
	Sushant Sachdeva\thanks{Sushant Sachdeva was supported by an NSERC Discovery Grant RGPIN-2018-06398, an Ontario Early Researcher Award (ERA) ER21-16-283, and a Sloan Research Fellowship.} \\ University of Toronto \\ sachdeva@cs.toronto.edu 
	\and 
	Guanghao Ye\thanks{Guanghao Ye was supported by NSF awards CCF-1955217 and DMS-2022448.} \\ Massachusetts Institute of Technology \\ ghye@mit.edu
}
\begin{document}
	\title{Fast Algorithms for Separable Linear Programs}
	\date{\today}
	\maketitle
        \thispagestyle{empty}
	\begin{abstract}
		In numerical linear algebra, considerable effort has been devoted to obtaining faster algorithms for linear systems whose underlying matrices exhibit structural properties. A prominent success story is the method of generalized nested dissection~[Lipton-Rose-Tarjan'79] for separable matrices. On the other hand, the majority of recent developments in the design of efficient linear program (LP) solves do not leverage the ideas underlying these faster linear system solvers nor consider the separable structure of the constraint matrix.
		
		We give a faster algorithm for separable linear programs. Specifically, we consider LPs of the form $\min_{\mathbf{A}\bm{x}=\bm{b}, \bm{\ell} \leq \bm{x} \leq \bm{u}} \bm{c}^\top\bm{x}$, where the graphical support of the constraint matrix $\mathbf{A} \in \mathbb{R}^{n\times m}$ is $O(n^\alpha)$-separable. These include flow problems on planar graphs and low treewidth matrices among others. We present an $\widetilde{O}((m+m^{1/2 + 2\alpha}) \log(1/\epsilon))$ time algorithm for these LPs, where $\epsilon$ is the relative accuracy of the solution.
		
		Our new solver has two important implications: for the $k$-multicommodity flow problem on planar graphs, we obtain an  algorithm running in $\widetilde{O}(k^{5/2} m^{3/2})$ time in the high accuracy regime; and when the support of $\mathbf{A}$ is $O(n^\alpha)$-separable with $\alpha \leq 1/4$, our algorithm runs in $\widetilde{O}(m)$ time, which is nearly optimal. The latter significantly improves upon the natural approach of combining interior point methods and nested dissection, whose time complexity is lower bounded by $\Omega(\sqrt{m}(m+m^{\alpha\omega}))=\Omega(m^{3/2})$, where $\omega$ is the matrix multiplication constant. Lastly, in the setting of low-treewidth LPs, we recover the results of \cite{treeLP} and \cite{GS22} with significantly simpler data structure machinery.
	\end{abstract}
	
	\newpage
        \thispagestyle{empty}
	\tableofcontents
	\newpage

        \setcounter{page}{1}
	\section{Introduction}

Linear programming (LP) is a widely used technique for solving a broad range of problems that emerge in optimization, operations research, and computer science, among others. LP solvers have been a subject of research for many years, both from theoretical as well as practical perspectives. This has led to the development of several algorithmic gems such as the Simplex algorithm~\cite{dantzig1951maximization}, ellipsoid algorithm~\cite{khachiyan1980polynomial} and interior point method~\cite{karmarkar1984new}, to name a few.

Fast solvers for LPs via interior point methods have received considerable attention recently, especially in the theoretical computer science community. A series of improvements culminated in the recent breakthrough work of Cohen, Lee and Song~\cite{CohenLS21}, which shows that any linear program  $\min_{\ma \vx = \vb, \vl \leq \vx \leq \vu} \vc^\top \vx$ with $n$ constraints and $m$ variables can be solved in $\O(m^\omega \log (1/\eps))$ time, where $\eps$ is the accuracy parameter and $\omega \approx 2.3715$ is the matrix multiplication exponent~\cite{duan2022faster,williams2023new}. When $\ma$ is a dense matrix, their running time is almost optimal as it nearly matches the $O(m^\omega)$ algorithm for solving a linear system $\ma \vx = \vb$, which is a sub-problem of linear programming. However, the case when $\ma$ is a sparse matrix is equally important, since the constraint matrices of many LP instances that arise in practical applications happen to be sparse.

A widely-used method for identifying structures in a sparse matrix $\ma$ involves associating a graph with its non-zero pattern, which captures the interactions between the equations in the system. In this paper, we are interested in when said graph is \emph{separable}; we use a weighted-version of the definition as is common in literature, such as~\cite{henzinger1997faster}:

\begin{definition}[Separable graphs]
	\label{defn:separable-graph}
	A (hyper-)graph $G=(V, \hE)$ is \emph{$n^\alpha$-separable} for some $\alpha \in [0,1]$ if there exists constants $b\in (0, 1)$ and $c > 0$, such that for any vertex weight assignment $w$, the vertices of $G$ can be partitioned into $S$, $A$ and $B$ such that $|S| \leq c \cdot |V|^\alpha$, there are no edges between $A$ and $B$, and $\max \{ w(A), w(B)\}  \le b \cdot w(V)$.
	We call $S$ the \textit{($b$-)balanced vertex separator} of $H$ (with respect to $w$).
\end{definition}
A notable case is $\alpha = 1/2$, which includes the family of planar and bounded-genus graphs~\cite{lipton1979separator}. It has also been empirically observed that road networks have separators of size $n^{1/3}$\cite{DibbletSW14,SchildS15}. 

Building upon the seminal work of George~\cite{george1973nested}, Lipton Tarjan and Rose~\cite{lipton1979generalized} introduced the
generalized nested dissection algorithm, which solves the linear system $\ma \vx = \vb$ in $O(m + m^{\alpha \omega})$ time when $\ma$ is a symmetric-positive definite matrix and the associated graph is $O(n^\alpha)$-separable. 
When $\alpha < 1$, this algorithm outperforms the canonical $O(m^\omega)$-time algorithm for general linear systems.
Motivated by this, we ask the natural question of how to leverage the structures in the constraints to speed up linear programming:

\begin{center}
\emph{Are there faster LP solvers for the class of problems where the constraint matrix $\ma$ can be represented by an $O(n^\alpha)$-separable graph?}
\end{center}

Given the constraint matrix $\ma$, \cite{lipton1979generalized} associates with it the unique graph whose adjacency matrix has the same non-zero pattern as $\ma$.
In the context of linear programs, we define the \emph{dual graph} $G_{\ma}$ of a constraint matrix $\ma \in \R^{n \times m}$ to be the hypergraph with vertex set $\{1,\ldots,n\}$ corresponding to the rows of $\ma$ and hyper-edges $\{e_1,\dots, e_m\}$, such that vertex $i$ is in hyperedge $e_j$ if $\ma_{i,j} \neq 0$.

In this paper, we present a faster solver for LPs whose dual graph is separable.
\begin{theorem}\label{thm:main}
	Given a linear program $\min \; \{ \vc^{\top} \vx \;:\; \ma \vx = \vb, \vl \leq  \vx \leq \vu\}$, where $\ma \in \R^{n \times m}$ is a full-rank matrix with $n \leq m$,
	suppose the dual graph $G_\ma$ is $O(n^\alpha)$-separable with $\alpha \in [0,1]$, and a balanced separator is computable in $T(n)$ time.
	
	Suppose that $r$ is the inner radius of the polytope, namely, there is $\vx$ such that $\ma \vx= \vb$ and $\vl + r \leq \vx \leq \vu - r$.
	Let $L = \norm{\vc}_2$ and $R = \norm{\vu-\vl}_2$. 
	Then, for any $0<\eps\leq1/2$, we can find a feasible $\vx$ with high probability such
	that
	\[
		\vc^{\top} \vx \leq \min_{ \ma \vx=\vb,\, \vl \leq \vx \leq \vu} \vc^{\top} \vx+\eps\cdot L R,
	\]
	in time
	\[
	\widetilde{O}\left((m+m^{1/2 + 2\alpha})\cdot \log(R/(r \eps)) + T(n) \right).
	\]
\end{theorem}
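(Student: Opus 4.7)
The plan is to combine a robust interior point method (IPM) with a dynamic data structure built on generalized nested dissection, following the template established by recent LP solvers such as \cite{CohenLS21}. A robust IPM reduces the LP to $\widetilde{O}(\sqrt{m}\log(R/(r\eps)))$ iterations, each of which only requires an approximate matrix-vector product against the projection operator $\mproj_\vw \defeq \mw^{1/2}\ma^\top(\ma\mw\ma^\top)^{-1}\ma\mw^{1/2}$ for a slowly-changing diagonal weight matrix $\mw$. The target is therefore to process each iteration in amortized $\widetilde{O}(\sqrt{m}+m^{2\alpha})$ time, so that the total cost matches the claimed bound.

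First I would construct a recursive separator tree $\ct$ of the dual graph $G_\ma$ by repeatedly applying $O(n^\alpha)$-balanced vertex separators until leaves have constant size; this costs $\widetilde{O}(m+T(n))$. At each internal node I eliminate the interior vertices to produce a Schur complement on its boundary separator. The standard nested dissection accounting gives that the sum of squared boundary sizes across all levels is $\widetilde{O}(m^{2\alpha})$, so the initial factorization of $\ma\mw\ma^\top$ along $\ct$, together with enough data to solve linear systems against this matrix, fits in $\widetilde{O}(m+m^{2\alpha})$ time.

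Next I would maintain this factorization dynamically as the IPM mutates the weights. If a subset $K\subseteq[m]$ of weights changes between two iterations, only the root-to-leaf paths in $\ct$ passing through the hyperedges in $K$ have their Schur complements invalidated. Recomputing the Schur complement at an affected node via a low-rank update costs time proportional to its squared boundary size, so the work per affected path is $\widetilde{O}(m^{2\alpha})$. The robustness guarantee of the IPM ensures that the total number of significant coordinate changes summed over all $\widetilde{O}(\sqrt{m})$ iterations is only $\widetilde{O}(\sqrt{m})$, so the total maintenance cost telescopes to $\widetilde{O}(m^{1/2+2\alpha})$, matching the target.

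The main obstacle will be producing the approximate projection $\mproj_\vw\vv$ within this per-iteration budget, because even after the factorization is up to date, a direct solve followed by multiplication by $\ma^\top$ still costs $\Omega(m)$. I would attach to each node of $\ct$ a Johnson-Lindenstrauss-style sketch of the partial projection restricted to its subtree, maintained bottom-up along the same affected root-to-leaf paths so that sketch maintenance shares the amortized budget. The IPM only requires $\mproj_\vw\vv$ up to error tolerated by its slack, so the sketch dimension can be taken to be $\widetilde{O}(1)$. The delicate points are (i) designing the sketch to interact cleanly with recursive Schur complements so that a weight change at one coordinate can be expressed as a low-rank modification at each affected node, and (ii) confirming that the accumulated sketching error stays within the IPM's noise budget so the iteration count is preserved; these are where the bulk of the technical work will lie.
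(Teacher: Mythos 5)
Your overall architecture matches the paper's: a robust IPM with $\widetilde{O}(\sqrt{m}\log(R/(r\eps)))$ steps, a separator tree built from the $O(n^\alpha)$-separators, recursive Schur complements maintained under low-rank updates along the root-to-leaf paths of changed coordinates, and Johnson--Lindenstrauss sketches attached to the tree to avoid touching all $m$ coordinates per step. However, there is a genuine gap in your amortized accounting. The robust IPM does \emph{not} guarantee that the total number of significant coordinate changes over all $\widetilde{O}(\sqrt{m})$ iterations is $\widetilde{O}(\sqrt{m})$; the actual guarantee (and what the paper proves via its dyadic-interval argument) is that roughly $2^{2\ell}$ coordinates change every $2^\ell$ steps, which sums to $\widetilde{O}(m)$ total changes over the run. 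With your accounting of $\widetilde{O}(m^{2\alpha})$ work per changed coordinate, this gives $\widetilde{O}(m^{1+2\alpha})$, not $\widetilde{O}(m^{1/2+2\alpha})$, so the claimed bound does not follow as stated. The paper closes this gap by showing that a \emph{batch} of $K$ simultaneous weight changes costs far less than $K$ times the single-change cost: a rank-$K$ update at a node with skeleton size $s$ costs $O(s^2 K^{\omega-2})$, overlapping ancestor paths are charged once, and the resulting update/query complexities $U(K),Q(K)$ contain sublinear-in-$K$ terms such as $n^{2\alpha}K^{1-2\alpha}$ and $n^{2\alpha}\min\{K,n^\alpha\}^{\omega-2}$. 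Plugging these into the dyadic schedule $\sum_{\ell}\frac{U(2^{2\ell})+Q(2^{2\ell})}{2^\ell}$ and multiplying by $\sqrt{m}$ is what yields $m+m^{1/2+2\alpha}$; without this batched, sublinear-in-$K$ analysis your telescoping step fails.

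A secondary point: your item (ii), worrying that sketching error must fit inside the IPM's noise budget, is resolved differently in the paper. The sketches are never used to approximate $\mproj_{\vw}\vv$ itself; the projection is applied exactly through the explicitly maintained factorization (implicitly represented via tree operators), and the JL sketches serve only to \emph{detect} coordinates of the exact, implicitly maintained $\vx,\vs$ that have drifted from $\ox,\os$, with rejection sampling making the detection correct with high probability. Thus no sketching error enters the central-path analysis, and one avoids having to certify that an approximate projection preserves feasibility (i.e., lies in $\mathrm{Range}(\ma^\top)$ or the kernel of $\ma\mw^{1/2}$), which your plan would still need to address.
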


Our result should be compared against the natural $\O(m^{1/2}(m+m^{\alpha \omega}))$ runtime, which directly follows from the fact that IPM-based methods require $\O(\sqrt{m})$ iterations, each of which can be implemented in $O(m + m^{\alpha \omega})$ time using the nested dissection~\cite{lipton1979generalized,alon2010solving} algorithm. 
For linear programs whose dual graph are $O(n^{\alpha})$-separable with $\alpha \leq 1/4$, our algorithm achieves $\O(m \log (1/\eps))$ time, which is optimal up to poly-logarithmic factors.

We would like to emphasize that $\O(m+m^{1/2 + 2\alpha})$ represents a natural barrier for the (robust) IPM-based approaches. At a high level, each iteration of IPM involves performing matrix operations using the inverse of an $O(m^{\alpha}) \times O(m^{\alpha})$ matrix\footnote{For $O(n^\alpha)$-separable graph, where $\alpha<1$, it is known that $m=O(n)$, see e.g., \cite{lipton1979generalized}.}. Even if one is given access to said inverse, multiplying a vector against it takes at least $\Omega(m^{2\alpha})$, showing that improving upon the $m^{2\alpha}$ factor will require significantly new ideas in the design and analysis of robust Interior Point Methods. Obtaining an LP solver whose time complexity is $\O(m+m^{\alpha \omega})$, which would in turn nearly match the time complexity for solving linear systems with recursively separable structure, remains an outstanding open problem~\cite{GohbergKK86}.

An immediate application of \cref{thm:main} is a faster
algorithm for solving the (fractional) $k$-commodity flow problem on
planar graphs to high accuracy. For general sparse graphs, an $\O((km)^{\omega})$ time algorithm for this problem follows by the recent linear program solvers that run in matrix multiplication time~\cite{CohenLS21,van2020deterministic}. It is known that solving the $k$-commodity flow problem is as hard as linear programming~\cite{Itai78,DingK022}, suggesting that additional structural assumptions on the input graph are necessary to obtain faster algorithms. As shown in
the theorem below, our result achieves a polynomial speed-up when
the input graph is planar.
\begin{theorem} \label{thm:k-multicommodity-flow}
	Given a minimum-cost $k$-multicommodity flow problem on a planar graph $G = (V,E)$ on $n$ vertices and $m$ edges, with edge-vertex incidence matrix $\mb$, integer edge capacities $\vu \in \R_{\geq 0}^E$, integer costs $\vc_1, \dots, \vc_k \in \R^E$ and integer demands $\vd_1, \dots, \vd_k \in \R^E$ for each commodity, we can solve the LP
	\begin{equation}\label{eq:k-commodity-LP}
	\begin{split}
		\min \quad \sum_{i=1}^k \; & \vc_i^\top \vf_i \\
		\text{s.t} \quad \mb^\top \vf_i &= \vd_i \qquad \forall i \in [k] \\
		\sum_{i=1}^k \vf_i &\leq \vu \\
		\vf_i &\geq \vzero \qquad\; \forall i \in [k]
	\end{split}
	\end{equation}
	to $\epsilon$ accuracy in $\O(k^{2.5} m^{1.5} \log (M/\eps))$ time, where $M$ is an upper on the absolute values of $\vu, \vc, \vd$.
\end{theorem}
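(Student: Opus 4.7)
The plan is to express the LP as a single instance in the standard form of Theorem~\ref{thm:main} and then analyze the resulting dual graph's separability.

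First, I would introduce slack variables $\vs \in \R_{\geq 0}^E$ to convert $\sum_{i=1}^k \vf_i \leq \vu$ into $\sum_{i=1}^k \vf_i + \vs = \vu$. The resulting LP in variables $(\vf_1, \ldots, \vf_k, \vs)$ has $\widehat{m} = (k+1)m$ total variables and, after dropping one redundant flow-conservation row per commodity, $\widehat{n} = k(n-1) + m$ full-rank equality constraints. Its constraint matrix $\ma$ has a block structure with $k$ diagonal blocks of $\mb^\top$ for flow conservation together with a bottom row of identity blocks enforcing capacity.

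Next, I would show that the dual graph $G_\ma$ admits balanced separators of size $O(k\sqrt{n})$. The vertices of $G_\ma$ are rows of $\ma$, indexed by a (vertex, commodity) pair $(v,i)$ for flow conservation or by an edge $e$ for capacity; each flow variable $f_i(e)$ with $e=(u,v)$ yields the hyperedge $\{(u,i), (v,i), e\}$, and each slack $s_e$ gives the singleton $\{e\}$. Given any weighting of $G_\ma$, I would push weights down to the vertex set $V$ of the planar graph $G$ and apply the Lipton--Tarjan planar separator theorem to obtain $S \subseteq V$ of size $O(\sqrt{n})$ whose removal yields balanced parts $V_A, V_B$. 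Taking $S_{\mathrm{dual}}$ to be all flow-conservation rows for $v \in S$ (across all $k$ commodities) plus all capacity rows for edges incident to $S$ gives $|S_{\mathrm{dual}}| = O(k\sqrt{n})$. Since $G$ has no edge between $V_A$ and $V_B$, the induced partition of $G_\ma \setminus S_{\mathrm{dual}}$ has no crossing hyperedges, and pushing weights appropriately preserves balance.

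Third, I would instantiate Theorem~\ref{thm:main} with $\widehat{m} = O(km)$ and top-level separator size $O(k\sqrt{n}) = O(\sqrt{k} \cdot \widehat{m}^{1/2})$ (using the planar identity $m = O(n)$). Treating this as $O(\widehat{n}^{1/2})$-separability with an extra $\sqrt{k}$ constant that appears squared in the dominant $\widehat{m}^{1/2 + 2\alpha}$ term gives runtime $\widetilde{O}(\widehat{m}^{1/2} \cdot k \cdot \widehat{m}) = \widetilde{O}(k \cdot (km)^{3/2}) = \widetilde{O}(k^{5/2} m^{3/2})$. Standard integer-LP bounds yield $L, R, 1/r = \poly(km, M)$, giving the stated $\log(M/\eps)$ factor, and the planar separator computation runs in $T(n) = O(n)$ time, which is dominated.

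The main obstacle is tracking how the $\sqrt{k}$ factor in the separator size propagates through the proof of Theorem~\ref{thm:main}: either a variant that records the actual separator size is invoked, or one verifies that the hidden constant in Definition~\ref{defn:separable-graph} enters the runtime quadratically, producing exactly the extra factor of $k$ in $k^{5/2}$. A minor auxiliary step is guaranteeing strict feasibility for IPM, handled via a standard infeasibility-slack augmentation.
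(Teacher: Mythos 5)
Your overall route is the same as the paper's: add slacks, form the same hypergraph $G_{\ma}$ (hyperedges $\{(u,i),(v,i),u_e\}$ plus singletons), get separators of size $O(k\sqrt{n})$ from planar separators of $G$, and observe that the separator size enters the runtime quadratically against $\sqrt{km}$ IPM steps, giving $k^{2.5}m^{1.5}$ — that quantitative intuition matches the paper's calculation. But two steps have real gaps. First, your separator $S_{\mathrm{dual}}$ includes every capacity row $u_e$ for edges \emph{incident} to $S$; a planar vertex can have degree $\Theta(n)$, so this set can have size $\Omega(n)$, not $O(k\sqrt{n})$. The fix is that these rows are not needed at all: once the rows $(v,i)$, $v\in S$, are removed, each hyperedge $\{(u,i),(v,i),u_e\}$ has its surviving vertices on one side, so $u_e$ can simply be assigned to that side (the paper only ever places $u_e$ in the separator when \emph{both} endpoints of $e$ lie in $S$, which is $O(|S|)$ by planarity). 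Second, ``pushing weights down to $V$ preserves balance'' is not automatic: weight on an edge-vertex $u_e$ may be pushed onto an endpoint that ends up in $S$, where it is invisible to the planar separator's balance guarantee, yet $u_e$ must still be assigned to one side; a weighting concentrated on edge-vertices around a high-degree vertex defeats the naive push-down. This is repairable (edge-weighted planar separators, or balance parameter below $1/2$), but note the paper sidesteps proving \cref{defn:separable-graph}-style separability of $G_{\ma}$ for arbitrary weights entirely: it builds a $(\tfrac12,b,kn^{1/2})$-separator tree for $G_{\ma}$ by mirroring the separator tree of $G$, then invokes \cref{lem:a-b-lambda-tree-complexity} and \cref{thm:ripm-main}, where the dependence on $\lambda=kn^{1/2}$ is explicit — this is also the clean resolution of the ``hidden $\sqrt{k}$ constant'' issue you flag with \cref{thm:main}.

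The other gap is feasibility. The claim that ``standard integer-LP bounds yield $1/r=\poly(km,M)$'' is false for the LP as given: if meeting a demand forces an edge to be saturated (or a flow coordinate to zero), the polytope has empty interior and $r=0$. The augmentation you mention in passing is therefore essential, not a footnote, and it must preserve the separable structure and the cost/accuracy bookkeeping. The paper handles it by reducing to a min-cost circulation instance (super-vertices $s_i,t_i$ per commodity and a high-capacity, negative-cost arc $t_i\to s_i$), which adds only $2k$ vertices that are absorbed into every node of the separator tree, and then proves $r\ge 1/(10km)$ and $L,R\le O(kmM)$, yielding the $\log(M/\eps)$ factor. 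A big-$M$ artificial-slack augmentation can also work (the artificial columns are singleton hyperedges, so separability is untouched), but you would still need to verify the enlarged $L$ and the required target accuracy only cost $\log(M/\eps)$, which your sketch does not do.
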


Our main result also has the important advantage of recovering and simplifying the recent work by Dong, Lee and Ye~\cite{treeLP} and Gu and Song~\cite{GS22} who obtain fast solvers for LPs whose constraint matrix has bounded treewidth. 

\begin{theorem} \label{thm:treewidth}
	Suppose we have a linear program with the same setup as \cref{thm:main}, and we are given a tree-decomposition of the dual graph $G_{\ma}$\footnote{We can view the hypergraph $G_{\ma}$ as a graph, where we interpret each hyper-edge as a clique, and consider its treewidth as usual.} of width $\tau$. Then we can solve the linear program in  time 
	\[
	\O(m \tau^2 \log (R/(\eps r))) \text{ or } \O(m \tau^{(\omega+1)/2} \log (R/(\eps r))).
	\]

\end{theorem}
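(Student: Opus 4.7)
The plan is to specialize the separable LP framework of \ref{thm:main} to the bounded-treewidth setting. A tree decomposition of $G_{\ma}$ of width $\tau$ directly yields a recursive separator structure: at every level, one removes a bag of at most $\tau+1$ vertices to disconnect the remaining graph into balanced components. Thus every separator in the recursion has size $O(\tau)$, uniformly in the size of the subproblem (in contrast to the $n^{\alpha}$-scaling in the general case). Since the decomposition is part of the input, we avoid the $T(n)$ separator-computation cost.

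I would run the same robust IPM and Schur-complement data structure as in \ref{thm:main}, with the separator tree supplied by the tree decomposition. At each tree node, the data structure stores a dense matrix on the local separator, which has dimension $O(\tau) \times O(\tau)$; the total storage over the tree is $\widetilde{O}(m \tau)$. Each IPM iteration requires (approximately) solving a linear system in a reweighted $\ma^{\top} \md \ma$, implemented by propagating the current coordinate update up the separator tree.

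For the runtime, the key input from the robust IPM analysis is that the total number of coordinate updates across all $\widetilde{O}(\sqrt{m})$ iterations is $\widetilde{O}(m)$. Each update traverses an $O(\log n)$-depth path in the separator tree, incurring $O(\tau^2)$ work per node for the naive matrix--vector operations, for a total of $\widetilde{O}(m \tau^2 \log(R/(\eps r)))$. To obtain the improved $\widetilde{O}(m \tau^{(\omega+1)/2})$ bound, I would batch the updates reaching a single node into a rectangular matrix product on the $\tau \times \tau$ block and invoke fast (rectangular) matrix multiplication, in direct analogy to how the Schur-complement maintenance is accelerated in the proof of \ref{thm:main}.

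The main obstacle is checking that the Schur-complement data structure carries over cleanly to the tree-decomposition setting without relying on the geometric decrease in separator size that \ref{thm:main} uses. Fortunately, because separator sizes are uniformly $O(\tau)$ (not shrinking by a constant factor per level), the recurrence for the total work collapses to a sum of $\widetilde{O}(m)$ terms each of size $\widetilde{O}(\tau^2)$, rather than a more intricate geometric sum. This recovers the bounds of Dong--Lee--Ye and Gu--Song as an immediate corollary, inheriting the simpler data-structure machinery from the proof of \ref{thm:main}.
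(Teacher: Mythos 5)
Your argument for the first bound $\O(m\tau^2\log(R/(\eps r)))$ follows essentially the paper's route: build a separator tree from the tree decomposition with $O(\tau)$-size separators at every node (the paper records the mild but necessary refinement that the \emph{skeleton} $\elim{H}\cup\bdry{H}$ of a node accumulates the separators of its ancestors and so has size $O(\tau\log n)$, and that children are $A\cup S$, $B\cup S$ rather than the components of $G_{\ma}\setminus S$), then plug the resulting complexities into the general framework (\cref{lem:a-b-lambda-tree-complexity} and \cref{thm:ripm-main}); the accounting "$\O(m)$ total coordinate changes, $\O(\tau^2)$ per touched node along an $O(\log n)$-depth path" matches the paper's $\sum_\ell \frac{N}{2^\ell}\,\tau^2 2^{2\ell}$ computation.

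However, your derivation of the improved bound $\O(m\tau^{(\omega+1)/2}\log(R/(\eps r)))$ has a genuine gap. Batching the updates that reach a single node into a rectangular product is already what the Schur-complement maintenance does (this is exactly the $|\skel{H}|^2 K_H^{\omega-2}$ term in \cref{lem:dynamicSC}), and it cannot beat $\tau^2$ per touched node in the regime that actually dominates: when the $K$ changed coordinates are scattered across $\Theta(K)$ distinct leaves, each affected node receives a rank-$O(1)$ update to its dense $O(\tau\log n)\times O(\tau\log n)$ inverse/Schur complement, there is nothing to batch, and the cost is $\widetilde\Theta(\tau^2 K)$ — which is precisely why the straightforward analysis stalls at $m\tau^2$. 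The paper's improvement comes from a different mechanism entirely: a data-structure \emph{restart} trick exploiting that \textsc{MaintainApprox} (\cref{thm:soln-approx}) admits $2^{2\ell}$ coordinate changes every $2^\ell$ steps, i.e.\ the change volume grows superlinearly in the number of steps since the last full refresh. Re-initializing every $M$ steps costs $\O(\tau^{\omega-1}m)$ per restart (a single large batched computation, where fast matrix multiplication does help), while the movement cost within a window is $\O(\tau^2 M^2)$; over $N=\O(\sqrt m)$ steps the total is $\frac{N}{M}(\tau^{\omega-1}m+\tau^2 M^2)$, optimized at $M=\sqrt m\,\tau^{(\omega-3)/2}$ to give $\O(m\tau^{(\omega+1)/2})$. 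Without this (or an equivalent) amortization-versus-reinitialization tradeoff, your proposal only establishes the $\O(m\tau^2)$ bound.
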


\subsection{Previous work}

It is known that $2$-commodity flow is as hard as linear programming~\cite{Itai78}. Recently, \cite{DingK022} showed a linear-time reduction from linear programs to \emph{sparse} $k$-commodity flow instance, indicating that sparse $k$-commodity flow instances are hard to solve. 
This has led to renewed interest in solving $k$-commodity flow in restricted settings, with the authors of~\cite{van2023faster} making progress on dense graphs. 

\paragraph{Linear programming solvers.} The quest for understanding the computational complexity of linear programming has a long and rich history in computer science and mathematics. Since the seminal works of Khachiyan~\cite{khachiyan1980polynomial} and later Karmarkar~\cite{karmarkar1984new}, who were the first to prove that LPs can be solved in polynomial time, the interior point method and its subsequent variants have become the central methods for efficiently solving linear programs with provable guarantees. This has led to a series of refined and more efficient IPM-based solvers~\cite{renegar1988polynomial, vaidya1996new,nesterov1991acceleration, abs-1910-08033, lee2019solving, CohenLS21, Jiang0WZ21}, which culminated in the recent breakthrough work of Cohen, Lee, and Song~\cite{CohenLS21} who showed that an LP solver whose running time essentially matches the matrix multiplication cost, up to small low-order terms. In a follow-up work, Brand~\cite{van2020deterministic} managed to derandomize their algorithm while retaining the same time complexity.

A problem closely related to this paper is solving LPs when the support of the constraint matrix has bounded treewidth $\tau$. Dong, Lee and Ye~\cite{treeLP} showed that such structured LPs can be solved in $\O(m \tau^2)$, which is near-linear when $\tau$ is poly-logarithmic in the parameters of the input. 

\paragraph{High-accuracy and approximate multi-commodity flow.} As mentioned above, it is known that we can solve multicommodity flow in the high-accuracy regime using linear programming. For a graph with $n$ nodes, $m$ edges, and $k$ commodities, the underlying constraint matrix has $km$ variables and $kn+m$ equality constraints. Thus, using the best-known algorithms for solving linear programs~\cite{CohenLS21,van2020deterministic}, one can achieve a runtime time complexity of $\O((km)^{\omega})$ for solving multi-commodity flow. In the special case of dense graphs, Brand and Zhang~\cite{van2023faster} recently showed an improved algorithm achieving $\O(k^{2.5} \sqrt{m}n^{\omega-1/2})$ runtime.

In the approximate regime, Leighton et al.~\cite{leighton1991fast} show that $(1+\epsilon)$ multi-commodity flow on \emph{undirected graphs} can be solved in $\O(kmn)$, albeit with a rather poor dependency on $\epsilon$. This result led to several follow-up improvements in the low-accuracy regime~\cite{garg2007faster, fleischer2000approximating, madry2010faster}. Later on, breakthrough works in approximating single commodity max flow in nearly-linear time were also extended to the $k$-commodity flow problem on undirected graphs~\cite{kelner2014almost,sherman2013nearly,peng2016approximate}, culminating in the work of Sherman~\cite{sherman2017area} who achieved an $\O(mk \epsilon^{-1})$ time algorithm for the problem.

\paragraph{Multi-commodity flow on planar graphs.}
The multi-commodity flow problem on planar graphs was studied in the 1980s, but there has not been much interest in it until most recently. Results in the past focused on finding conditions under which solutions existed~\cite{OkamuraS81}, or finding simple algorithms in even more restricted settings, with the authors of~\cite{MatsumotoNS85} demonstrating that the problem could be solved in $O(kn + n^2(\log{n})^{1/2})$ time if the sources and sinks were all on the outer face of the graph. More recently, \cite{KawarabayashiK13} studied the all-or-nothing version of planar multi-commodity flow, where flows have to be integral, and demonstrate that an $O(1)$-approximation could be achieved in polynomial time. 

\paragraph{Max flow and min-cost flow on general graphs.} In what follows, we will focus on surveying only \emph{exact} algorithms for max-flow and min-cost flow on general graphs. For earlier developments on these problems, including fast approximation algorithms, we refer the reader to the following works~\cite{king1994faster,ahuja1988network,christiano2011electrical,sherman2013nearly,kelner2014almost,peng2016approximate,sidford2018coordinate,bernstein2021deterministic}, and the references therein.

An important view, unifying almost all recent max-flow or min-cost flow developments, is interpreting max-flow as the problem of finding one unit of $s$-$t$ flow that minimizes the $\ell_\infty$ congestion of the flow vector. Motivated by the near-linear Laplacian solver of Spielman and Teng~\cite{spielman2004nearly} (which in turn can be used to solve the problem of finding one unit of $s$-$t$ flow that minimizes the $\ell_2$ congestion), and the fact that the gap between $\ell_\infty$ and $\ell_2$ is roughly $O(\sqrt{m})$, Daitch and Spielman~\cite{daitch2008faster} showed how to implement the IPM for solving min-cost flows in $\O(m^{3/2})$ time. 

Follow-up works initially made progress on the case of unit capacitated graphs, with the work of Madry~\cite{madry2013navigating} achieving an $\O(m^{10/7})$ time algorithm for max flow and thus being the first to break the 3/2-exponent barrier in the runtime. The running time was later improved to $O(m^{4/3+o(1)})$ and it was generalized to the min-cost flow problem~\cite{AMV22,KathuriaLS20}.

For general, polynomially bounded capacities, Brand et al.~\cite{BrandLLSSSW21} gave an improved algorithm for dense graphs that runs in $\O(m+n^{3/2})$. In the sparse graph regime, Gao, Liu and Peng~\cite{GaoLP21:arxiv} were the first to break the 3/2-exponent barrier by giving an $\O(m^{3/2 -1/128})$ time algorithm, which was later improved to $\O(m^{3/2 -1/58})$~\cite{BGJLLPS21}. Very recently, the breakthrough work of Chen et al.~\cite{ChenKLPGS22} shows that the min-cost flow problem can be solved in $\O(m^{1+o(1)})$, which is optimal up to the subpolynomial term.

\paragraph{Max flow and min-cost flow on planar graphs.} The study of flows on planar graphs dates back to the celebrated work of Ford and Fulkerson~\cite{ford1956maximal} who showed that for the case of $s,t$-planar graphs\footnote{planar graphs where $s$ and $t$ lie on the same face}, there is an $O(n^2)$ time algorithm for max flow. This was subsequently improved to $O(n \log n)$ by Itai and Shiloach~\cite{itai1979maximum} and finally to $O(n)$ by Henzinger et al~\cite{henzinger1997faster}, the latter building upon a prior work of Hassin~\cite{Hassin}.

For general planar graphs, there have been two lines of work focusing on the undirected and the directed version of the problem respectively. In the first setting, Reif~\cite{reif1983minimum} (and later Hassin and Johnson~\cite{hassin1985n}) gave an $O(n \log^2 n)$ time algorithm. The state-of-the-art algorithm is due to Italiano et al.~\cite{INSW11} and achieves $O(n \log \log n)$ runtime. Weihe~\cite{weihe1997maximum} gave the first speed-up for directed planar max flow running in $O(n \log n)$ time. However, his algorithm required some assumptions on the connectivity of the input graph. Later on, Borradaile and Klein~\cite{borradaile2008exploiting} gave an $O(n \log n)$ algorithm for general planar directed graphs. Generalization of planar graphs, e.g., graphs of bounded genus have also been studied in the context of the max flow problem. The work of Chambers et al.~\cite{CEFN19:arxiv} showed that these graphs also admit near-linear time max flow algorithms.

Imai and Iwano~\cite{II90} obtained an $O(n^{1.594} \log M)$ min-cost flow algorithm for graphs that are $O(\sqrt{n})$ recursively separable. For the min-cost flow problem on planar graphs with unit capacities, Karczmarz and Sankowski~\cite{KS19} gave an $O(n^{4/3})$ algorithm. Very recently, Dong et al.~\cite{dong2022nested} showed that the min-cost flow on planar directed graphs with polynomially bounded capacities admits an $\O(n)$ time algorithm, which is optimal up to polylogarithmic factors.

\subsection{Technical overview}
Our algorithm framework builds on the work of
Dong-Gao-Goranci-Lee-Peng-Sachdeva-Ye on planar min-cost flow~\cite{dong2022nested}.  
We solve our linear program using the
robust interior point method used in \cite{treeLP,
  dong2022nested}, where we maintain feasible primal and dual
solutions $\vx$ and $\vs$ to the linear program that converge to the
optimal solution over $\O(\sqrt{m})$-many steps of IPM.  At every
step, we want to move our solutions in the direction of steepest descent of the objective function. 
To stay close to the central path and avoid violating the capacity lower and upper bounds,
the IPM controls the weights $\mw$ on the variables and the step direction $\vv$, 
in order to limit the magnitude of the update to a variable as it approach its bounds.
Both $\mw$ and $\vv$ are defined to be entry-wise dependent on the current solution $\vx$ and $\vs$.
To maintain feasibility of the solutions, we apply the weighted projection
$\mproj_{\vw} \defeq \mw^{1/2} \ma^\top (\ma \mw \ma^\top)^{-1} \ma \mw^{1/2}$ matrix to the desired step direction $\vv$,
which ensures the resulting $\vx$ and $\vs$ after a step remain in their respective feasible subspaces.
In robust IPMs, we also maintain entry-wise
approximations $\ox, \os$ to $\vx$ and $\vs$, and use
these approximations to compute $\vw, \vv$, and $\mproj_{\vw}$ at
every step.  By limiting the updates to $\ox, \os$, robust IPMs
achieve efficient runtimes.

The key challenge in the RIPM framework is to implement each step
efficiently, specifically, computing the projection
$\mproj_{\vw} \vv$, as well as updating $\ox, \os$.  Similar to
\cite{dong2022nested}, we use a \emph{separator tree} to recursively
factor the term $\ma \mw \ma^\top$ in $\mproj_{\vw}$ via nested
dissection and recursive Schur complements.
However, there are several challenges in applying the framework from
\cite{dong2022nested} to general linear programs: In flow problems,
the constraint matrix $\ma$ is the vertex-edge incidence matrix of the
underlying graph, and therefore $\ma \mw \ma^\top$, as well as all
recursive Schur complements along the separator tree, are weighted
Laplacian matrices, for which we have efficient nearly-linear-time
solvers~\cite{spielman2004nearly} and sparse
approximations to Schur complements~\cite{KyngS16,GHP18}.  This allows \cite{dong2022nested}
to work with an approximate
$\widetilde{\mproj}_{\vw} \approx \mproj_{\vw}$ efficiently, with implicit access via a collection of approximate Schur complements that can be viewed as sparse Laplacians.

In the context of general separable linear programs, we do
not have fast solvers or sparse approximate Schur complements, so
instead, we must maintain the collection of Schur complements and their inverses
explicitly, by computing them in a bottom-up fashion using the separator
tree. 
To bound the update time, we show that a rank-$k$
update to $\ma \mw \ma^\top$ induced by changes in $\mw$ corresponds to
rank-$k$ updates to all the recursive Schur complements.

Our second contribution is the dynamic data structures to maintain the implicit representations of $\vx, \vs$,
which can be viewed as a significant refinement of those from \cite{dong2022nested}.
We recall the notion of \emph{tree operators} introduced in \cite{dong2022nested} 
and define an analogous \emph{inverse tree operator}, and give simplified modular data structures
to maintain $\vx, \vs$ using the tree and inverse tree operator.
Specifically, we demonstrate more cleanly the power of nested dissection and the recursive subgraph structure
in supporting efficient lazy updates to the IPM solutions.

Our third technical contribution is the definition of a fine-grained separator tree which we call the $(a,b,\lambda)$-separator tree. 
The parameters are defined based on the parameters of separable graphs, but they also capture important characteristics of other classes such as low-treewidth graphs.
These trees guarantee that at any node, we are able to separate not only the associated graph region, but also the boundary of the region. 
We use them to maintain the tree operators from the implicit representations,
and a careful analysis of node and boundary sizes allows us to conclude that the maintenance can be performed efficiently. 

Finally, we note that this work recovers the treewidth LP result of \cite{treeLP} and \cite{GS22} with significantly lower technical complexity.
Whereas \cite{treeLP} constructs an \emph{elimination tree} 
to directly compute the Cholesky factorization of $\ma \mw \ma^\top = \ml \ml^\top$,
we use a \emph{separator tree} to recursively factor $\ma \mw \ma^\top$.
There is a key difference in the two tree constructions, which we believe this paper is correct in:
To construct an elimination tree, \cite{treeLP} finds a balanced vertex separator $S$ of $G_{\ma}$, \emph{remove $S$ from $G_{\ma}$ yielding two disconnected subgraphs $H_1, H_2$}, recursively construct the elimination tree for $H_1$ and $H_2$, and attach them as children to a vertical path of length $|S|$ corresponding to the vertices of $S$.
When the treewidth of $\ma \in \R^{n \times m}$ is $t$, this process results in an elimination tree of height $\O(t)$ where each node corresponds to a vertex of $G_{\ma}$, which can then be used to identify explicit coordinates in the Cholesky factor $\ml$ to update when $\mw$ changes.
Next, an extremely involved transformation using \emph{heavy-light decomposition} is needed to turn the elimination tree into a \emph{sampling tree} of height $O(\log n)$,
in order to facilitate the sampling of entries from some implicit vector of the form $\ma^\top \ml^{-\top} \vz$ (required for maintaining $\ox, \os$).
%
In contrast, to construct our separator tree, we find a balanced
vertex separator $S$ of $G_{\ma}$, but \emph{include $S$ in both
  subgraphs that are recursed on}, and partition the hyperedges in $\hE(S)$ arbitrarily between the two subgraphs. 
The resulting separator tree has height $O(\log n)$, where each node corresponds to a subgraph of $G_{\ma}$, and each level of the tree gives a partition of the columns of $\ma$. 
This recursive partitioning gives a cleaner, recursive rather than brute-force factorization of $\ma \mw \ma^\top$,
and leads to a significant difference in the data structures.
When $\mw$ changes at a step, \cite{treeLP} updates the Cholesky factorization by processing one changed coordinate at a time (\cite{GS22} processes one block at a time), so that the data structure update time is linear in the number of new coordinates. On the other hand, our separator tree allows us to update $\mw$ in one pass through the tree and yields a sublinear dependence on the number of new coordinates.
Moreover, as each node in our separator tree naturally corresponds to a subset of columns of $\ma$, 
we can use it in a much more straight-forward manner to sample coordinates of $\ma^\top \ml^{-\top} \vz$.


	\section{Preliminaries}

\paragraph{General Notations.}
\emph{We assume all matrices and vectors in an expression have matching dimensions.} That is, we will trivially pad matrices and vectors with zeros when necessary. This abuse of notation is unfortunately unavoidable as we will be considering lots of submatrices and subvectors.

An event holds with high probability if it holds with probability at least $1-n^c$ for arbitrarily large constant $c$. The choice of $c$ affects guarantees by constant factors. 

We use boldface lowercase variables to denote vectors, and boldface uppercase variables to denote matrices.
We use $\|\vv\|_2$ to denote the 2-norm of vector $\vv$ and $\|\vv\|_{\mm}$ to denote $\sqrt{\vv^\top \mm\vv}$.
We use $\nnz(\vv)$ to denote the number of non-zero entries in the vector $\vv$, equivalently, it is the zero-norm.
For any vector $\vv$ and scalar $x$, we define $\vv+ x$ to be the vector obtained by adding $x$ to each coordinate of $\vv$
and similarly $\vv-x$ to be the vector obtained by subtracting $x$ from each coordinate of $\vv$. 
We use $\vzero$ for all-zero vectors and matrices where dimensions are determined by context. 

For an index set $A$, we use $\vone_{A}$ for the vector with value $1$ on coordinates in $A$ and $0$ everywhere else. 
We use $\mi$ for the identity matrix and $\mi_{S}$ for the identity matrix in $\mathbb{R}^{S \times S}$. 
For any vector $\vx \in \mathbb{R}^{S}$,
$\vx|_{C}$ denotes the sub-vector of $\vx$ supported on $C\subseteq S$; 
\emph{more specifically, $\vx|_C \in \R^S$, where $\vx_i = 0$ for all $i \notin C$.}

For any matrix $\mm \in \mathbb{R}^{A\times B}$, 
we use the convention that $\mm_{C, D}$ denotes the sub-matrix of $\mm$ supported on $C\times D$ where $C\subseteq A$ and $D\subseteq B$. 
When $\mm$ is not symmetric and only one subscript is specified, as in $\mm_D$, this denotes the sub-matrix of $\mm$ supported on $A \times D$.
To keep notations simple, $\mm^{-1}$ will denote the inverse of $\mm$ if it is an invertible matrix and the Moore-Penrose pseudo-inverse otherwise.

For any vector $\vv$, we use the corresponding capitalized letter $\mv$ to denote the diagonal matrix with $\vv$ on the diagonal.

For two positive semi-definite matrices $\ml_1$ and $\ml_2$, we write $\ml_1 \approx_t \ml_2$ if $e^{-t} \ml_1\preceq \ml_2 \preceq e^{t} \ml_1$, where $\ma\preceq \mb$ means $\mb-\ma$ is positive semi-definite. 
Similarly we define $\geq_t$ and $\leq_t$ for scalars, that is, $x\leq_t y$ if $e^{-t}x\le y\le e^t x$. 

When multiplying two matrices of differing sizes, say an $m \times n$ matrix with an $n \times k$ matrix, we decompose both matrices into blocks of size $\min \{m,n,k\}$. We then perform block matrix multiplication, with fast matrix multiplication used for the multiplication of two blocks. For example, multiplying a $m \times n$ matrix with an $n \times n$ matrix, with $m \geq n$, takes $(m/n)(n^{\omega})$ time. 

\paragraph{Trees.} 
For a tree $\ct$, we write $H \in \ct$ to mean $H$ is a node in $\ct$.
We write $\ct_H$ to mean the complete subtree of $\ct$ rooted at $H$.
We say a node $A$ is an ancestor of $H$ and $H$ is a descendant of $A$ if $H$ is in the subtree rooted at $A$, and $H \neq A$.
Given a set of nodes $\collN$, we use $\pathT{\collN}$ to denote the set of all nodes in $\ct$ that are ancestors of some node in $\collN$ unioned with $\collN$.

The \emph{level} of a node in a tree has the following properties: the root is at level 0; the maximum level is one less than the height of the tree; and the level of a node must be at least one greater than the level of its parent, but this difference does not have to be equal to one. We may assign levels to nodes arbitrarily as long as the above is satisfied.
We use $\ct(i)$ to denote the collection of all nodes at level $i$ in tree $\ct$.

\paragraph{IPM data structures.} 

When we discuss data structures in the context of the IPM, step 0 means the initialization step. For $k > 0$, step $k$ means the $k$-th iteration of the while-loop in \textsc{Solve} (\cref{alg:IPM_centering}); that is, it is the $k$-th time we update the current solutions.
For any vector or matrix $\vx$ used in the IPM, we use $\vx^{(k)}$ to denote the value of $\vx$ at the end of the $k$-th step.

In all procedures in these data structures, we assume inputs are given by the set of changed coordinates and their values, 
\emph{compared to the previous input}. 
Similarly, we output a vector by the set of changed coordinates and their values, compared to the previous output. 
This can be implemented by checking memory for changes.


	\section{Overview of RIPM framework} \label{sec:ipm-framework}

In this section, we set up the general framework for solving a linear program using a 
robust IPM. 
We show that if the projection matrix from the IPM can be maintained efficiently based on the structure of its sparsity pattern,
then the overall IPM can be implemented efficiently.



\subsection{Robust interior point method}

\begin{restatable}[RIPM]{theorem}{RIPM}
	\label{thm:IPM}
	Consider the linear program 
	\[
	\min_{\ma \vx=\vb,\; \vl\leq\vx\leq\vu}\vc^{\top}\vx
	\]
	with $\ma\in\R^{n \times m}$. We are given a scalar $r>0$ such that \emph{there exists} some interior point $\vx_{\circ}$ satisfying 
	$\ma \vx_{\circ}=\vb$ and $\vl+r\leq \vx_{\circ} \leq \vu-r.$
	Let $L=\|\vc\|_{2}$ and $R=\|\vu-\vl\|_{2}$. 
	For any $0<\eps\leq1/2$,
	the algorithm $\textsc{RIPM}$ (\cref{alg:IPM_centering}) finds $\vx$ such that $\ma \vx=\vb$,
	$\vl\leq\vx\leq\vu$ and
	\[
	\vc^{\top}\vx\leq\min_{\ma \vx=\vb,\; \vl\leq\vx\leq\vu}\vc^{\top}\vx+\eps LR.
	\]
	Furthermore, the algorithm has the following properties:
	\begin{itemize}
		\item Each call of \textsc{Solve} involves $O(\sqrt{m}\log m\log(\frac{mR}{\epsilon r}))$-many steps, and $\ot$ is only updated $O(\log m\log(\frac{mR}{\epsilon r}))$-many
		times.
		\item In each step of \textsc{Solve}, the coordinate $i$ in $\vw,\vv$ changes only if $\ox_{i}$
		or $\os_{i}$ changes.
		\item In each step of \textsc{Solve}, $h\|\vv\|_{2}=O(\frac{1}{\log m})$.
		\item \cref{line:step_given_begin} to \cref{line:step_given_end} takes $O(K)$
		time in total, where $K$ is the total number of coordinate changes
		in $\ox,\os$.
	\end{itemize}
\end{restatable}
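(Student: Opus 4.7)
The plan is to adapt the robust interior point method analysis of \cite{dong2022nested}, itself building on the weighted path-following framework of \cite{CohenLS21}, to this box-constrained setting. I would prove the four bulleted properties together, with the accuracy guarantee $\vc^\top \vx \le \mathrm{OPT} + \eps L R$ and the feasibility of $\vx$ following by standard central path convergence once the iteration count and potential-decrease lemma are established.

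First I would set up a weighted log-barrier central path $(\vx(t), \vs(t))$ for $t > 0$, with Lee-Sidford-style weights so that a short-step scheme converges in $\widetilde{O}(\sqrt{m})$ steps rather than $\widetilde{O}(m)$. The interior assumption $\vl + r \le \vx_\circ \le \vu - r$ gives a warm start and pins down the range of the path parameter: to drive $\vc^\top \vx$ within $\eps L R$ of optimality, $t$ must shrink from a constant down to $\Theta(\eps r / (m R))$, which requires $O(\log(m R / (\eps r)))$ multiplicative scales. Combined with the $\widetilde{O}(\sqrt{m})$ iterations per scale, this yields the stated step count. The update count on $\ot$ follows because $\ot$ only needs to track $t$ up to a constant factor, so it is refreshed at most once per geometric scale, with an extra $\log m$ factor absorbed by the robust analysis.

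Next I would argue per-step progress via a soft-max potential $\Phi(\ox, \os, \ot) = \sum_i \cosh(\lambda \gamma_i)$ measuring entry-wise deviation of the current approximate pair from the central path, in the style of \cite{CohenLS21}. The core lemma is that one projected Newton step of the form $\ox \leftarrow \ox + h \mw^{1/2}(\mi - \mproj_\vw)\vv$ together with its dual analogue decreases $\Phi$ by a constant fraction, provided the approximation thresholds on $\ox, \os$ are small enough that $\mw, \vv$ computed from the approximate quantities differ from their exact counterparts only in low-order terms. Choosing $h = \Theta(1/(\sqrt{m}\log m))$ and invoking the bound $\|\vv\|_2 = O(\sqrt{m})$ enforced by the short-step regime yields the claimed step-length property $h\|\vv\|_2 = O(1/\log m)$. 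The entry-wise definitions $\vw_i = \vw_i(\ox_i, \os_i, \ot)$ and $\vv_i = \vv_i(\ox_i, \os_i, \ot)$ baked into \cref{alg:IPM_centering} immediately give the coordinate-change property and the $O(K)$ per-step bookkeeping bound between \cref{line:step_given_begin} and \cref{line:step_given_end}.

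The main obstacle I expect is the robust analysis itself: one must verify that replacing $(\vx, \vs, t)$ by the lazily maintained $(\ox, \os, \ot)$ inside $\mw$ and $\vv$ still preserves the potential-decrease lemma within the basin of convergence. This requires carefully coordinating the approximation threshold $\epslevel$ for $\ox, \os$, the freshness tolerance of $\ot$, the step size $h$, and the scaling parameter $\lambda$ in $\Phi$, so that the combined perturbation to the Newton step is dominated by the progress term. These tradeoffs are essentially identical to those already verified in \cite{dong2022nested} and \cite{treeLP}, so the bulk of the proof amounts to checking that their parameter choices remain valid in our slightly more general box-LP setup rather than reworking the full potential calculus from scratch.
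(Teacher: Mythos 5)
Your overall strategy---defer the convergence and step-count guarantees to the robust IPM analysis of \cite{CohenLS21,treeLPArxivV2,dong2022nested} and read the remaining bullets off the entrywise formulas---is exactly what the paper does: its proof consists of invoking Theorem A.1 of \cite{treeLPArxivV2} with unit weights $\vw_i=\nu_i=1$ and then checking the last three bullets directly against \cref{alg:IPM_centering}. However, two of your specific claims do not match the algorithm the theorem is actually about. First, \textsc{RIPM} uses the plain log barrier; no Lee--Sidford-style weights appear, and none are needed: the $O(\sqrt{m}\log m\log(\frac{mR}{\eps r}))$ step count is the standard $\sqrt{m}$-iteration bound for the $m$-self-concordant log barrier made robust, not a $\sqrt{m}$-versus-$m$ improvement bought by reweighting (the log barrier already gives $\sqrt{m}$; LS weights would target $\sqrt{n}$, which is not claimed). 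Since the theorem asserts properties of this specific algorithm, analyzing a differently weighted path would not establish the stated bullets.

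Second, your derivation of $h\|\vv\|_2=O(1/\log m)$ via a fixed step size $h=\Theta(1/(\sqrt{m}\log m))$ and a bound $\|\vv\|_2=O(\sqrt{m})$ ``enforced by the short-step regime'' does not describe \cref{alg:IPM_centering}: there the step size is adaptive, $h=-\alpha/\|\cosh(\lambda\gamma^{\ot}(\ox,\os))\|_2$ with $\vv_i=\sinh(\lambda\gamma^{\ot}(\ox,\os)_i)\cdot\mu^{\ot}(\ox,\os)_i$, and the bullet follows in one line, as in the paper, from $h\|\vv\|_{2}\leq\alpha\,\|\sinh(\lambda\gamma^{\ot}(\ox,\os))\|_{2}/\|\cosh(\lambda\gamma^{\ot}(\ox,\os))\|_{2}\leq\alpha=O(1/\log m)$; no a priori $\ell_2$ bound on $\vv$ is needed or available in the form you state. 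Your remaining points are fine and match the paper: the coordinate-change and $O(K)$-bookkeeping bullets follow because $\vw_i$ and $\vv_i$ are functions of $\ox_i,\os_i,\ot$ only, and your accounting of the geometric range of $t$ and the $O(\log m\log(\frac{mR}{\eps r}))$ refreshes of $\ot$ is consistent with the algorithm's update rule for $\ot$.
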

%

We note that this algorithm only requires access to
$(\ox,\os)$, but not $(\vx,\vs)$ during the main while loop.
Hence, $(\vx,\vs)$ can be implicitly maintained via any data structure.  
We only require $(\vx,\vs)$ explicitly when returning the 
approximately optimal solution at the end of the algorithm \cref{line:step_user_output}.

\subsection{Projection operators} \label{subsec:overview_projection_operator}

At step $k$ of \textsc{Solve} with step direction $\vv^{(k)}$ and weights $\vw$ (we drop its superscript $^{(k)}$ for convenience), 
recall we define the projection matrix
\[
 \mproj_{\vw} \defeq \mw^{1/2}\ma^\top(\ma\mw\ma^\top)^{-1}\ma\mw^{1/2}.
\]
We want to make the primal and dual updates
\begin{align*}
	\vx &\leftarrow\vx+ h^{(k)} \mw^{1/2} \vv^{(k)} - h^{(k)} \mw^{1/2} \mproj_{\vw} \vv^{(k)},\\
	\vs &\leftarrow\vs+\ot h^{(k)} \mw^{-1/2}\mproj_{\vw} \vv^{(k)}.
\end{align*}

The first term for the primal update is straightforward to maintain,
so we may ignore it without loss of generality.
After this reduction, we see that the primal and dual updates are analogous.
In the remainder of this section, we show how to maintain $\vx$ undergoing the update
\begin{equation*} 
	\vx \leftarrow \vx + h^{(k)} \mw^{1/2} \mproj_{\vw} \vv^{(k)}.
\end{equation*}

First, observe that $\mw^{1/2} \mproj_{\vw}$ is an operator dependent on the dynamic weights $\vw$,
which motivates us to formalize this problem setting:
\begin{definition}[Dynamic linear operator, update complexity]
	Let $\vw$ be a dynamic vector. 
	We say $\mm$ is a \emph{dynamic linear operator dependent on $\vw$} if $\mm$ is a function of $\vw$. 
	Let $\vw^{(k)}$ be the value of $\vw$ at step $k$,
	then we use $\mm^{(k)}$ to denote the corresponding value of $\mm$ at step $k$.
	
	Suppose exists a data structure that dynamically maintains $\mm$ and $\vw$, such that at every step $k$, if $\vw^{(k-1)}$ and $\vw^{(k)}$ differ on $K$ coordinates, then the data structure can update $\mm^{(k-1)}$ to $\mm^{(k)}$ in $f(K)$ time. Then we say \emph{$\mm$ has update complexity $f$}.
\end{definition}

Next, we define two types of dynamic operators dependent on the weights $\vw$ from the IPM: 
the \emph{inverse tree operator} $\itreeop$ and the \emph{tree operator} $\treeop$.
For linear programs with separable structures, they should crucially combine so that throughout algorithm, we have
\begin{equation} \label{eq:projection-as-tree-ops}
	\mw^{1/2} \mproj_{\vw}  = \treeop \itreeop.
\end{equation}

\subsubsection{Operators on a tree}

In this section, we fix $\ct$ to be a \emph{constant-degree} rooted tree with root node $G$, called the \emph{operator tree}.
Let each node $H \in \ct$ be associated with a set $F_{H}$, where
the $F_H$'s are pairwise disjoint.
Let each \emph{leaf} node $L \in \ct$ be further associated with a \emph{non-empty} set $E(L)$, 
where the $E(L)$'s are pairwise disjoint.
For a non-leaf node $H$, define $E(H) \defeq \bigcup_{\text{leaf } L \in \ct_H} E(L)$.
Finally, define $E \defeq E(G) = \bigcup_{\text{leaf } L \in \ct} E(L)$ and 
$V \defeq \bigcup_{H\in \ct} F_H$.

We define two special classes of linear operators that build on the structure of $\ct$.
The advantage of these operators lie in their decomposability, which allows them to be efficiently maintained.

\begin{definition}[Inverse tree operator] \label{defn:inverse-tree-operator}
Let $\ct$ be an operator tree with the associated sets as above.
We say a linear operator $\itreeop : \R^E \mapsto \R^V$ is an \emph{inverse tree operator supported on $\ct$} 
if there exists a linear \emph{edge operator} $\itreeop_{H}$ for each non-root node $H$ in $\ct$, corresponding to the edge from $H$ to its parent,
such that $\itreeop$ can be decomposed as
\[
\itreeop = \sum_{\text{leaf $L$, node $H$} \;:\; L \in \ct_H} 
\mi_{F_H} \itreeop_{H \leftarrow L},
\]
where $\itreeop_{H \leftarrow L}$ is defined as follows:
If $L = H$, then $\itreeop_{H \leftarrow L} \defeq \mi$;
otherwise, suppose the path in $\ct$ from leaf $L$ to node $H$ is given by $(H_t \defeq L, H_{t-1}, \dots, H_{1}, H_0 \defeq H)$, then
\[
\itreeop_{H \leftarrow L} \defeq \itreeop_{H_{1}} \cdots \itreeop_{H_{t-1}}  \itreeop_{H_{t}}.
\]
To maintain $\itreeop$, it will suffice to maintain $\itreeop_{H}$ at each non-root node $H$ in $\ct$.
\end{definition}

Intuitively, when applying an inverse tree operator to a vector $\vv \in \R^E$, $\vv$ is partitioned according to the leaves of $\ct$, and then the edge operators are applied sequentially along the tree edges in a bottom-up fashion.
It is natural to then also define the opposite process, where edge operators are applied along the tree edges in a top-down fashion.

\begin{definition}[Tree operator]
	\label{def:tree-operator} 
	Let $\ct$ be an operator tree with the associated sets as above.
	We say a linear operator $\itreeop : \R^V \mapsto \R^E$ is \emph{tree operator supported on $\ct$} 
	if there exists a linear edge operator $\itreeop_{H}$ for each non-root node $H$ in $\ct$, corresponding to the edge from $H$ to its parent,
	such that $\itreeop$ can be decomposed as
	\[
		\treeop \defeq \sum_{\text{leaf $L$, node $H$} \;:\; L \in \ct_H} \treeop_{L \leftarrow H}\mi_{F_{H}}.
	\]
	where $\treeop_{H \leftarrow L}$ is defined as follows:
	If $L = H$, then $\treeop_{L \leftarrow H} \defeq \mi$.
	Otherwise, suppose the path in $\ct$ from node $H$ to leaf $L$  is given by $(H_t \defeq L, H_{t-1}, \dots, H_0 \defeq H)$, then
	\[
	\treeop_{L \leftarrow H} \defeq \treeop_{H_{t}} \cdots \treeop_{H_{2}}  \treeop_{H_{1}}.
	\]
\end{definition}

We define the complexity of a tree (and inverse tree) operator to be parameterized by the number of edge operators applied.

\begin{definition}[Query complexity]
	\label{def:tree-operator-complexity}
	Let $\treeop \defeq \{ \treeop_H : H \in \ct\}$ be a tree (or inverse tree) operator on tree $\ct$.
	Suppose for any set $\collN$ of $K$ distinct non-root nodes in $\ct$, and any two families of $K$ vectors indexed by $\collN$, $\{\vu_H : H \in \collN \}$ and $\{\vv_H : H \in \collN \}$,
	the total time to compute
	$\{\vu_H^{\top}\treeop_H : H \in \collN\}$ and $\{\treeop_H \vv_H : H \in \collN\}$ is bounded by $f(K)$. 
	Then we say $\treeop$ has query complexity $f$ for some function $f$.
	
	Without loss of generality, we may assume $f(0) = 0$, $f(k) \geq k$, and $f$ is concave.
\end{definition}

By examining the definition of the inverse tree and tree operator, we see they are related.
\begin{lemma} \label{lem:treeop-transpose}
	If $\treeop$ is a tree operator on $\ct$, then $\treeop^\top$ is an inverse tree operator on $\ct$, where its edge operators are obtained from $\treeop$'s edge operators by taking a transpose.  
	Furthermore, $\treeop$ and $\itreeop$ have the same query and update complexity.
	\qed
\end{lemma}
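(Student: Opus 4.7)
The plan is to verify the statement by direct computation of the transpose of the decomposition defining a tree operator, and then to translate these identities into statements about the associated query and update complexities.

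First, I would start from the definition
\[
\treeop^\top = \left( \sum_{\text{leaf $L$, node $H$} \;:\; L \in \ct_H} \treeop_{L \leftarrow H}\,\mi_{F_{H}} \right)^{\!\top}
= \sum_{\text{leaf $L$, node $H$} \;:\; L \in \ct_H} \mi_{F_{H}}\, \treeop_{L \leftarrow H}^\top,
\]
using that $\mi_{F_H}$ is a symmetric (diagonal $0/1$) projection. Next I would apply the reverse-order rule for transposes to the product
\[
\treeop_{L \leftarrow H} = \treeop_{H_{t}} \cdots \treeop_{H_{2}}  \treeop_{H_{1}},
\]
yielding $\treeop_{L \leftarrow H}^\top = \treeop_{H_{1}}^\top \treeop_{H_{2}}^\top \cdots \treeop_{H_{t}}^\top$. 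Defining $\itreeop_H \defeq \treeop_H^\top$ for every non-root node $H$, this last product is exactly $\itreeop_{H_1} \cdots \itreeop_{H_{t-1}} \itreeop_{H_t} = \itreeop_{H \leftarrow L}$ as in \cref{defn:inverse-tree-operator}. Substituting back gives
\[
\treeop^\top = \sum_{\text{leaf $L$, node $H$} \;:\; L \in \ct_H} \mi_{F_H}\, \itreeop_{H \leftarrow L},
\]
which is precisely the form of an inverse tree operator supported on $\ct$ with edge operators $\{\treeop_H^\top\}$.

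For the complexity equivalence, I would simply observe that for any vectors $\vu_H, \vv_H$ and any non-root node $H$,
\[
\itreeop_H \vv_H = \treeop_H^\top \vv_H = (\vv_H^\top \treeop_H)^\top, \qquad \vu_H^\top \itreeop_H = \vu_H^\top \treeop_H^\top = (\treeop_H \vu_H)^\top,
\]
so the cost of evaluating one kind of edge-operator product against a vector (on either side) is identical to the cost of evaluating the other kind, up to free transposition of the input and output vectors. Summing this equality over any collection $\collN$ of $K$ distinct non-root nodes shows that the query-complexity bounds $f$ from \cref{def:tree-operator-complexity} transfer verbatim between $\treeop$ and $\treeop^\top$. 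The same argument applies to update complexity: any data structure maintaining the edge operators $\{\treeop_H\}$ under weight changes can, with no asymptotic overhead, expose $\{\treeop_H^\top\}$ by storing or reading entries in transposed order.

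There is no serious obstacle here; the only point requiring care is bookkeeping of the order of the factors in $\treeop_{L \leftarrow H}$ versus $\itreeop_{H \leftarrow L}$, which is exactly reversed by transposition, so the identification $\itreeop_H = \treeop_H^\top$ is forced. Once this is checked, the complexity claims follow directly from the fact that $\mm \vv$ and $\vv^\top \mm^\top$ are computed by the same underlying arithmetic operations.
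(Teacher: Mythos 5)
Your proof is correct and matches the argument the paper intends: the lemma is stated without proof (it is considered immediate), and your transpose computation with $\itreeop_H = \treeop_H^\top$, together with the observation that Definition~\ref{def:tree-operator-complexity} already treats left- and right-multiplication symmetrically, is exactly the routine verification being omitted. No gaps.
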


\subsection{Implicit representations of the solution}
\label{subsec:overview_representation}

Assuming we have dynamic inverse tree and tree operators $\itreeop$ and $\treeop$ on tree $\ct$ dependent on $\vw$ such that $\mw^{1/2} \mproj_{\vw}  = \treeop \itreeop$,
we can now state how to abstractly maintain the implicit representation of the solutions throughout \textsc{Solve} (\cref{alg:IPM_centering}).
Specifically, we want to maintain the solution $\vx$, and at every step $k$, 
carry out an update of the form
\begin{equation} \label{eq:simplified-x-update}
	\vx \leftarrow \vx + h^{(k)} \mw^{1/2} \mproj_{\vw} \vv^{(k)}.
\end{equation}

We design a data structure \textsc{MaintainRep} to accomplish this, by:
\begin{itemize}
	\item At the start of \textsc{Solve}, initializing the data structure using the procedure \textsc{Initialize} with $\vx = \vx^{\init}$,
	\item At each step $k$, updating the weights $\vw$ in the data structure using the procedure \textsc{Reweight}, followed by updating $\vx$ according to \cref{eq:simplified-x-update} using the procedure \textsc{Move},
	\item At the end of \textsc{Solve}, outputing the final $\vx$ using the procedure \textsc{Exact}.
\end{itemize}

The key to designing an efficient data structure is to make use of the structure of the operators.
Due to their decomposition along $\ct$, we can update the operators and apply them to vectors without exploring all of $\ct$ every time.

\begin{restatable}[Implicit representation maintenance] {theorem}{MaintainRepresentation} \label{thm:maintain_representation}
	Let $\vw$ be the weights changing at every step of \textsc{Solve} (\cref{alg:IPM_centering}).
	Suppose there exists dynamic inverse tree and tree operators $\itreeop$ and $\treeop$ on tree $\ct$ both dependent on $\vw$ such that $\mw^{1/2} \mproj_{\vw}  = \treeop \itreeop$ throughout the IPM.
	Let $Q$ be the max of the query complexity of the tree and inverse tree operator,
	and let $U$ be the max of the update complexity of the two operators.
	Suppose $\ct$ has constant degree and height $\eta$.
	Then there is a data structure \textsc{MaintainRep}
	that satisfies the following invariants at the end of step $k$:
	\begin{itemize}
		\item It explicitly maintains the dynamic weights $\vw$ and step direction $\vv$ from the current step.
		
		\item It explicitly maintains scalar $c$ and vectors $\zprev, \zsum$, which together represent the implicitly-maintained vector $\vz \defeq c \zprev + \zsum$.
		At the end of step $k$, $\zprev = \itreeop^{(k)} \vv^{(k)}$, and
		\[
		\vz = \sum_{i=1}^{k} h^{(i)} \itreeop^{(i)} \vv^{(i)}.
		\]
		\item 
		It implicitly maintains $\vx$ so that at the end of step $k$, 
		\[
		\vx = \vx^{\init} + \sum_{i=1}^{k} h^{(i)} \treeop^{(i)} \itreeop^{(i)} \vv^{(i)},
		\]
		where $\vx^{\init}$ is some initial value set at the start of \textsc{Solve}.
	\end{itemize}
	The data structure supports the following procedures and runtimes:
	\begin{itemize}
		\item $\textsc{Initialize}(\treeop, \itreeop, \vv^\init \in\R^{m},\vw^\init \in\R_{>0}^{m}, \vx^{\init} \in \R^m)$:
		Preprocess and set $\vx \leftarrow \vx^{\init}$.
		
		The procedure runs in $O(U(m) + Q(m))$ time.
		
		\item $\textsc{Reweight}(\delta_{\vw} \in\R_{>0}^{m}$):
		Update the weights to $\vw \leftarrow \vw + \delta_{\vw}$.
		
		The procedure runs in 
		$O(U(\eta K) + Q(\eta K))$ total time,
		where $K = \nnz(\delta_{\vw})$.
		
		\item $\textsc{Move}(h \in \R$, $\delta_{\vv} \in \R^{m}$):
		Update the current step direction to $\vv \leftarrow \vv + \delta_{\vv}$.
		Update the implicit representation of $\vx$ to reflect the following change in value:
		\[
		\vx \leftarrow \vx + h \treeop \itreeop \vv.
		\]
		The procedure runs in $O(Q(\eta K))$ time,
		where $K = \nnz(\delta_{\vv})$.
		
		\item $\textsc{Exact}$:
		Output the current exact value of $\vx$ in $O(Q(m))$ time. 
	\end{itemize}
\end{restatable}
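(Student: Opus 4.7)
The plan is to exploit the recursive decomposition of $\treeop$ and $\itreeop$ along $\ct$ to perform lazy updates, so that the per-step work scales with the number of actual changes rather than with $m$. Concretely, I maintain $\vw$ and $\vv$ explicitly, keep the edge operators $\{\itreeop_H, \treeop_H\}_{H \in \ct}$ up to date via the underlying update mechanism, and represent $\vz = c\zprev + \zsum$ in such a way that the two summands can be locally adjusted at only $O(\eta K)$ coordinates per step.

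For \textsc{Initialize}, I would first invoke the underlying update mechanism on the full vector $\vw^{\init}$ at a one-time cost of $O(U(m))$, then compute $\zprev \gets \itreeop^{(0)} \vv^{\init}$ by a single full query at cost $O(Q(m))$, set $c \gets 0$ and $\zsum \gets \vzero$, and store $\vx^{\init}$. All invariants then hold trivially, since $\vz = \vzero$ corresponds to the empty sum.
For \textsc{Reweight} with $K$ changed coordinates in $\vw$, I would refresh every affected edge operator $\itreeop_H, \treeop_H$ in $O(U(\eta K))$ time. By the path decomposition in \cref{defn:inverse-tree-operator}, the only operators that change are those on the $\eta K$ ancestor edges of the $K$ affected leaves; denote this set of nodes by $\collN$. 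To restore $\zprev = \itreeop^{(k)} \vv^{(k-1)}$, I recompute the affected coordinates $F_H$ for $H \in \collN$ by one query of cost $O(Q(\eta K))$. Before overwriting these coordinates, I would fold the old contribution into $\zsum$ via
\[
\zsum|_{F_H} \gets \zsum|_{F_H} + c \cdot \bigl(\zprev^{\mathrm{old}} - \zprev^{\mathrm{new}}\bigr)|_{F_H}, \qquad H \in \collN,
\]
which preserves $\vz = c\zprev + \zsum$ coordinatewise while allowing $\zprev$ to be refreshed on the nodes in $\collN$.

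For \textsc{Move}, I would first update $\vv \gets \vv + \delta_\vv$ and compute $\Delta\zprev := \itreeop \delta_\vv$. Since $\delta_\vv$ is supported on $K$ leaves, $\Delta\zprev$ is supported on $F_H$ for the $\eta K$ ancestor nodes, so the query costs $O(Q(\eta K))$. Then, on the affected coordinates I set
\[
\zsum|_{F_H} \gets \zsum|_{F_H} - c \cdot \Delta \zprev|_{F_H}, \qquad \zprev|_{F_H} \gets \zprev|_{F_H} + \Delta\zprev|_{F_H},
\]
followed by $c \gets c + h$. A short algebraic check, using that $\vv$ is already the updated direction, shows the net change to $c\zprev + \zsum$ is exactly $h\,\zprev^{\mathrm{new}} = h\,\itreeop^{(k)} \vv^{(k)}$, which is the gain required by the invariant on $\vz$. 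For \textsc{Exact}, I would output $\vx^{\init} + \treeop(c\zprev + \zsum)$ via one full query at cost $O(Q(m))$.

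The main obstacle is verifying that the bookkeeping updates in \textsc{Reweight} and \textsc{Move} are genuinely localized to the $\eta K$ affected nodes, rather than cascading into an $O(m)$ rewrite of $\zprev$ or $\zsum$. This requires two ingredients, both supplied by the definitions: first, that the set of edge operators changed by $\vw$-changes lies within the ancestor paths of the $K$ affected leaves, so $|\collN| \le \eta K$; and second, that the sets $F_H$ are pairwise disjoint, so the $\zsum$ and $\zprev$ adjustments at different ancestors do not conflict and together cost $O(\eta K)$ in bookkeeping on top of the query cost. Given these, the stated invariant $\vz = \sum_{i=1}^{k} h^{(i)} \itreeop^{(i)} \vv^{(i)}$ at the end of step $k$ follows by induction on $k$ with \textsc{Reweight} and \textsc{Move} providing the two sub-steps of the inductive step.
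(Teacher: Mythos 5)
There is a genuine gap, and it sits exactly where the theorem is hardest. Your data structure stores only $c,\zprev,\zsum$ (i.e.\ the vector $\vz$) and, at \textsc{Exact} time, outputs $\vx^{\init} + \treeop(c\zprev + \zsum)$ using the \emph{current} tree operator. But the invariant to be maintained is $\vx = \vx^{\init} + \sum_{i=1}^{k} h^{(i)} \treeop^{(i)} \itreeop^{(i)} \vv^{(i)}$, in which each summand uses the tree operator $\treeop^{(i)}$ of its own step. Since \textsc{Reweight} changes $\treeop$, in general $\treeop^{(k)}\vz^{(k)} \neq \sum_{i} h^{(i)} \treeop^{(i)} \itreeop^{(i)} \vv^{(i)}$, so your \textsc{Exact} output is simply not the vector the theorem requires (\textsc{Reweight} must leave the \emph{value} of $\vx$ unchanged while the operator is swapped). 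Fixing this needs additional state on the $\treeop$ side: the paper's \textsc{TreeOp} structure maintains $\vx = c\,\treeop\vz + \sum_{H}\treeop^{(H)}\vu_H$ with lazy pushdown vectors $\vu_H$ at every node, and its \textsc{Reweight} performs three rounds of \textsc{Pushdown} over $\pathT{\collN}$ (inject $c\vz|_{F_H}$ under the old operator, swap the operator, inject $-c\vz|_{F_H}$ under the new one) precisely so that the change $c(\treeop^{\new}-\treeop)\vz$ is cancelled locally in $O(U(K)+Q(\eta K))$ time. Your proposal has no analogue of this mechanism, and with only $\vz$ in hand there is no way to absorb the operator change without an $\Omega(Q(m))$ recomputation.

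A second, smaller gap is in your \textsc{Reweight} for the $\itreeop$ side: you propose to ``recompute the affected coordinates $F_H$'' of $\zprev=(\itreeop^{\new}\vv)$ in $O(Q(\eta K))$ time, but those coordinates aggregate contributions $\itreeop^{\new}_{H\leftarrow L}\vv$ from \emph{all} leaves $L$ in $\ct_H$, which can be far more than $K$; the paths whose operators changed lie in $\pathT{\collN}$, but the inputs feeding into them do not. The paper avoids this by additionally caching the partial sums $\vy_H = \sum_{\text{leaf } L \in \ct_H}\itreeop_{H\leftarrow L}\vv$ at every node and updating them bottom-up via $\vy_H = \itreeop_{H_1}\vy_{H_1}+\itreeop_{H_2}\vy_{H_2}$ (its \cref{lem:itreeop-diff-time}); without these cached vectors your claimed $O(Q(\eta K))$ bound for the $\zprev$ repair does not follow. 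Your \textsc{Move} bookkeeping and the $\vz$-invariant algebra are correct and match the paper's \textsc{InverseTreeOp}, but the theorem's main content — maintaining $\vx$ across changes of $\treeop$ within the stated time bounds — is not established by your argument.
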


\subsection{Solution approximation}
\label{subsec:overview_approx}

In the IPM, one key operation is to maintain the solution vector $\ox$ that is close to $\vx$ throughout the algorithm.
(Analogously for the slack $\os$ close to $\vs$.)
Since we have implicit representations of the solution $\vx$ from \textsc{MaintainRep},
we now show how to maintain $\ox$ close to $\vx$.
To accomplish this, we use a meta data structure that solves this in
a more general setting introduced in \cite{dong2022nested}.

\begin{restatable}[Approximate vector maintenance with tree operator \cite{dong2022nested}]{theorem}{solnApprox} \label{thm:soln-approx}
Let $0 < \rho < 1$ be a failure probability.
Suppose $\treeop \in \R^{m \times n}$ is a tree operator with query complexity $Q$ and supported on a constant-degree tree $\ct$ with height $\eta$.
There is a randomized data structure \textsc{MaintainApprox} that
takes as input the dynamic weights $\vw$
and the dynamic $\vx$ implicitly maintained according to \cref{thm:maintain_representation} at every step,
and explicitly maintains the approximation $\ox$ to $\vx$ satisfying $\norm{\mw^{-1/2} (\vx - \ox)}_\infty \leq \delta$ at every step with probability $1 - \rho$.

Suppose $\|{\mw^{(k)}}^{-1/2}(\vx^{(k)}-\vx^{(k-1)})\|_2 \leq \beta$ \emph{for all steps $k$}.
Furthermore, suppose $\vw$ is a function of $\ox$ coordinate-wise.
Then, for each $\ell \geq 0$, $\ox$ admits $2^{2\ell}$ coordinate changes every $2^\ell$ steps.
Over $N$ total steps, the total cost of the data structure is
\begin{equation} \label{eq:coordinate-changes-bound}
	\O(\eta^3 (\beta/\delta)^2 \log^3(mN/\rho)) \left( Q(m) + \sum_{k=1}^N Q(S^{(k)}) +  \sum_{\ell=0}^{\log N} \frac{N}{2^\ell} \cdot Q(2^{2\ell}) \right),
\end{equation}
where $S^{(k)}$ is the number of nodes $H$ where $\treeop_H$ or $\vu_H$ in the implicit representation of $\vx$ changed at step $k$.
\end{restatable}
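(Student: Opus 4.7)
The plan is to maintain an $\ell_2$ sketch of $\vx - \ox$ using a Johnson--Lindenstrauss matrix, exploit the tree-operator structure to update this sketch cheaply at every step, and at geometrically spaced intervals detect and correct the coordinates of $\ox$ that have drifted too far from the true $\vx$. Since this data structure is essentially lifted from \cite{dong2022nested}, the proof mostly amounts to verifying the invariants carry over; however, the key conceptual pieces should be laid out as follows.

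First, I would sample a JL-type sketch $\mpi \in \R^{\sketchlen \times m}$ with $\sketchlen = O(\log(mN/\rho))$ rows, so that $\|\mpi \vu\|_2 \approx \|\vu\|_2$ simultaneously for all vectors that arise across the $N$ steps, with probability $1-\rho$. Using \cref{thm:maintain_representation}, the vector $\vx$ is implicitly represented as $\vx^{\init} + \treeop \vz$ for an explicitly maintained $\vz$, so the sketch $\mpi \mw^{-1/2}(\vx - \ox)$ can be evaluated by pushing the $\sketchlen$ rows of $\mpi \mw^{-1/2}$ through $\treeop$. Between steps, I would maintain this sketch incrementally: a step in which only $S^{(k)}$ nodes of the implicit representation change (i.e., an edge operator $\treeop_H$ or a coordinate of $\vz$ changes) costs $\widetilde{O}(\sketchlen \cdot Q(S^{(k)}))$ by \cref{def:tree-operator-complexity}, since we invoke the tree operator only on the affected subtrees.

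Second, I would organize the detection of heavy coordinates in a level-based schedule: at each level $\ell \in \{0,1,\ldots,\log N\}$, every $2^\ell$ steps the data structure samples at these levels the coordinates of $\mw^{-1/2}(\vx - \ox)$ whose magnitude exceeds a threshold of order $\delta/\eta$, and writes the corresponding entries into $\ox$ (which also updates $\vw$ coordinate-wise). The assumption $\|\mw^{(k)-1/2}(\vx^{(k)}-\vx^{(k-1)})\|_2 \leq \beta$ implies that after $2^\ell$ steps the accumulated $\ell_2$ mass is at most $O(\beta \sqrt{2^\ell})$ up to constants, so a Markov-style bound limits the number of coordinates above threshold at level $\ell$ to $O((\beta/\delta)^2 \cdot 2^\ell \cdot \mathrm{polylog})$; absorbing the $\mathrm{polylog}$ and balancing with the level frequency yields the $2^{2\ell}$ bound on coordinate changes per $2^\ell$ steps, and crucially ensures $\|\mw^{-1/2}(\vx - \ox)\|_\infty \leq \delta$ is restored. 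Recovering a set of $O(2^{2\ell})$ heavy coordinates from the sketch and then evaluating $\treeop \itreeop$ on exactly those coordinates costs $\widetilde{O}(\sketchlen^2 \cdot Q(2^{2\ell}))$ per level-$\ell$ reset. Summing over all $N/2^\ell$ resets at level $\ell$ and over all levels gives the $\sum_{\ell} (N/2^\ell) Q(2^{2\ell})$ term, while incremental sketch maintenance gives the $\sum_k Q(S^{(k)})$ term, and initialization gives the $Q(m)$ term.

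The main obstacle is the interaction between the randomness of $\mpi$ and the adaptive nature of the IPM: the changes to $\vx$ at step $k+1$ depend on $\ox$, which was produced using $\mpi$, so classical JL concentration does not directly apply. I would handle this by boosting the sketch width to $O(\log^2(mN/\rho))$, re-randomizing $\mpi$ at each reset of level $\ell$, and using the standard trick that the total number of distinct ``queries'' across the $N$ steps is bounded polynomially so a union bound absorbs into the $\log^3(mN/\rho)$ factor in the overall runtime. The $\eta^3$ factor comes from composing edge operators along paths of length $\eta$ in $\ct$ when estimating per-coordinate sketch distortion, together with the factor-$\eta$ slack used in the threshold for declaring a coordinate heavy so that the infinity-norm guarantee holds globally. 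Since the full argument is verbatim that of \cite{dong2022nested}, I would formalize it by invoking their meta-theorem directly after verifying that our $\treeop$, $\itreeop$ and $\vw$ satisfy its hypotheses.
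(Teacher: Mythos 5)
Your high-level architecture (JL sketches pushed through the tree operator, a dyadic schedule of detection every $2^\ell$ steps, thresholding with a logarithmic slack factor) matches the paper's, which indeed imports the machinery of \cite{dong2022nested}. However, two steps in your sketch would not survive formalization as written.

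First, your fix for adaptivity does not work. You propose re-randomizing the sketch matrix at each level-$\ell$ reset and union-bounding over ``polynomially many queries,'' but JL guarantees hold only for vectors fixed independently of the sketch, and here the vector $\vq$ at step $k$ depends on $\ox$, hence on the outputs of earlier sketch-based detections; within a window the same matrix is reused against adaptively chosen inputs, so the union bound is not valid, and re-randomizing at every level-$0$ detection (i.e.\ every step) would cost $Q(\sketchlen m)$ per step, destroying the claimed runtime. The paper's mechanism is different and is the crux of the argument: it maintains per-node sketches $\mphi \treeop^{(H)}$ and performs a random descent down $\ct$, choosing a child with probability proportional to the sketched subtree mass and then \emph{rejection-sampling} at the leaf against the exact value $\norm{\vq|_{E(u)}}^2$. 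This makes the accepted coordinate distributed exactly proportionally to $\vq_i^2$, independent of the sketch randomness, so no randomness is leaked to the (adaptive) IPM and a single sketch matrix of width $\Theta(\eta^2\log(m/\rho))$ suffices for all $N$ steps. Relatedly, an $O(\log)$-row norm sketch alone does not let you ``recover'' heavy coordinates; the identification requires exactly this per-node-sketch descent (repeated $M_\ell = \Theta(2^{2\ell}(\beta/\delta)^2\log^2 m\log(m/\rho))$ times), which is also where the $\sum_\ell (N/2^\ell)Q(2^{2\ell})$ term actually comes from.

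Second, your bound on the drift over a window is unjustified: you claim the accumulated mass after $2^\ell$ steps is $O(\beta\sqrt{2^\ell})$, but the step increments are adaptive, not orthogonal, so no square-root cancellation is available. The correct argument (as in the paper) uses the triangle inequality or Cauchy--Schwarz to get $\norm{\md(\vx^{(k)}-\vx^{(k-2^\ell)})}_2 \le 2^\ell\beta$ on the relevant coordinates, which with threshold $\delta/(2\lceil\log m\rceil)$ gives $|I_\ell^{(k)}| = O(2^{2\ell}(\beta/\delta)^2\log^2 m)$ --- this is where the stated ``$2^{2\ell}$ changes per $2^\ell$ steps'' comes from; your ``balancing with the level frequency'' does not substitute for this calculation. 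Also note the slack factor in the threshold is $\log m$ (the number of dyadic intervals in the telescoping that certifies $\norm{\md(\vx-\ox)}_\infty\le\delta$ at \emph{every} step, not just at resets), while $\eta$ enters separately through the per-node accuracy $1\pm O(1/\eta)$ needed along a descent path, which forces the sketch width $\sketchlen=\Theta(\eta^2\log(m/\rho))$. Your closing plan of invoking the meta-theorem of \cite{dong2022nested} after checking hypotheses is legitimate and is essentially what the paper does, but the self-contained argument you give in place of it has these genuine gaps.
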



\subsection{Main theorem for the RIPM framework}

We are now ready to state and prove the main result in this framework.

\begin{theorem}[RIPM framework] \label{thm:ripm-main}
Consider an LP of the form
\begin{equation}\label{eq:LP-main}
	\min_{\vx\in\mathcal{P}} \; \vc^{\top}\vx\quad\text{where}\quad\mathcal{P}=\{\ma \vx=\vb,\; \vl\leq\vx\leq\vu\}
\end{equation}
where $\ma\in\mathbb{R}^{n\times m}$.
For any vector $\vw$, let 
$\mproj_{\vw} \defeq \mw^{1/2}\ma^\top(\ma\mw\ma^\top)^{-1}\ma\mw^{1/2}$, 
and suppose there exists dynamic tree and inverse tree operators $\treeop$ and $\itreeop$  dependent on $\vw$, 
such that $\mw^{1/2} \mproj_{\vw}  = \treeop \itreeop$.
Let $U$ be the update complexity of $\treeop$ and $\itreeop$, and let $Q$ be their query complexity.
Let $r$ and $R = \norm{\vu-\vl}_2$ be the inner and outer radius of $\mathcal{P}$, and let $L = \norm{\vc}_2$. 
Then, there is a data structure to solve \cref{eq:LP-main} to $\eps L R$ accuracy with probability $1-2^{-m}$ in time
\[
\O \left( \eta^4 \sqrt{m} \log( \frac{R}{\eps r}) \cdot \sum_{\ell=0}^{\frac12 \log m} \frac{U(2^{2\ell}) + Q(2^{2\ell})}{2^\ell} \right).
\]
\end{theorem}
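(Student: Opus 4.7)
The plan is to assemble the theorem from three modular ingredients already in place: the RIPM meta-algorithm of \cref{thm:IPM}, the implicit representation data structure \textsc{MaintainRep} of \cref{thm:maintain_representation}, and the approximation data structure \textsc{MaintainApprox} of \cref{thm:soln-approx}. The idea is that \cref{thm:IPM} only needs to read $\ox,\os$ inside its main loop, so $\vx$ and $\vs$ themselves can be kept implicitly by \textsc{MaintainRep}, while \textsc{MaintainApprox} refreshes the coordinates of $\ox,\os$ that have drifted too far. The runtime then follows by bookkeeping on how many coordinates change per step and summing a geometric series.

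Concretely, I would first invoke \cref{thm:IPM} to reduce the LP to $N = \widetilde{O}(\sqrt{m}\log(R/(\eps r)))$ iterations of \textsc{Solve}. At step $k$ the primal update has the form $\vx \leftarrow \vx + h^{(k)}\mw^{1/2}\mproj_{\vw}\vv^{(k)}$, which by the hypothesis $\mw^{1/2}\mproj_{\vw} = \treeop\itreeop$ is exactly the update \textsc{MaintainRep} supports. By \cref{lem:treeop-transpose}, the dual update $\vs \leftarrow \vs + \ot h^{(k)}\mw^{-1/2}\mproj_{\vw}\vv^{(k)}$ factors analogously through $\itreeop^\top\treeop^\top$, so I would instantiate two copies of \textsc{MaintainRep} (one for $\vx$, one for $\vs$) and two copies of \textsc{MaintainApprox} (one each for $\ox,\os$). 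At every step I call \textsc{Reweight} with the new $\vw$, \textsc{Move} with the new direction $\vv^{(k)}$, then \textsc{MaintainApprox} to refresh $\ox,\os$. At termination I call \textsc{Exact} to obtain the final $\vx$ and return it.

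For the runtime, \cref{thm:IPM} gives $h\|\vv\|_2 = O(1/\log m)$, so the per-step movement bound $\beta$ in the hypothesis of \cref{thm:soln-approx} is polylogarithmic, and the tolerance $\delta$ from the IPM is a constant. Consequently \cref{thm:soln-approx} ensures that $\ox$ and $\os$ each admit at most $O(2^{2\ell})$ coordinate changes per every $2^\ell$ consecutive steps. Since the IPM also guarantees that $\vw,\vv$ change only on coordinates where $\ox$ or $\os$ change, the same amortized schedule governs the number $K$ of coordinates input to \textsc{Reweight} and \textsc{Move}, and moreover governs the quantity $S^{(k)}$ in \cref{thm:soln-approx}, up to the factor of $\eta$ coming from the fact that each changed coordinate propagates along its root-to-leaf path in $\ct$: $S^{(k)} = O(\eta K^{(k)})$. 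Plugging in the per-step cost $O(U(\eta K) + Q(\eta K))$ of \textsc{Reweight} and $O(Q(\eta K))$ of \textsc{Move}, summing over the scales $\ell = 0, 1, \ldots, \tfrac12\log m$ with $K = 2^{2\ell}$ and multiplicity $N/2^\ell$, and using concavity of $U,Q$ to write $U(\eta K) \le \eta\, U(K)$ and $Q(\eta K) \le \eta\, Q(K)$, the \textsc{MaintainRep} contribution is $O(\eta N \sum_\ell (U(2^{2\ell})+Q(2^{2\ell}))/2^\ell)$. The \textsc{MaintainApprox} contribution from \cref{thm:soln-approx}, after substituting $S^{(k)} = O(\eta K^{(k)})$ and applying concavity, contributes an additional factor $\widetilde O(\eta^3)$. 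Combining yields the claimed
\[
\widetilde{O}\!\left(\eta^4 \sqrt{m}\,\log(R/(\eps r))\cdot \sum_{\ell=0}^{\tfrac12\log m} \frac{U(2^{2\ell}) + Q(2^{2\ell})}{2^\ell}\right).
\]
The success probability is inherited by a union bound over the two \textsc{MaintainApprox} instances with failure probability set to $\rho = 2^{-m}/N$.

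The main obstacle is chasing the dependency chain cleanly: \textsc{MaintainApprox} controls how often $\ox,\os$ (hence $\vw,\vv$) change, but \textsc{MaintainApprox}'s own cost in turn depends on $S^{(k)}$, which is driven by those same updates through \textsc{MaintainRep}. One must verify that feeding the $O(\eta\cdot 2^{2\ell})$-per-$2^\ell$-steps amortized schedule back into \cref{thm:soln-approx} does not blow up the sum, which is why the concavity of $U,Q$ and the specific $2^{2\ell}/2^\ell$ weighting are essential; once that is in place the geometric sum telescopes cleanly and the $\eta^4$ factor is the only loss relative to the ideal bound.
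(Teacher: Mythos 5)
Your proposal is correct and follows essentially the same route as the paper's proof: reduce to $\widetilde{O}(\sqrt{m}\log(R/(\eps r)))$ steps via \cref{thm:IPM}, maintain $\vx,\vs$ implicitly with \textsc{MaintainRep} and $\ox,\os$ with \textsc{MaintainApprox}, use the $2^{2\ell}$-coordinate-changes-per-$2^\ell$-steps schedule (which also controls $\vw,\vv$ and hence $S^{(k)} = O(\eta K^{(k)})$) together with concavity of $U,Q$, and sum the geometric series to get the $\eta^4$ factor. The only minor slip is calling the tolerance $\delta$ a constant — the RIPM actually needs $\delta = O(1/\log m)$ — but since $\beta = O(1/\log m)$ as well, $(\beta/\delta)^2 = O(1)$ and the stated bound is unaffected.
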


\begin{proof}[Proof of \cref{thm:ripm-main}]
We implement the IPM algorithm using the data structures from \cref{subsec:overview_representation,subsec:overview_approx},
and bound the cost of each operations of the data structures.
For simplicity, we only discuss the primal variables in this proof, but the slack variables are analogous.
We use one copy of \textsc{MaintainRep} to maintain $\vx$, and one copy of \textsc{MaintainApprox} to maintain $\ox$.
At each step, we perform the implicit update of $\vx$ using \textsc{Move} and update $\vw$ using \textsc{Reweight} in \textsc{MaintainRep}.
We construct the explicit approximations $\ox$ using \textsc{Approximate} in \textsc{MaintainApprox}.

\cref{thm:soln-approx} shows that throughout the IPM, for each $\ell \geq 0$, there are $2^{2\ell}$ coordinate changes to $\ox$ every $2^\ell$ steps. 
Since $\vw$ is a function of $\ox$ coordinate-wise, there are also $2^{2\ell}$ coordinate changes in $\vw$ every $2^\ell$ steps. 
Similarly, we observe that $\vv$ is defined as a function of $\ox$ and $\os$ coordinate-wise,  so there are $O(2^{2\ell})$ coordinate changes to $\vv$ every $2^\ell$ steps.
Then \cref{thm:maintain_representation} shows that the total runtime over $N$ steps for the \textsc{MaintainRep} data structure is
\begin{equation} \label{eq:maintainrep-overall-runtime}
\O(U(m) + Q(m)) + \O \left(  \sum_{\ell=0}^{\log N} \frac{N}{2^\ell} \cdot \left( U(\eta \cdot 2^{2\ell}) + Q(\eta \cdot 2^{2\ell}) \right)  \right).
\end{equation}

\cref{thm:soln-approx} shows that the total runtime over $N$ steps for \textsc{MaintainApprox} is
\begin{equation} \label{eq:maintainapprox-overall-runtime}
	\O(\eta^3 (\beta/\delta)^2 \log^3(mN/\rho)) \left( Q(m) + \sum_{k=1}^N Q(S^{(k)}) +  \sum_{\ell=0}^{\log N} \frac{N}{2^\ell} \cdot Q(2^{2\ell}) \right),
\end{equation}
where the variables are defined as in the theorem statement.
By examining \cref{thm:maintain_representation}, we see that when a coordinate of $\vw$ or $\vv$ changes, the implicit representation of $\vx$ admits updates at $O(\eta)$-many nodes.
Combined with the concavity of $Q$, we can bound
\[
	\sum_{k=1}^N Q(S^{(k)}) \leq O(\eta) \cdot \sum_{\ell = 0}^{ \log N} \frac{N}{2^\ell} \cdot Q(2^{2\ell}).
\]

\cref{thm:IPM} guarantees that there are $N=\sqrt{m}\log m\log(\frac{mR}{\eps r})$ total IPM steps, and at each step $k$, we have $h^{(k)} \norm{{\mw^{(k)}}^{-1/2} ( \vx^{(k)} - \vx^{(k-1)}) }_2 = h^{(k)} \norm{\vv^{(k)} - \mproj_{\vw} \vv^{(k)} }_2 \leq O(\frac{1}{\log m})$, so we can set $\beta = O(\frac{1}{\log m})$. 
By examining \cref{alg:IPM_centering}, we see it suffices to set $\delta = O(\frac{1}{\log m})$. 
We choose the failure probability $\rho$ to be appropriately small, e.g.\ $2^{-m}$.
Finally, we conclude that the overall runtime of the IPM framework is
\[
\O \left( \eta^4 \sqrt{m} \log( \frac{R}{\eps r}) \cdot \sum_{\ell=0}^{\frac12 \log m} \frac{U(2^{2\ell}) + Q(2^{2\ell})}{2^\ell} \right),
\]
where the terms for intialization times have been absorbed.

\end{proof}


	\section{From separator tree to projection operators} 

In this section, we explore the separable structure of the dual graph $G_{\ma}$ of the LP constraint matrix $\ma$, 
and use these properties to help define and maintain the tree operator and inverse tree operator as needed for the IPM framework from \cref{sec:ipm-framework}.

Throughout this section, we fix $\ma \in \R^{n \times m}$, so that the dual graph $G_{\ma} = (V,E)$ has $n$ vertices, $m$ hyperedges. 
Additionally, let $\rho$ denote the max hyperedge size in $G_{\ma}$; equivalently, $\rho$ is the column sparsity of $\ma$.

\subsection{Separator tree}

The notion of using a \emph{separator tree} to represent the recursive decomposition of a separable graph is well-established in literature, c.f \cite{eppstein1996separator, henzinger1997faster}.
In our work, we use the following definition:

\begin{definition} [Separator tree] \label{defn:separator-tree}
	Let $G$ be a hypergraph with $n$ vertices, $m$ hyperedges, and max hyperedge size $\rho$.
	A \emph{separator tree} $\cs$ for $G$ is a \emph{constant-degree} tree whose nodes represent a recursive decomposition of $G$ based on balanced separators.
	
	Formally, each node of $\cs$ is a \emph{region} (edge-induced subgraph) $\region$ of $G$; we denote this by $\region \in \cs$. 
	At a node $\region$, we define subsets of vertices $\bdry{\region}, \sep{\region}, \elim{\region}$, 
	where $\bdry{\region}$ is the set of \emph{boundary vertices} of $H$, i.e.\ vertices with neighbours outside $\region$ in $G$;
	$\sep{\region}$ is a balanced vertex separator of $\region$;
	and $\elim{\region}$ is the set of \emph{eliminated vertices} at $\region$. 
	Furthermore, let $E(H)$ denote the edges contained in $H$.

	The nodes and associated vertex sets are defined in a top-down manner as follows: 
	
	\begin{enumerate}
		\item The root of $\cs$ is the node $\region = G$, with $\bdry{\region} = \emptyset$ and $\elim{\region} = \sep{\region}$.
		\item A non-leaf node $\region \in \cs$ has a constant number of children whose union is $H$.
		The children form a edge-disjoint partition of $\region$, and the intersection of their vertex sets is a balanced separator $\sep{\region}$ of $\region$.
		Define the set of eliminated vertices at $H$ to be $\elim{\region} \defeq \sep{\region} \setminus \bdry{\region}$.
		
		The set $\skel{H}$ consists of all vertices in the boundary and separator, which can intuitively be interpreted as the \emph{skeleton} of $H$.  In later sections, we recursively construct graphs (matrices) on $\skel{H}$ which capture compressed information about all of $H$.
		
		By definition of boundary vertices, for a child $D$ of $H$, we have $\bdry{D} \defeq (\bdry{\region} \cup \sep{\region}) \cap V(D)$.
		
		\item At a leaf node $H$, we define $\sep{\region} = \emptyset$ and $\elim{\region} = V(\region) \setminus \bdry{\region}$. (This convention allows leaf nodes to exist at different levels in $\cs$.)
		The leaf nodes of $\cs$ partition the edges of $G$.
		
	\end{enumerate}
	
	We use $\eta$ to denote the height of $\cs$.
\end{definition}

For a separator tree to be meaningful, the leaf node regions should be sufficiently small, to indicate that we have a good overall decomposition of the graph.
Additionally, for our work, we want a more careful bound on the sizes of the skeleton of regions.
This motivates the following refined definition:
\begin{definition}[$(a, b, \lambda)$-separator tree]
	Let $G$ be a graph with $n$ vertices, $m$ edges, and max hyperedge size $\rho$.
	Let $a \in [0,1]$ and $b \in (0,1)$ be constants, and $\lambda \geq 1$ be an expression in terms of $m,n,\rho$.
	An $(a, b, \lambda)$-separator tree $\cs$ for $G$ is a separator tree satisfying the following additional properties:
	\begin{enumerate}
		\item There are at most $O(b^{-i})$ nodes at level $i$ in $\cs$,
		\item any node $H$ at level $i$ satisfies $|\skel{H}| \leq O(\lambda \cdot b^{a i})$,
		\item a node $H$ at level $i$ is a leaf node if and only if $|V(H)| \leq O(\rho)$.
	\end{enumerate}
\end{definition}

\begin{figure}
\includegraphics[scale=0.5]{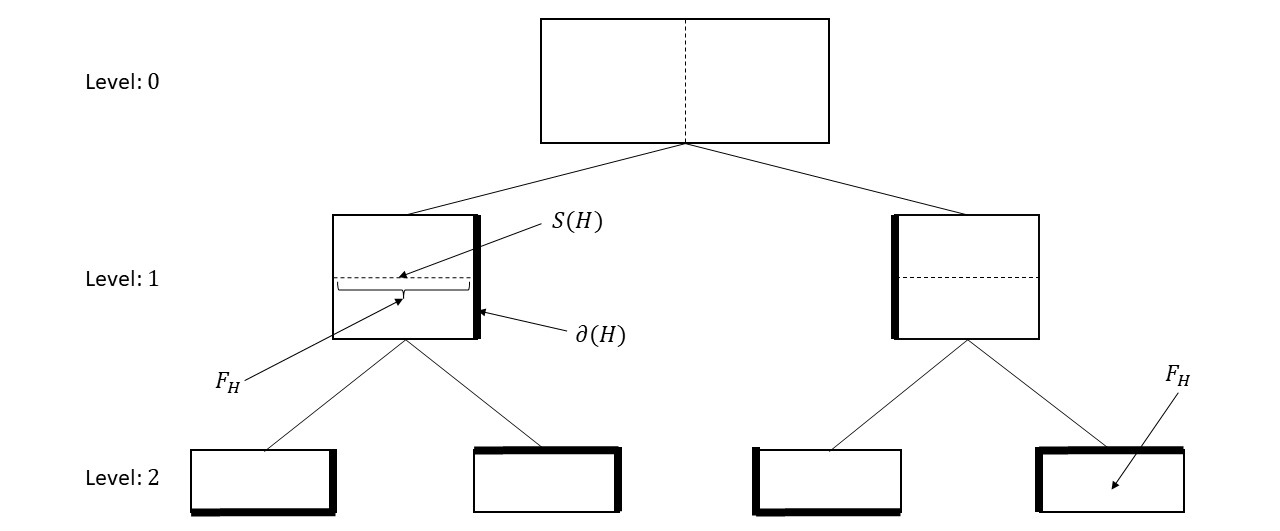}
\caption{An example of a separator tree. The bold edges denote the boundary of each component, $\protect\bdry{\region}$ while the dotted lines denote the separators $\protect\sep{\region}$. Note that $\protect\elim{\region} = \sep{\region} \setminus \bdry{\region}$ is defined differently on the leaves. }
\end{figure}

Intuitively, $a$ and $b$ come from the separability parameters of $G$, and $\lambda$ is a scaling factor for node sizes in $\cs$. 
Since there could be hyperedges of size $\rho$, regions of size $\rho$ are not necessarily separable, so we set the region as a leaf.

We make extensive use of these properties in subsequent sections when computing runtimes.

\subsection{Nested dissection using a separator tree} \label{subsec:nested-dissection}

Let $\cs$ be any separator tree for $G_{\ma}$. 
In this section, we show how to use $\cs$ to factor the matrix $\ml^{-1} \defeq (\ma \mw \ma^\top)^{-1}$  recursively:

\begin{definition}[Block Cholesky decomposition]
	The \emph{block Cholesky decomposition} of a symmetric matrix $\ml$ with blocks indexed by $F$ and $C$ is:
	\begin{equation}\label{eq:basic_chol}
		\ml
		= 
		\left[
		\begin{array}{cc}
			\mi
			& \mzero \\
			\ml_{C,F} (\ml_{F,F})^{-1}
			& \mi
		\end{array}
		\right] 
		\left[
		\begin{array}{cc}
			\ml_{F,F}
			& \mzero \\
			\mzero
			& \sc(\ml, C)
		\end{array}
		\right] 
		\left[
		\begin{array}{cc}
			\mi
			& (\ml_{F,F})^{-1}\ml_{F,C} \\
			\mzero & \mi
		\end{array}\right],
	\end{equation}
	where the middle matrix in the decomposition is a block-diagonal matrix with blocks indexed by $F$ and $C$, with the lower-right block being the \emph{Schur complement} $\sc(\ml,C)$ of $\ml$ onto $C$:
	\begin{equation} \label{eq:schur-complement}
	\sc(\ml,C) \defeq \ml_{C,C} - \ml_{C,F}\ml_{F,F}^{-1}\ml_{F,C}. 
	\end{equation}
\end{definition}

Since $\sc(\ml,C)$ is a symmetric matrix, we can recursively apply the decomposition \cref{eq:basic_chol} to it.
By choosing the index sets $F,C$ for each recursive step according to $\cs$, we get a recursive decomposition of $\ml^{-1}$:

\begin{theorem}[$\ml^{-1}$ factorization, c.f.~\cite{dong2022nested} Theorem 33] \label{thm:L-inv-factorization}
	
	Let $\cs$ be the separator tree of $G_{\ma}$ with height $\eta$.
	For each node $H \in \cs$ with hyperedges $E(H)$, let $\ma_H \in \R^{n \times m}$ denote the matrix $\ma$ restricted to columns indexed by $E(H)$. 
	Define 
	\begin{align}\label{eq:L^H-defn}
		\ml[H] &\defeq \ma_H \mw {\ma_H}^\top, \text{and} \\
		\ml^{(\region)} & \defeq \sc(\ml[\region], \skel{\region}).
	\end{align}
	Then, we have
	\begin{equation} \label{eq:L-inv-factorization}
		\ml^{-1} =
		\mpi^{(\eta)\top}\cdots\mpi^{(1)\top} \mga
		\mpi^{(1)}\cdots\mpi^{(\eta)},
	\end{equation} 
	where\footnote{We use a different definition of \emph{level} compared to \cite{dong2022nested}. In \cite{dong2022nested}, the root has level $\eta$ in and leaf nodes have level 0, and in this paper, the root has level $0$ and leaf nodes have level $\eta$. This is purely for notational convenience in later calculations, so this theorem is otherwise unaffected.}
	\begin{equation}
	\mga \defeq 
	\left[
	\begin{array}{cccc}
		\left( \sum_{H \in \ct(\eta)} \left(\ml^{(H)}_{F_H, F_H}\right)^{-1}\right) & \mzero & \mzero\\
		\mzero &  \ddots & \mzero\\
		\mzero & \mzero & \left( \sum_{H \in \ct(0)} \left(\ml^{(H)}_{F_H, F_H}\right)^{-1}\right)
	\end{array}
	\right],
	\end{equation}
	and for $i = 1, \dots, \eta$,
	\begin{equation}
		\mpi^{(i)} \defeq \mi - \sum_{H \in \ct(i)} \mx^{(H)},
	\end{equation}
	where $\ct(i)$ is the set of nodes at level $i$ in $\ct$, the matrix $\mpi^{(i)}$ is supported on $\bigcup_{H \in \ct(i)} \skel{H}$ and padded with zeros to $n$-dimensions,
	and for each $H \in \cs$,
	\begin{equation}\label{def:mx^(H)}
		\mx^{(H)} \defeq \ml^{(H)}_{\bdry{\region}, F_H} \left( \ml^{(H)}_{F_H, F_H}\right)^{-1}.
	\end{equation}
\end{theorem}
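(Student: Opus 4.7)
The plan is to iteratively apply the block Cholesky decomposition \cref{eq:basic_chol} in a bottom-up traversal of the separator tree $\cs$, eliminating the blocks $F_\eta, F_{\eta-1}, \ldots, F_1$ in order. The factorization \cref{eq:L-inv-factorization} is exactly what this iterated Cholesky produces; the bulk of the work is to identify the intermediate Schur complements with the stated matrices $\ml^{(H)}$ and $\mx^{(H)}$.

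First, I would establish a recursive identity for $\ml^{(\cdot)}$: for any non-leaf $H \in \cs$ with children $D_1, \ldots, D_c$,
\[
\ml^{(H)} \;=\; \sc\bigl(\textstyle\sum_{j=1}^{c} \ml^{(D_j)},\; \skel{H}\bigr),
\]
with each $\ml^{(D_j)}$ padded by zeros to the ambient index set. This relies on two facts: (i) since the children partition $E(H)$ edge-disjointly, $\ml[H] = \sum_j \ml[D_j]$; and (ii) $V(H) \setminus \skel{H} = \bigsqcup_j (V(D_j) \setminus \skel{D_j})$, because sibling vertex sets meet only in $\sep{H} \subseteq \skel{H}$ and $\bdry{D_j} = \skel{H} \cap V(D_j)$ by definition. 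Fact (ii) allows the elimination of $V(H) \setminus \skel{H}$ to be carried out one child at a time via transitivity of Schur complement, first reducing each $\ml[D_j]$ to $\ml^{(D_j)}$.

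Next, let $R_i \defeq \bigcup_{j \leq i} F_j$ denote the vertices remaining after elimination of all levels strictly deeper than $i$, and define $\oml_i \defeq \sc(\ml, R_i)$. I would prove inductively from $i = \eta$ down to $i = 0$ that for every $H \in \ct(i)$, $(\oml_i)_{F_H, F_H} = \ml^{(H)}_{F_H, F_H}$ and $(\oml_i)_{\bdry{H}, F_H} = \ml^{(H)}_{\bdry{H}, F_H}$, and that $(\oml_i)_{F_{H_1}, F_{H_2}} = \mzero$ for distinct $H_1, H_2 \in \ct(i)$. The base case uses $\ml = \sum_{\text{leaf } L} \ml[L]$. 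The inductive step applies the recursive identity at each level-$(i+1)$ node, together with the key fact that vertices in $F_H$ are interior to region $H$ (their nonzero row in $\ml$ lies entirely inside $V(H)$), so fill-in created by eliminating deeper vertices stays inside each region and never couples distinct same-level regions. The same reasoning shows $\bdry{H} \subseteq \bigcup_{j<i} F_j$ and hence the off-diagonal block $(\oml_i)_{R_{i-1}, F_H}$ has rows supported in $\bdry{H}$, matching the support of $\mx^{(H)}$.

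Finally, a single block Cholesky step applied to $\oml_i$ with $F = F_i$, $C = R_{i-1}$ gives
\[
\oml_i^{-1} \;=\; \tilde{\mpi}_i^{\top} \begin{bmatrix} (\oml_i)_{F_i, F_i}^{-1} & \mzero \\ \mzero & \oml_{i-1}^{-1} \end{bmatrix} \tilde{\mpi}_i,
\]
and by the structural identification above, $(\oml_i)_{F_i, F_i}^{-1}$ is block-diagonal with blocks $(\ml^{(H)}_{F_H, F_H})^{-1}$ for $H \in \ct(i)$, and $\tilde{\mpi}_i = \mi - \sum_{H \in \ct(i)} \mx^{(H)} = \mpi^{(i)}$. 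Unrolling from $\oml_\eta = \ml$ down to $\oml_0$ yields \cref{eq:L-inv-factorization}, where the outer block at level $0$ collapses because the root has $\bdry{G} = \emptyset$. The main obstacle is the combinatorial step in the third paragraph: verifying that Schur complement elimination respects the region decomposition at every level, i.e.\ that fill-in from deeper eliminations never couples interior vertices of two distinct same-level regions. This is the heart of nested dissection and is precisely what guarantees the separator-tree-aligned sparsity of $\mga$ and each $\mpi^{(i)}$.
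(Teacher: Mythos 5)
The paper gives no proof of this statement---it is imported, up to the reversed level-numbering, from \cite{dong2022nested} (Theorem 33)---and your bottom-up block-Cholesky argument, driven by the recursive identity $\ml^{(H)} = \sc\bigl(\sum_j \ml^{(D_j)}, \skel{H}\bigr)$ together with the nested-dissection fact that eliminating region interiors never creates fill between vertex-disjoint interiors, is essentially the standard proof used there (the paper itself invokes the same recursive Schur-complement property later, in the proof of \cref{lem:dynamicSC}). One slip to fix: your fact (ii) should read $V(H)\setminus\skel{H} = \bigcup_j \bigl(V(D_j)\setminus\bdry{D_j}\bigr)$ (a disjoint union), since the sets $V(D_j)\setminus\skel{D_j}$ omit $F_{D_j}$---which is precisely what the outer Schur complement onto $\skel{H}$ eliminates after each child is reduced to $\ml^{(D_j)}$---so the recursion is still valid as you use it. Also, because children may sit more than one level below their parent and leaves occur at various levels, the induction is cleaner if the hypothesis is stated over the frontier of nodes whose descendants have been fully eliminated, i.e.\ $\sc(\ml, R_i) = \sum_H \ml^{(H)}$ over that frontier, rather than only over $\ct(i+1)$; this is bookkeeping, not a gap.
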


\subsection{Projection operators definition} \label{subsec:proj-op-defn}

Suppose $\cs$ is a separator tree for $G_\ma$.
In this subsection, we define the operator tree $\ct$ based on $\cs$,
followed by the tree operator $\treeop$ and inverse tree operator $\itreeop$ which will be supported on $\ct$.
Finally, we will show that our definitions indeed satisfy
\[
	\mw^{1/2} \mproj_{\vw} = \treeop \itreeop.
\]

Recall that $\cs$ is a constant-degree tree.
The leaf nodes of $\cs$ partition the hyperedges of $G_{\ma}$,
however, we do not have a bound on the number of hyperedges in a leaf node.
In constructing $\ct$, we simply want to modify $\cs$ so that each leaf contains exactly one hyperedge.
Specifically, for each leaf node $H \in \cs$ containing $|E(H)|$ hyperedges,
we construct a complete binary tree $\ct^+_H$ rooted at $H$ with $|E(H)|$ leaves,
assign one hyperedge from $E(H)$ to one new leaf, and attach $\ct^+_H$
at the node $H$.
This construction yields the desired operator tree $\ct$ whose height is within a $\log{|E|}$ factor of $\cs$.

We define the tree operator $\treeop$ on $\ct$ follows: For non-root node $H$ in $\ct$, let
\begin{equation} \label{eq:treeop-defn}
	\treeop_{H} \defeq
	\begin{cases}
		\mi_{\skel{H}} - \mx^{(H) \top} &\text{if $H$ exists in $\cs$}\\
		{\mw_{E(H)}}^{1/2} {\ma_H}^\top &\text{if $H$ is a leaf node in $\ct$} \\
		\mi &\text{else}. \\  
	\end{cases}
\end{equation}
	Note that the first two cases are indeed disjoint by construction.
	We pad zeros to all matrices in order to arrive at the correct overall dimensions.
\begin{lemma}[c.f. \cite{dong2022nested}, Lemma 59]
	Let $\treeop$ be the tree operator as defined above.
	Then
	\begin{equation}
		\treeop = \mw^{1/2} \ma^\top \mpi^{(\eta)\top} \cdots \mpi^{(1)\top}.
	\end{equation}
\qed
\end{lemma}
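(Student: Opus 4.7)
The plan is to expand both sides along the leaves of $\ct$ and match contributions from each $\cs$-ancestor. Since the leaves of $\ct$ partition $E$ into singleton hyperedges, I would first write $\mw^{1/2}\ma^\top = \sum_{L\text{ leaf of }\ct}\treeop_L$ with $\treeop_L = \mw_{E(L)}^{1/2}{\ma_L}^\top$. Observing that $F_H=\emptyset$ (with $\treeop_H=\mi$) for internal nodes of the binary tree extensions $\ct^+_H$, the tree operator collapses to
\[
\treeop = \sum_{L}\sum_{j=0}^{k_L} \treeop_L\,(\mi - \mx^{(A_{k_L})\top})\cdots(\mi - \mx^{(A_{j+1})\top})\,\mi_{F_{A_j}},
\]
where $A_0,\ldots,A_{k_L}$ is the $\cs$-path from the root down to the $\cs$-leaf containing $L$'s hyperedge. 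Hence it suffices to show, for each leaf $L$ and each $H\in\cs$, that $\treeop_L\mpi^{(\eta)\top}\cdots\mpi^{(1)\top}\mi_{F_H}$ matches the corresponding RHS summand when $H=A_j$ and vanishes otherwise.

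The key structural fact is a support lemma: for $H,H'\in\cs$, $\bdry{H'}\cap F_H=\emptyset$ unless $H'$ is a proper descendant of $H$. The case $H'=H$ is immediate from $F_H=\sep{H}\setminus\bdry{H}$; if $H'$ is an ancestor of $H$ or unrelated to it, any $v\in V(H)\cap V(H')$ lies on the separator of a common ancestor and is therefore forced into $\bdry{H}$, which is excluded from $F_H$. Consequently $\mx^{(H')\top}\mi_{F_H}=0$ unless $H'$ is a proper descendant of $H$. A dual argument gives $V(L)\cap(V(H)\setminus\bdry{H})=\emptyset$ whenever $H$ is not an ancestor of $L$ in $\cs$.

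Given these support facts, I would show by induction on level that $\mpi^{(\eta)\top}\cdots\mpi^{(1)\top}\mi_{F_H}$ has support in $V(H)\setminus\bdry{H}$; combined with the dual claim, $\treeop_L$ annihilates this whenever $H$ is not an ancestor of $L$, settling the vanishing case. For the surviving case $H=A_j$, I would expand each $\mpi^{(i)\top}=\mi-\sum_{H'\in\cs(i)}\mx^{(H')\top}$ and distribute. By the support lemma, only chains of proper descendants of $A_j$ (in increasing level order) survive the restriction by $\mi_{F_{A_j}}$; by the dual lemma, $\treeop_L$ further kills any chain whose terminal node is not itself an ancestor of $L$. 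The surviving chains are therefore exactly the subsets of $\{A_{j+1},\ldots,A_{k_L}\}$, and re-packing their signed contributions (respecting the deep-to-shallow multiplication order from $\mpi^{(\eta)\top}\cdots\mpi^{(1)\top}$) reassembles $\treeop_L\prod_{m=j+1}^{k_L}(\mi-\mx^{(A_m)\top})\mi_{F_{A_j}}$, matching the claim.

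The main obstacle will be carefully verifying the support lemma (handling the distinction between separator vertices and boundary vertices across multiple levels of $\cs$) and the combinatorial argument that the surviving signed chains re-pack into the desired product with correct ordering; this is essentially the argument underlying Lemma 59 of \cite{dong2022nested}, here adapted to hypergraphs and the binary tree extensions that make each $\ct$-leaf correspond to a single hyperedge.
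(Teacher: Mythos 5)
The paper itself gives no proof of this lemma (it is stated with a \qed and deferred to \cite{dong2022nested}, Lemma~59), so your proposal is judged on its own terms. Your overall strategy is the natural one and is essentially sound: decompose columns over the blocks $F_{H'}$, $H'\in\cs$ (which partition $V$, using $F=\emptyset$ at the auxiliary nodes of the binary extensions), prove the support facts ($\bdry{H'}\cap F_H=\emptyset$ unless $H'$ is a proper descendant of $H$, and $V(L)\cap F_H=\emptyset$ unless $H$ is an ancestor of the $\cs$-leaf containing $E(L)$), conclude that only signed chains along the path $A_{j+1},\dots,A_{k_L}$ survive $\treeop_L(\cdot)\mi_{F_{A_j}}$, and re-pack them using that distinct path nodes sit at distinct levels so each appears in exactly one $\mpi^{(i)\top}$, in the correct deep-to-shallow order. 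Your justifications of the support facts via the LCA separator and the downward propagation $\bdry{D}\subseteq(\bdry{H}\cup\sep{H})\cap V(D)$ are correct.

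There is, however, one concrete step you elide that is not merely bookkeeping: your very first display already writes the edge operators as $\mi-\mx^{(A_m)\top}$, whereas the paper defines $\treeop_{A_m}=\mi_{\skel{A_m}}-\mx^{(A_m)\top}$, i.e.\ each edge operator \emph{restricts} the intermediate vector to $\skel{A_m}=F_{A_m}\cup\bdry{A_m}$. Consequently, your re-packing at the end produces $\treeop_L\prod_{m}(\mi-\mx^{(A_m)\top})\mi_{F_{A_j}}$, which is not definitionally the summand $\treeop_{L\leftarrow A_j}\mi_{F_{A_j}}$ of $\treeop$; on the RHS the identity parts of the $\mpi^{(i)\top}$'s carry along coordinates (e.g.\ the part of $F_{A_t}$, $t<m$, lying outside $\skel{A_m}$) that the tree operator discards at level $m$. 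The equality still holds, but it needs an extra support argument showing the discarded coordinates can never be read again along the path: for $t<m$ one has $F_{A_t}\cap V(A_m)\subseteq\bdry{A_m}$ (a vertex of $\sep{A_t}\setminus\bdry{A_t}$ surviving into a descendant region is forced into that region's boundary), hence $\bdry{A_{m'}}\cap F_{A_t}\subseteq\bdry{A_m}\subseteq\skel{A_m}$ for every deeper path node $A_{m'}$, and likewise $V(L)\cap F_{A_t}\subseteq\skel{A_m}$; so everything a later path edge operator $\mx^{(A_{m'})\top}$ or the leaf operator $\treeop_L$ can read survives each restriction $\mi_{\skel{A_m}}$. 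With this lemma added (it is in the same spirit as your support lemma, but it is a different statement about $F_{A_t}\cap V(A_m)$ rather than $\bdry{H'}\cap F_H$), your argument closes; without it, the "re-packing" step does not yet identify the RHS with $\treeop$ as actually defined.
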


Next, we establish the query complexity of the tree operator:
\begin{lemma}	\label{lem:treeop-query-complexity}
	Suppose $L$ is the total number of leaf nodes in $\cs$.
	The query complexity of $\treeop$ is 
	\[
		Q(K) = O \left( \rho K + \max_{ \collN : \text{set of $K$ leaves in $\cs$}} \sum_{H \in \mathcal{P}_{\cs}(\collN)} |\skel{H}|^2 \right)
	\]
	for $K \leq L$,
	where $\mathcal{P}_{\cs}(\collN)$ is the set of all nodes in $\cs$ that are ancestors of some node in $\collN$ unioned with $\collN$.
	When $K > L$, then we define $Q(K) = Q(L)$.
\end{lemma}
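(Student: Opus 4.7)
The plan is to bound the cost of applying each individual edge operator $\treeop_H$ according to which of the three cases of \cref{eq:treeop-defn} it falls into, then sum over $\collN$, and finally rewrite the sum in the claimed form by extending the set of $\cs$-nodes in $\collN$ to a set of $K$ leaves of $\cs$ under ancestor closure.

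First I would handle the per-node cost accounting. For $H$ a leaf of $\ct$ (a single hyperedge), $\treeop_H = \mw_{E(H)}^{1/2} {\ma_H}^\top$ is multiplication by a vector with at most $\rho$ nonzero entries (since $\rho$ is the column sparsity of $\ma$), so either $\vu^\top \treeop_H$ or $\treeop_H \vv$ costs $O(\rho)$. For $H$ a node in $\cs$, $\treeop_H = \mi_{\skel{H}} - {\mx^{(H)}}^\top$; the identity part contributes $O(|\skel{H}|)$, while $\mx^{(H)} \in \R^{\bdry{H}\times F_H}$ (from \cref{def:mx^(H)}) is applied in $O(|\bdry{H}| \cdot |F_H|) \leq O(|\skel{H}|^2)$ time, assuming $\mx^{(H)}$ has already been computed and stored (which is the job of the maintenance data structure, not the query). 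For the remaining case where $H$ is an internal node of the binary tree $\ct^+_H$ inserted below a leaf of $\cs$, $\treeop_H = \mi$, contributing $O(1)$.

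Next I would sum these costs over $\collN$. Split $\collN = \collN_\ell \sqcup \collN_\cs \sqcup \collN_{\mathrm{int}}$ according to the three cases and write the total cost as
\[
O\!\left( \rho\,|\collN_\ell| + \sum_{H \in \collN_\cs} |\skel{H}|^2 + |\collN_{\mathrm{int}}|\right).
\]
Since $|\collN_\ell|+|\collN_{\mathrm{int}}| \leq K$ and $\rho \geq 1$, the first and third terms are absorbed into $O(\rho K)$, yielding a bound of $O\bigl(\rho K + \sum_{H \in \collN_\cs} |\skel{H}|^2\bigr)$.

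Finally I would replace $\collN_\cs$ by the stated ancestor-closure expression. The key observation is that $\collN_\cs$ contains at most $K$ nodes of $\cs$, and for any such set we can choose a descendant leaf of $\cs$ below each of them (padding arbitrarily if fewer than $K$) to obtain a set $\collN'$ of exactly $K$ leaves of $\cs$ such that $\collN_\cs \subseteq \pathT{\collN'}$ (with $\mathcal{P}_\cs$ playing the role of $\pathT{\cdot}$ on $\cs$). Since all terms $|\skel{H}|^2$ are nonnegative, this yields
\[
\sum_{H \in \collN_\cs} |\skel{H}|^2 \leq \sum_{H \in \mathcal{P}_\cs(\collN')} |\skel{H}|^2 \leq \max_{\collN'' : K \text{ leaves of } \cs} \sum_{H \in \mathcal{P}_\cs(\collN'')} |\skel{H}|^2,
\]
completing the bound; the case $K > L$ is trivial since there are only $L$ leaves in $\cs$ and any larger query set must repeat structure already counted. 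The only mildly delicate step is the last one, ensuring that the ancestor-closure reformulation is indeed an upper bound regardless of where $\collN_\cs$ sits in $\cs$ and not only when $\collN_\cs$ consists of leaves; this works because of monotonicity of $\mathcal{P}_\cs$ in its argument and because nonleaf $\cs$-nodes in $\collN_\cs$ are necessarily ancestors of some chosen descendant leaf.
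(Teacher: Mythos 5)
Your proposal is correct and follows essentially the same route as the paper: bound the per-edge-operator cost by $O(\rho)$ at leaves and $O(|\skel{H}|^2)$ at $\cs$-nodes, sum over the queried set, and then pass to the ancestor closure of a set of $K$ leaves of $\cs$ as an upper bound. The only differences are presentational (you make explicit the identity operators on the auxiliary binary trees and the descendant-leaf replacement step, which the paper states tersely), so nothing further is needed.
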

\begin{proof}
	First, we consider the query time $Q(1)$ for a single edge.
	Let $\vu$ be any vector, and let $H$ be a non-root node in $\ct$. 
	If $H$ is a leaf node, 
	then computing $\treeop_{H} \vu$ and $\vu^\top \treeop_{H}$ both take $O(\rho)$ time.
	If $H$ exists in $\cs$, then
	computing $\treeop_{H} \vu$ takes
	$O \left( |F_H|^2+ |\partial{H}| |F_H| \right) \leq O \left( |\skel{H}|^2 \right)$ time,
	since the bottleneck is naively computing $\ml^{(H)}_{\bdry{\region}, F_H} \left( \ml^{(H)}_{F_H, F_H}\right)^{-1} \vu$.
	Therefore, $Q(1) = O \left( \rho + \max_{H \in \cs} |\skel{H}|^2 \right)$.
	
	For $K > 1$, we can simply bound the query time for $K$ distinct edges by
	\begin{align*}
		Q(K) = O \left( \rho K + \max_{\collN : \text{set of $K$ nodes in $\cs$}} \sum_{H \in \mathcal{H}} |\skel{H}|^2 \right).
	\end{align*}
	Finally, note that we can take the summation over $H \in \mathcal{P}_{\cs}(\collN)$ instead of $H \in \collN$ for an upper bound. 
	In this case, it suffices to take the max over sets of leaf nodes.
\end{proof}

By taking the transpose of $\treeop$, we get an inverse tree operator, and together, they give the projection matrix using \cref{eq:L-inv-factorization}.

\begin{corollary}
	Let $\itreeop \defeq \treeop^\top$ be the inverse tree operator obtained from $\treeop$ by transposing the edge and leaf operators. Then
	\begin{equation}
		\mw^{1/2} \mproj_{\vw} \defeq \mw^{1/2} \mw^{1/2} \ma^\top \ml^{-1} \ma \mw^{1/2} = (\mw^{1/2} \treeop) \mga \itreeop.
	\end{equation}
	\qed
\end{corollary}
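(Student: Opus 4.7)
The plan is to prove this corollary by direct algebraic manipulation, chaining together the explicit factorization of $\ml^{-1}$ from \cref{thm:L-inv-factorization} with the closed-form expression for $\treeop$ given in the preceding lemma. There is essentially no conceptual obstacle beyond unwinding the definitions carefully.

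First, I would recall from the preceding lemma that
\[
\treeop \;=\; \mw^{1/2}\,\ma^\top\,\mpi^{(\eta)\top}\cdots\mpi^{(1)\top}.
\]
Then I would invoke \cref{lem:treeop-transpose} to justify that transposing the edge and leaf operators of $\treeop$ yields an inverse tree operator equal, as a matrix, to $\treeop^\top$. This gives
\[
\itreeop \;=\; \treeop^\top \;=\; \mpi^{(1)}\cdots\mpi^{(\eta)}\,\ma\,\mw^{1/2}.
\]
(Here I would briefly note the padding convention so the compositions make dimensional sense.)

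Next I would compute the right-hand side of the claim. Left-multiplying $\treeop$ by $\mw^{1/2}$ gives $\mw^{1/2}\treeop = \mw\,\ma^\top\,\mpi^{(\eta)\top}\cdots\mpi^{(1)\top}$, so
\[
(\mw^{1/2}\treeop)\,\mga\,\itreeop
\;=\;
\mw\,\ma^\top \Bigl(\mpi^{(\eta)\top}\cdots\mpi^{(1)\top}\,\mga\,\mpi^{(1)}\cdots\mpi^{(\eta)}\Bigr)\,\ma\,\mw^{1/2}.
\]
By \cref{thm:L-inv-factorization}, the parenthesized middle factor is exactly $\ml^{-1} = (\ma\mw\ma^\top)^{-1}$, so
\[
(\mw^{1/2}\treeop)\,\mga\,\itreeop
\;=\;
\mw\,\ma^\top\,(\ma\mw\ma^\top)^{-1}\,\ma\,\mw^{1/2}
\;=\;
\mw^{1/2}\cdot\mproj_{\vw},
\]
where the last equality uses the definition $\mproj_{\vw} = \mw^{1/2}\ma^\top(\ma\mw\ma^\top)^{-1}\ma\mw^{1/2}$ and the identity $\mw = \mw^{1/2}\mw^{1/2}$. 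This matches the claimed expression $\mw^{1/2}\mw^{1/2}\,\ma^\top\,\ml^{-1}\,\ma\,\mw^{1/2}$, completing the proof.

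The only subtlety worth flagging is the zero-padding convention used so that the level-wise matrices $\mpi^{(i)}$, the leaf operators $\mw_{E(H)}^{1/2}\ma_H^\top$, and the separator/boundary block operators $\mi_{\skel{H}} - \mx^{(H)\top}$ all live in a common $n$- or $m$-dimensional ambient space; with that in hand, the claim is essentially a bookkeeping verification.
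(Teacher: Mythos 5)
Your proof is correct and is exactly the verification the paper intends: the corollary is stated with no written proof precisely because substituting $\treeop = \mw^{1/2}\ma^\top\mpi^{(\eta)\top}\cdots\mpi^{(1)\top}$, transposing to get $\itreeop$, and invoking the factorization $\ml^{-1} = \mpi^{(\eta)\top}\cdots\mpi^{(1)\top}\mga\,\mpi^{(1)}\cdots\mpi^{(\eta)}$ from \cref{thm:L-inv-factorization} yields the claim immediately. Your remark about the zero-padding convention is the right (and only) subtlety to flag.
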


\begin{remark} \label{rem:treeop-key-remark}
	Without loss of generality, we have chosen to simplify our presentation and consider $\treeop \itreeop$ in place of $\mw^{1/2} \treeop \mga \itreeop$.
	
	This is possible for two reasons: One, $\mw^{1/2} \treeop$ is a tree operator, which we can in fact maintain in the same time complexity as $\treeop$.
	Two, $\mga$ is a block-diagonal matrix, with a block for each $H \in \ct$ that is indexed by $F_H$. It is straightforward to show we can maintain and apply $\mga \itreeop$ in the same time complexity as $\itreeop$.
\end{remark}

\subsection{Maintenance of projection operators}

So far, we have defined the separator tree $\cs$ for the graph $G_{\ma}$,
which we then used to define the operator tree $\ct$, which supports the tree operator $\treeop$ needed for the IPM framework.
In this subsection, we discuss how to maintain $\ml^{(\region)}, \sc(\ml^{(H)}, \bdry{H})$, and $(\ml^{(H)}_{F_H, F_H})^{-1}$ at each node $H \in \cs$ using the data structure \textsc{DynamicSC} (\cref{alg:dynamicSC}), as the weight vector $\vw$ undergoes changes throughout the IPM.
This will in turn allow us to maintain the tree operator $\treeop$.

We begin with a lemma showing that given a symmetric matrix and a low-rank update, we can compute its new inverse and Schur complement quickly. 

\begin{lemma} \label{lem:low-rank-sc-update}
	Let $\ml' = \ml +\mmu\mv \in \R^{n \times n}$ be a symmetric matrix plus a rank-$K$ update, where $\mmu$ and $\mv^\top$ both have dimensions $n \times K$.
	Given $\ml', \mmu, \mv$, we can compute ${\ml'}^{-1}$ in $O(n^2 K^{\omega-2})$ time.
	
	Additionally, suppose we are also given $\ml^{-1}$ and $\sc(\ml,S)$ for an index set $S$.
	Then we can compute $\sc(\ml', S)$, $\mmu', \mv'$ in $O(n^2K^{\omega-2})$ time, 
	so that $\sc(\ml,S) + \mmu'\mv' = \sc(\ml',S)$, and $\mmu', \mv'^\top$ both have $K$ columns.
\end{lemma}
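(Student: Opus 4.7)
The plan is to apply the Sherman--Morrison--Woodbury identity (SMW) twice, and to do all matrix products via block matrix multiplication. Throughout, I read the first part as tacitly assuming access to $\ml^{-1}$, since the second part explicitly requires it and the lemma is invoked in a maintenance setting where the previous inverse is already on hand; otherwise the claim cannot be true, as merely forming $\ml = \ml' - \mmu\mv$ and inverting from scratch already costs $\Theta(n^\omega)$.

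For the first claim, SMW gives
\[
{\ml'}^{-1} = \ml^{-1} - \ml^{-1}\mmu \left(\mi + \mv \ml^{-1}\mmu\right)^{-1} \mv \ml^{-1}.
\]
I would evaluate this inside-out: compute $\ml^{-1}\mmu$ (size $n \times K$) and $\mv\ml^{-1}$ (size $K \times n$) by partitioning the $n$-dimensional side of $\ml^{-1}$ into $n/K$ blocks of width $K$, for a cost of $O((n/K)^2 \cdot K^\omega) = O(n^2 K^{\omega-2})$ each; form and invert the $K \times K$ capacitance $\mi + \mv\ml^{-1}\mmu$ in $O(K^\omega)$; and assemble the rank-$K$ correction by two more $n\times K$ by $K\times n$ products, again within $O(n^2 K^{\omega-2})$.

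For the second claim, I would exploit the standard identity $\sc(\ml, S)^{-1} = (\ml^{-1})_{S,S}$. Restricting the SMW formula above to the $(S,S)$-block gives
\[
\sc(\ml', S)^{-1} = \sc(\ml, S)^{-1} - \mathbf{X}_0 \mathbf{Y}_0,
\]
where $\mathbf{X}_0 \defeq (\ml^{-1}\mmu)_{S,\cdot}$ is $|S| \times K$ and $\mathbf{Y}_0 \defeq (\mi + \mv\ml^{-1}\mmu)^{-1}(\mv\ml^{-1})_{\cdot, S}$ is $K \times |S|$. Both are obtained in $O(n^2 K^{\omega-2})$ as by-products of the first part. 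Applying SMW a second time to the right-hand side yields
\[
\sc(\ml', S) = \sc(\ml, S) + \mmu'\mv',
\]
with $\mmu' \defeq \sc(\ml, S)\mathbf{X}_0 \left(\mi - \mathbf{Y}_0 \sc(\ml, S)\mathbf{X}_0\right)^{-1}$ of size $|S|\times K$, and $\mv' \defeq \mathbf{Y}_0 \sc(\ml, S)$ of size $K \times |S|$. Every factor again has one dimension at most $K$ and the other at most $n$, so the same block-multiplication analysis keeps each step in $O(n^2 K^{\omega-2})$.

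There is no deep obstacle here---the proof is essentially two invocations of SMW---but the runtime requires some care. The subtlety is verifying that every intermediate product truly costs $O(n^2 K^{\omega-2})$: this uses that multiplying an $n \times n$ matrix by an $n \times K$ factor through $K \times K$ block multiplication costs $O((n/K)^2 K^\omega)$, which degenerates smoothly to the trivial $O(n^\omega)$ when $K \geq n$. It also relies on the computations from the first half being reusable in the second (namely $\ml^{-1}\mmu$, $\mv\ml^{-1}$, and the capacitance inverse), so that part two does not pay for them again.
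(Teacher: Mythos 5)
Your proposal is correct and follows essentially the same route as the paper: apply Sherman--Morrison--Woodbury to update $\ml^{-1}$, use the identity $\sc(\ml,S)^{-1}=(\ml^{-1})_{S,S}$ so the Schur complement's inverse inherits a rank-$K$ correction (avoiding the rank blow-up of the naive definition), and apply SMW once more to recover a rank-$K$ factorization of $\sc(\ml',S)-\sc(\ml,S)$, with all products costed via $K\times K$ block fast matrix multiplication at $O(n^2K^{\omega-2})$. Your tacit assumption that $\ml^{-1}$ is on hand for the first claim matches the paper, whose data structure explicitly maintains ${\ml^{(H)}}^{-1}$ and whose own proof likewise uses it.
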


\begin{proof}
	The Sherman-Morrison formula states
	\[
		{\ml'}^{-1} = \ml^{-1} -\ml^{-1} \mmu (\mi_K + \mv \ml^{-1} \mmu)^{-1} \mv \ml^{-1}.
	\]
	The time to compute this update is dominated by the time required to multiply an $n \times n$ matrix with an $n \times K$ matrix, which is $O(n^2 K^{\omega-2})$.
	
	For the second part of the lemma, recall that the Schur complement is defined to be: 
	\begin{equation} \label{eq:schur-complement}
	\sc(\ml,C) \defeq \ml_{C,C} - \ml_{C,F}\ml_{F,F}^{-1}\ml_{F,C}. 
	\end{equation}
	If we were to naively use this definition of the Schur complement to perform the updates and construct $\mmu'$ and $\mv'^\top$, we will run into an issue where the rank of the new update blows up by a factor of 8, leading to an exponential blowup in the rank as we go up the levels recursively. Instead, we make use of the fact that the inverse of the Schur complement, $\sc(\ml,S)^{-1}$ is exactly the $S,S$ submatrix of $\ml^{-1}$ to control the rank of the updates. 

	We first apply the definition of Schur complement and then use the Sherman-Morrison formula to get
	\begin{align*}
		\sc(\ml',S)^{-1} &= {\ml'^{-1}}_{S,S}\\
		&= {\ml^{-1}}_{S,S} - \left(\ml^{-1} \mmu (\mi_K + \mv\ml^{-1}\mmu )^{-1}\mv\ml^{-1}\right)_{S,S} \\
		&= \sc(\ml,S)^{-1} - \mi_S \ml^{-1}\mmu (\mi_{K}+ \mv\ml^{-1}\mmu )^{-1} \mv \ml^{-1} \mi_S.
	\end{align*}
	This gives us the new rank-$K$ update $\sc(\ml',S)^{-1} = \sc(\ml,S)^{-1} + \mmu^* \mv^*$ with
	\begin{equation*}
		\begin{aligned}
			\mmu^* &= - \mi_S \ml^{-1} \mmu\\
			\mv^* &= (\mi_{K}+ \mv\ml^{-1}\mmu)^{-1}\mv\ml^{-1} \mi_S.
		\end{aligned}
	\end{equation*}
	We can now determine the Schur complement update by applying Sherman-Morrison again:
	\[
		\sc(\ml', S) = \sc(\ml, S) - \sc(\ml, S)\mmu^*(\mi_{K}+ \mv^*\sc(\ml,S) \mmu^*)^{-1} \mv^* \sc(\ml,S).
	\]
	This is a rank-$K$ update $\sc(\ml',S) = \sc(\ml,S) + \mmu' \mv'$ with
	\begin{equation*}
		\begin{aligned}
			\mmu' &= - \sc(\ml, S)\mmu^* \\
			\mv' &= (\mi_{K}+ \mv^*\sc(\ml,S) \mmu^*)^{-1} \mv^* \sc(\ml,S).
		\end{aligned}
	\end{equation*}
	 The time to compute $\mmu^*, \mv^*, \mmu', \mv'$ are all dominated by the time to multiply an $n \times n$ matrix with an $n \times K$ matrix, which is $O(n^2 K^{\omega-2})$.
\end{proof}

Now, we are ready to present the data structure for maintaining the Schur complement matrices along a separator tree.

\begin{algorithm}
	\caption{Data structure to maintain dynamic Schur complements} \label{alg:dynamicSC}
	\begin{algorithmic}[1]
		\State \textbf{data structure} \textsc{DynamicSC}
		\State \textbf{private: member}
		\State \hspace{4mm} Hypergraph $G_{\ma}$ with incidence matrix $\ma$
		\State \hspace{4mm} $\vw \in \R^m$: Dynamic weight vector
		\State \hspace{4mm} $\cs$: Separator tree of height $\eta$. Every node $H$ of $\cs$ stores:
		\State \hspace{8mm} $\elim{\region}$, $\bdry{\region}$: Sets of eliminated vertices and boundary vertices of region $H$
		\State \hspace{8mm} $E(H)$: Set of hyperedges of region $H$ 
		\State \hspace{8mm} $\ml^{(H)}, (\ml^{(H)}_{F_H, F_H})^{-1}, \sc(\ml^{(H)}, \bdry{\region}), $: Matrices to maintain as a function of $\vw$
		\State \hspace{8mm} ${\ml^{(H)}}^{-1}$: Additional inverse matrix to maintain as a function of $\vw$
		\State \hspace{8mm} $\mmu_H, \mv_H$: Low-rank update at $H$, used in \textsc{Reweight}
		\State
		
		\Procedure{\textsc{Initialize}} {$\cs$, $\vw^\init \in \mathbb{R}^m$}
		\State $\cs \leftarrow \cs, \vw \leftarrow \vw^\init$
		\For{level $i=\eta$ to $0$}
		\For{each node $H$ at level $i$}
		\State $\ml^{(H)}, (\ml^{(H)}_{F_H, F_H})^{-1}, \sc(\ml^{(H)}, \bdry{\region}) \leftarrow \mzero, \mzero, \mzero$
		\State \textsc{SchurNode}$(H, \vw)$
		\EndFor
		\EndFor
		\EndProcedure
		\State
		
		\Procedure{\textsc{Reweight}}{$\delta_{\vw} \in \R^{m}$}
		\State $\mathcal{H} \leftarrow$ set of nodes $H$ in $\cs$ where $\delta_{\vw}|_{E(H)} \neq \vzero$
		\For{level $i = \eta$ to $0$}
			\For{each node $\region \in \mathcal{H}$ at level $i$}
				\State \textsc{SchurNode}$(\region, \delta_{\vw})$
			\EndFor
		\EndFor
		\State $\vw \leftarrow \vw + \delta_{\vw}$
		\EndProcedure
		\State
		
		\Procedure{\textsc{SchurNode}} {$\region \in \cs$, $\delta_{\vw} \in \R^m$}
		\If{$H$ is a leaf node} \label{line:sc-case-0}
			\Comment rank of update $\leq \min \{ \nnz(\delta_{\vw}|_{E(H)}), |\skel{H}|\}$
			\State $\ml^{(H)} \leftarrow \ml^{(H)} + \ma_H \diag(\delta_{\vw}|_{E(H)}) \ma_H^\top$
		\ElsIf{$\nnz(\delta_{\vw}|_{E(H)}) \leq |\skel{H}|$} \label{line:sc-case-1}
			\Comment rank of update $\leq \sum_{\text{child $D$}} K_D \leq \nnz(\delta_{\vw}|_{E(H)})$
			\State $\ml^{(H)} \leftarrow \ml^{(H)} + \sum_{\text{child $D$ of $H$}} \mmu_D \mv_D$
		\Else \label{line:sc-case-2}
			\Comment rank of update $\leq |\skel{H}|$
			\State $\ml^{(H)} \leftarrow \sum_{\text{child $D$ of $H$}} \sc(\ml^{(D)}, \bdry{D})$
		\EndIf
		\State Let $K_H \defeq \min \{ \nnz(\delta_{\vw}|_{E(H)}), |\skel{H}|\}$
		\Comment upper bound on the rank of update to $\ml\atH$
		\State Compute $(\ml^{(H)}_{F_H,F_H})^{-1}$ and ${\ml^{(H)}}^{-1}$ by \cref{lem:low-rank-sc-update}
		\State Compute $\sc(\ml^{(H)}, \bdry{H})$ and its rank-$K_H$ update factorization $\mmu_H, \mv_H$ by \cref{lem:low-rank-sc-update} 
		\EndProcedure
	\end{algorithmic}
\end{algorithm}

\begin{lemma} \label{lem:dynamicSC}
	Let $\vw$ be the weights changing at every step of the IPM.
	Let $\cs$ be any separator tree for $G_{\ma}$.
	Recall $G_{\ma}$ has $n$ vertices, $m$ hyperedges, and max hyperedge size $\rho$.
	Then the data structure \textsc{DynamicSC} (\cref{alg:dynamicSC}) correctly maintains the matrices $\ml^{(H)}, (\ml^{(H)}_{F_H, F_H})^{-1},\sc(\ml^{(H)}, \bdry{H})$ at every node $H \in \cs$ dependent on $\vw$ throughout the IPM.
	The data structure supports the following procedures and runtimes:
	\begin{itemize}
		\item $\textsc{Initialize}(\cs, \vw^\init \in \R^m)$: Set $\vw \leftarrow \vw^\init$, and compute all matrices with respect to $\vw$, in time
		\[
		O \left( \sum_{\text{leaf } H \in \cs} |E(H)| \cdot |\skel{H}|^{\omega-1} + \sum_{H \in \cs} |\skel{H}|^\omega \right).
		\]
		\item $\textsc{Reweight}(\delta_{\vw} \in \R^m)$: Update the weight vector to $\vw \leftarrow \vw + \delta_{\vw}$, and update all the maintained matrices with respect to the new weights, in time
		\[
		O \left( \sum_{\text{leaf } H \in \collN} 	\nnz(\delta_{\vw}|_{E(H)}) \cdot |\skel{H}|^{\omega-1}
		+ \sum_{H \in \collN} |F_H \cup \bdry{H}|^2 \cdot K_H^{\omega-2} \right).
		\]
		where $\collN$ is the set of nodes $H$ with $\delta_{\vw}|_{E(H)} \neq \vzero$,  $K_H \defeq \min \{\nnz(\delta_{\vw}|_{E(H)}), |\skel{H}|\}$.
	\end{itemize}
\end{lemma}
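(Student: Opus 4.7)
The proof has two components: correctness by bottom-up induction on the separator tree $\cs$, followed by runtime bounds derived from \cref{lem:low-rank-sc-update}. For correctness, I will show inductively that at the end of \textsc{Initialize} and after each \textsc{Reweight}, every node $H \in \cs$ stores $\ml^{(H)} = \sc(\ml[H], \skel{H})$ along with $(\ml^{(H)}_{F_H, F_H})^{-1}$, ${\ml^{(H)}}^{-1}$, and $\sc(\ml^{(H)}, \bdry{H})$, all as functions of the current $\vw$. At a leaf $H$, $\skel{H} = V(H)$ so $\ml^{(H)} = \ml[H] = \ma_H \mw \ma_H^\top$, exactly as produced by Case 0. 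At a non-leaf $H$ with children $D_1,\dots,D_k$, the edge partition $E(H) = \bigsqcup_i E(D_i)$ gives $\ml[H] = \sum_i \ml[D_i]$, and since vertices shared across distinct children lie in $\sep{H} \subseteq \skel{H}$, the standard additivity of Schur complements yields $\sc(\ml[H], \skel{H}) = \sum_i \sc(\ml[D_i], \bdry{D_i})$. Nesting of Schur complements ($\bdry{D_i} \subseteq \skel{D_i}$) then gives $\sum_i \sc(\ml^{(D_i)}, \bdry{D_i})$, which is exactly what Case 2 recomputes from scratch, and what Case 1 computes implicitly by accumulating the stored low-rank factorizations $\mmu_D, \mv_D$ of each child's Schur-complement update. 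The remaining quantities are then updated consistently via \cref{lem:low-rank-sc-update}.

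The central numerical invariant to verify is that at every node $H$ touched by \textsc{Reweight}, the update $\mmu_H \mv_H$ to $\ml^{(H)}$ has rank at most $K_H = \min\{\nnz(\delta_{\vw}|_{E(H)}), |\skel{H}|\}$. In Case 0 the update is $\ma_H \diag(\delta_{\vw}|_{E(H)}) \ma_H^\top$, which factors with $\nnz(\delta_{\vw}|_{E(H)})$ columns via the nonzero columns of $\ma_H \diag(\delta_{\vw}|_{E(H)})$, and is trivially of rank at most $|\skel{H}|$. In Case 1, the check $\nnz(\delta_{\vw}|_{E(H)}) \leq |\skel{H}|$ together with disjointness of the children's edge sets gives $\sum_D K_D \leq \sum_D \nnz(\delta_{\vw}|_{E(D)}) \leq \nnz(\delta_{\vw}|_{E(H)}) = K_H$, so concatenating $(\mmu_{D_1}, \dots, \mmu_{D_k})$ and $(\mv_{D_1}, \dots, \mv_{D_k})$ produces the required rank-$K_H$ factorization. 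In Case 2, $\nnz(\delta_{\vw}|_{E(H)}) > |\skel{H}|$, so writing the update as $\mmu = \ml^{(H)}_{\mathrm{new}} - \ml^{(H)}_{\mathrm{old}}$ and $\mv = \mi$ yields rank at most $|\skel{H}| = K_H$. This case split is precisely what prevents rank blowup along the tree, and is the main obstacle of the proof: without it, passing $\mmu_D \mv_D$ from children to parent unmodified would inflate ranks multiplicatively up the $\eta$ levels.

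Given this rank invariant, the runtime of \textsc{Reweight} at node $H$ follows directly from \cref{lem:low-rank-sc-update} applied with dimension $|F_H \cup \bdry{H}|$ and update rank $K_H$: maintaining $(\ml^{(H)}_{F_H, F_H})^{-1}$, ${\ml^{(H)}}^{-1}$, and $\sc(\ml^{(H)}, \bdry{H})$ with its new $\mmu_H, \mv_H$ costs $O(|F_H \cup \bdry{H}|^2 K_H^{\omega-2})$. At a leaf we additionally pay $O(\nnz(\delta_{\vw}|_{E(H)}) \cdot |\skel{H}|^{\omega-1})$ to form $\ma_H \diag(\delta_{\vw}|_{E(H)}) \ma_H^\top$ as a rank-$K_H$ factorization using the block matrix multiplication convention from the preliminaries; at non-leaves in Case 1 the factorization is obtained at no additional cost by concatenating children's stored factorizations, and Case 2 is subsumed by the Schur complement computation already bounded by \cref{lem:low-rank-sc-update}. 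Summing over $H \in \collN$ yields the claimed \textsc{Reweight} runtime.

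For \textsc{Initialize}, we process $\cs$ bottom-up and invoke \textsc{SchurNode} at every node with $\delta_{\vw} = \vw^\init$. At a leaf we form $\ml^{(H)} = \ma_H \mw \ma_H^\top$ directly via block matrix multiplication in $O(|E(H)| \cdot |\skel{H}|^{\omega-1})$ time (the $|\skel{H}| \times |E(H)|$ by $|E(H)| \times |\skel{H}|$ product, noting $|E(H)| \geq |\skel{H}|$ up to constants for the dominant leaves); at every $H \in \cs$, inverting $\ml^{(H)}_{F_H, F_H}$ and forming $\sc(\ml^{(H)}, \bdry{H})$ and ${\ml^{(H)}}^{-1}$ from scratch takes $O(|\skel{H}|^\omega)$ time. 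Summing over nodes gives exactly the bound in the lemma statement. The only nontrivial step in the whole argument is the rank-control invariant maintained by the three-case split in \textsc{SchurNode}; everything else is bookkeeping on top of \cref{lem:low-rank-sc-update} and the Schur complement composition identity.
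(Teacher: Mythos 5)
Your proposal is correct and follows essentially the same route as the paper's proof: bottom-up induction using the recursive Schur-complement identity $\ml^{(H)} = \sum_{D} \sc(\ml^{(D)}, \bdry{D})$, the three-case split in \textsc{SchurNode} to keep the update to $\ml^{(H)}$ at rank at most $K_H$, and then \cref{lem:low-rank-sc-update} for both the propagated factorizations and the per-node $O(|\skel{H}|^2 K_H^{\omega-2})$ cost, with \textsc{Initialize} handled as the special case $\delta_{\vw} = \vw^{\init}$. The only nitpick is notational: in the data structure $\mmu_H, \mv_H$ factor the update to $\sc(\ml^{(H)}, \bdry{H})$ (as produced by \cref{lem:low-rank-sc-update}), not to $\ml^{(H)}$ itself, though your argument clearly tracks the correct chain of rank bounds.
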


\begin{proof} 
	\textsc{Initialize} is a special case of \textsc{Reweight}, where the change in the weight vector is from $\vzero$ to $\vw^\init$, so we focus on a single call of \textsc{Reweight}.
	
	It suffices for \textsc{Reweight} visits only nodes in $\collN$,
	since if none of the edges in a region admits a weight update, then the matrices stored at the node remain the same by definition. 
	Also note that $H \in \mathcal{H}$ implies all ancestors of $H$ are also in $\mathcal{H}$.
	
	\paragraph{Correctness.} 
	We use the superscript $^\new$ on $\ml\atH$ to indicate that it is computed with respect to the new weights, and $^\old$ otherwise.
	Recall that $\ml\atH$ is supported on $\skel{H}$.
	
	We maintain some additional matrices at each node, in order to efficiently compute low-rank updates.
	Specifically, we use helper matrices $\mmu_H, \mv_H$ at $H$, and guarantee that during a single $\textsc{Reweight}(\delta_{\vw})$ call, after $\textsc{SchurNode}(H, \delta_{\vw})$ is run,
	they satisfy $\sc({\ml\atH}^\old, \bdry{H}) = \sc({\ml\atH}^\new, \bdry{H}) + \mmu_H \mv_H$, and $\mmu_H, {\mv_H}^\top$ both have at most $K_H$-many columns.
	
	Now, we show inductively that after \textsc{SchurNode}$(H, \delta_{\vw})$ is run, all matrices at $H$, as well as all matrices at all descendants of $H$, are updated correctly:
	When $H$ is leaf node, recall $\ml\atH$ is defined to be $\ml[H] \defeq \ma_H \mw_{E(H)} {\ma_H}^\top$, so clearly \textsc{SchurNode} updates $\ml\atH$ correctly, and the rank of the update is at most $K_H$. 
	The remaining matrices at $H$ are computed correctly by \cref{lem:low-rank-sc-update}.
	
	Inductively, when $H$ is a non-leaf node, the recursive property of Schur complements (c.f. \cite[Lemma 18]{dong2022nested}) allows us to write ${\ml\atH}^\new = \sum_{\text{child $D$ of $H$}} \sc({\ml^{(D)}}^\new, \bdry{D})$ at every node $H \in \cs$.
	This formula trivially shows that the update ${\ml\atH}^\new - {\ml\atH}^\old$ has rank $|\skel{H}|$ (ie. full-rank).
	Alternatively, if $\nnz(\delta_{\vw}|_{E(H)}) \leq |\skel{H}|$, then by the guarantees on the helper matrices, we have
	\begin{align*}
		{\ml\atH}^\new &= \sum_{\text{child $D$ of $H$}} \sc({\ml^{(D)}}^\new, \bdry{D}) \\
		&= \sum_{\text{child $D$ of $H$}} \sc({\ml^{(D)}}^\old, \bdry{D}) + \mmu_D \mv_D \\
		&={\ml\atH}^\old + \sum_{\text{child $D$ of $H$}} \mmu_D \mv_D.
	\end{align*} 
	This gives a low-rank factorization of the update ${\ml\atH}^\new - {\ml\atH}^\old$ with rank at most $\sum_{\text{child $D$}} K_D$, which we can show by induction is at most $\nnz(\delta_{\vw}|_{E(H)})$.
	Since we have the correct low-rank update to $\ml^{(H)}$, the remaining matrices at $H$ again are computed correctly by \cref{lem:low-rank-sc-update}.
	
	This completes the correctness proof.
	
	\paragraph{Runtime.}
	
	Consider the runtime of the procedure $\textsc{SchurNode}(H, \delta_{\vw})$ at a node $H$:
	If $H$ is a leaf node, then computing the update to $\ml^{(H)}$ involves multiplying a $|\skel{H}| \times \nnz(\delta_{\vw}|_{E(H)})$-sized matrix with its transpose (\cref{line:sc-case-0}).
	Note that if $|\skel{H}| > \nnz(\delta_{\vw}|_{E(H)})$, then this runtime can be absorbed into the runtime expression for the remaining steps of the procedure, since $K_H = \nnz(\delta_{\vw}|_{E(H)})$.
	Otherwise, we use fast matrix multiplication which takes $O(\nnz(\delta_{\vw}|_{E(H)}) \cdot |\skel{H}|^{\omega-1})$ time.
	If $H$ is a non-leaf node, there are two cases for the update to $\ml^{(H)}$ in the algorithm. 
	The first case (\cref{line:sc-case-1}) takes $O\left(|\skel{H}| \cdot (\sum K_D) \right) \leq O(|\skel{H}| \cdot K_H)$ time, and the second case (\cref{line:sc-case-2}) takes $O(|\skel{H}|^2)$ time.
	Computing the other matrices at any node $H$ takes $O \left( |\skel{H}|^2 \cdot {K_H}^{\omega-2}\right)$ time by \cref{lem:low-rank-sc-update}.
	
	The runtime of $\textsc{Reweight}(\delta_{\vw})$ is therefore given by
	\begin{align*}
		&\phantom{{}={}} \sum_{H \in \collN} \textsc{SchurNode}(H, \delta_{\vw}) \text{ time} \\
		&= O \left( \sum_{\text{leaf } H \in \collN}  \nnz(\delta_{\vw}|_{E(H)}) \cdot |\skel{H}|^{\omega-1}
		+ \sum_{H \in \collN} |F_H \cup \bdry{H}|^2 \cdot K_H^{\omega-2} \right).
	\end{align*}
	For \textsc{Initialize}, we further simplify the expression using $\nnz(\delta_{\vw}|_{E(H)}) = |E(H)|$ and $K_H \leq |\skel{H}|$.
\end{proof}

\subsection{Projection operator complexities}

In this subsection, we summarize the runtime complexities for the tree operator, in the special case when $\cs$ is a $(a,b,\lambda)$-separator tree for $G_{\ma}$.
Parametrizing the separator tree this way allows us to write the runtime expressions using geometric series. For non-negative $x$, we use the standard bound $\sum_{i=\ell}^u x^i \leq O(x^\ell + x^u)$. When it is clear $x < 1$, we bound $\sum_{i=\ell}^u x^i \leq O(x^\ell)$.

\begin{lemma} \label{lem:a-b-lambda-tree-complexity}
	Suppose $\cs$ is an $(a,b,\lambda)$-separator tree for $G_{\ma}$ on $n$ vertices, $m$ edges, with max hyperedge size $\rho$,
	where $a \in [0,1]$ and $b \in (0,1)$.
	Let $\eta$ denote the height of $\cs$, and let $L$ denote the number of leaf nodes.
	Let $\treeop$ be the tree operator on $\ct$ as defined in \cref{subsec:proj-op-defn}.
	Then there is a data structure to maintain $\treeop$ as a function of the weights $\vw$ throughout \textsc{Solve}, so that:
	\begin{itemize}
		\item The data structure can be initialize in time
		\begin{equation}\label{eq:treeop-init-time}
		O\left( \rho^{\omega-1} m + \lambda^\omega \cdot \left(1 + (b^{a \omega - 1})^\eta \right) \right).
		\end{equation}
		\item The query complexity of $\treeop$ is
		\begin{equation} \label{eq:treeop-query-time}
		Q(K) = O \left( \rho K + \lambda^2 \left(1 + (\min \{K,L\})^{1 - 2a} \right) \right)
		\end{equation}
		\item When $a  < 1$, the update complexity of $\treeop$ is $U(K) = $
		\begin{equation} \label{eq:treeop-update-time}
		\rho^{\omega-1} K + 
		\lambda^2 \min\{K, \lambda\}^{\omega-2} + 
		\begin{cases}
			\lambda^2 K^{1-2a} &\text{if $K \leq \lambda$} \\
			\lambda^2 K^{1-2a} + 
			\lambda^{\frac{\omega-1}{1 - \alpha}} K^{\frac{1-\alpha \omega}{1-\alpha}} 
			&\text{if $\lambda < K \leq \lambda \cdot b^{(a-1)\eta}$} \\
		    \lambda^{\omega} \cdot b^{(a\omega-1) \eta} &\text{if $K > \lambda \cdot b^{(a-1)\eta}$}.\\
		\end{cases}
		\end{equation}
		When $a = 1$, the update complexity is $U(K) = 
		\rho^{\omega-1} K + \lambda^2 \min\{K, \lambda\}^{\omega-2}$.
	\end{itemize}
\end{lemma}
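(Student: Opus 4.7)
The plan is to instantiate \textsc{DynamicSC} (\cref{alg:dynamicSC}) on the $(a,b,\lambda)$-separator tree $\cs$ to maintain the matrices $\ml^{(H)}, (\ml^{(H)}_{F_H,F_H})^{-1}, \sc(\ml^{(H)}, \bdry{H})$ at each $H \in \cs$, which by \cref{eq:treeop-defn} directly determine every non-leaf edge operator of $\treeop$. The binary-tree extension used to split leaves of $\cs$ into $\ct$ only introduces leaf edge operators of the form $\mw_{E(H)}^{1/2}\ma_H^\top$ depending on $\vw$ coordinate-wise; these are trivially updated in $O(\rho K)$ time and absorbed into the leading term. Thus the initialization and update times reduce to the bounds of \cref{lem:dynamicSC} evaluated using the $(a,b,\lambda)$-structure, while the query bound follows from \cref{lem:treeop-query-complexity}.

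For \textsc{Initialize}, the leaf contribution $\sum_{\text{leaf}}|E(H)|\cdot|\skel{H}|^{\omega-1}$ is bounded by $\rho^{\omega-1}m$ since leaves have $|\skel{H}| = O(\rho)$ and partition $E$. For the non-leaf contribution $\sum_H |\skel{H}|^\omega$, level $i$ has $O(b^{-i})$ nodes each of skeleton size at most $\lambda b^{ai}$, contributing $\lambda^\omega b^{(a\omega-1)i}$; summing as a geometric series over $i=0,\dots,\eta$ yields $O(\lambda^\omega(1+b^{(a\omega-1)\eta}))$, where the two summands cover the two possible signs of $a\omega-1$. For the query bound, \cref{lem:treeop-query-complexity} reduces the task to bounding $\sum_{H \in \mathcal{P}_{\cs}(\collN)}|\skel{H}|^2$ for leaf sets $\collN$ of size $K$. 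At level $i$ there are at most $\min\{K, O(b^{-i})\}$ ancestors, each of squared skeleton size at most $\lambda^2 b^{2ai}$; splitting at the sparsity crossover $i^* = \log_{1/b}(K)$ (where $b^{i^*} = 1/K$), the upper sum $\lambda^2 \sum_{i \leq i^*} b^{(2a-1)i}$ and the lower sum $\lambda^2 K \sum_{i > i^*} b^{2ai}$ both evaluate by geometric-series bounds to $O(\lambda^2 \max\{1, K^{1-2a}\})$.

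The update complexity is the main obstacle. \cref{lem:dynamicSC} contributes $\rho^{\omega-1} K$ at the leaves, and for each affected non-leaf $H$ at level $i$ the cost is $|F_H \cup \bdry{H}|^2 \cdot K_H^{\omega-2} \leq (\lambda b^{ai})^\omega$ in the worst case $K_H = |\skel{H}|$. Two critical levels drive the analysis: the sparsity crossover $i^* = \log_{1/b}(K)$, where the affected-node count $\min\{K, b^{-i}\}$ transitions from $b^{-i}$ to $K$, and the saturation crossover $i_c$ defined by $\min\{K, b^{-i_c}\}\cdot\lambda b^{a i_c} = K$, equivalently $b^{i_c} = (\lambda/K)^{1/(1-a)}$, above which every affected node can simultaneously satisfy $K_H = |\skel{H}|$. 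A routine calculation shows $i_c < 0$ iff $K \leq \lambda$ (Case 1), $0 \leq i_c \leq \eta$ iff $\lambda < K \leq \lambda b^{(a-1)\eta}$ (Case 2), and $i_c > \eta$ otherwise (Case 3).

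In each case, we sum the per-level cost $\min\{K, b^{-i}\}\cdot(\lambda b^{ai})^\omega$ across the sub-ranges of $\{0,\dots,\eta\}$ cut by $i^*$ and $i_c$, using geometric-series bounds. The root always contributes $\lambda^2 \min\{K,\lambda\}^{\omega-2}$. The non-saturated levels below $i_c$ telescope to $K\lambda^2 \sum_{i > i^*} b^{2ai} = O(\lambda^2 K^{1-2a})$ via $b^{i^*} = 1/K$. In Case 2 the saturated sum $\lambda^\omega \sum_{i \leq i_c} b^{(a\omega-1)i}$ is dominated by $i = i_c$ when $a\omega < 1$, which after substituting $b^{i_c} = (\lambda/K)^{1/(1-a)}$ simplifies via the identity $\omega - (1-a\omega)/(1-a) = (\omega-1)/(1-a)$ to $\lambda^{(\omega-1)/(1-a)} K^{(1-a\omega)/(1-a)}$. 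In Case 3 every level is saturated all the way to $\eta$, yielding $\lambda^\omega b^{(a\omega-1)\eta}$. When $a = 1$ the skeleton size does not decay with depth, the saturation transition collapses, and only the root term survives. The hardest step is the algebraic substitution at $i_c$ which produces the compact exponent pair appearing in the middle case.
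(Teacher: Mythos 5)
Your proposal is correct and follows essentially the same route as the paper's proof: instantiate \textsc{DynamicSC} on $\cs$, plug the $(a,b,\lambda)$ bounds into \cref{lem:treeop-query-complexity} and \cref{lem:dynamicSC}, and bound level-by-level geometric sums with the same two crossover levels ($i^*$ and $i_c$), the same three-case split in $K$, and the same exponent simplification at $i_c$ yielding $\lambda^{(\omega-1)/(1-a)}K^{(1-a\omega)/(1-a)}$. The only spots where the paper is more explicit are (i) the concavity of $x^{\omega-2}$ justifying that the worst case spreads the rank budget evenly, and (ii) the intermediate range $i_c < i \le i^*$ with per-node rank $Kb^{i}$ (your stated per-level cost $\min\{K,b^{-i}\}(\lambda b^{ai})^{\omega}$ is only the saturated cost), which you handle implicitly via its geometric endpoints; both are routine and your final bounds agree with the paper's.
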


\begin{proof}
	The data structure we use to maintain $\treeop$ is precisely the data structure \textsc{DynamicSC} with respect to $\cs$.
	
	\paragraph{Initialization time.}
	We use the runtime expression for \textsc{Initialize} in \textsc{DynamicSC} (\cref{lem:dynamicSC}) combined with the parameters of the $(a,b,\lambda)$-separator tree. For any $H$, we have $|\skel{H}| \leq \rho$, so 
	$\sum_{\text{leaf } H \in \cs} |E(H)| \cdot |\skel{H}|^{\omega-1} \leq \rho^{\omega-1} m$. Moreover,
	\begin{align*}
		\sum_{H \in \cs} |\skel{H}|^\omega 
		\leq \sum_{i=0}^{\eta} b^{-i} \left(\lambda \cdot b^{a i} \right)^\omega 
		\leq O(\lambda^\omega) \cdot \left[1 + (b^{a \omega - 1})^\eta \right]. 
	\end{align*}

	\paragraph{Query complexity.}
	
	We substitute the $(a,b,\lambda)$-separator tree bounds in \cref{lem:treeop-query-complexity}, to conclude that the query complexity of $\treeop$ is
	\begin{align*}
	Q(K) &= O \left( \rho K + \sum_{H \in \mathcal{P}_{\cs}(\collN)} (\lambda \cdot b^{ai})^2 \right), \\
	\intertext{
		where $\collN$ is any set of $K$ leaf nodes in $\cs$. 
		We group terms according to their node level, 
		and note that there are $\min\{K, b^{-i}\}$-many terms at any level $i$, so
	}
	&= O \left(\rho K + \sum_{i=0}^\eta \min \{K, b^{-i}\} \cdot (\lambda \cdot b^{ai})^2 \right) \\
	&= O(\rho K) + O( \lambda^2) \cdot \left(
	\sum_{i=0}^{-\log_b K} b^{-i} \cdot b^{2 a i} + 
	K \cdot \sum_{i = -\log_b K}^\eta b^{2 a i}
	\right) \\
	\intertext{Note that $K$ can be at most $L$ in the summation, so we have}
	&= O \left( \rho K +  \lambda^2 (1 + (\min\{K,L\})^{1 - 2a})  \right).
\end{align*}

\paragraph{Update complexity.}
	When $\vw$ changes, we update $\treeop$ by invoking $\textsc{Reweight}(\delta_{\vw})$ in \textsc{DynamicSC}, where $\delta_{\vw}$ denotes the change in $\vw$.
	By \cref{lem:dynamicSC}, the runtime for the fixed $\delta_{\vw}$ is
	\begin{equation} \label{eq:dynamicSC-reweight-time}
	\sum_{\text{leaf $H$}:\; \delta_{\vw}|_{E(H)} \neq \vzero}  \nnz(\delta_{\vw}|_{E(H)}) \cdot |\skel{H}|^{\omega-1} + 
	\sum_{\text{node $H$}:\; \delta_{\vw}|_{E(H)} \neq \vzero} |\skel{H} |^2 \cdot {K_H}^{\omega - 2}.
	\end{equation}
	Hence, the update complexity of $\treeop$ is the max of the above expression taken over all choices of $\delta_{\vw}$. 
	For any leaf node $H$, we upper bound $|\skel{H}|^{\omega-1} \leq \rho^{\omega-1}$, and therefore the first summation is at most $\rho^{\omega-1} K$.
	
	For the second summation, we substitute in the $(a,b,\lambda)$-separator tree bounds, and group terms according to their node level.
	Let $\cs(i)$ denotes all nodes at level $i$ in $\cs$. Then for any $\delta_{\vw}$, we have
	\begin{equation} \label{eq:update-complexity}
		\sum_{\text{node $H$}:\; \delta_{\vw}|_{E(H)} \neq \vzero} |\skel{H} |^2 \cdot {K_H}^{\omega - 2} \leq \sum_{i=0}^\eta \left( (\lambda \cdot b^{ai})^2 \cdot \sum_{H \in \cs(i)} {K_H}^{\omega - 2} \right),
	\end{equation}
	where the $K_H$'s are non-negative integers satisfying $\sum_{H \in \cs(i)} K_H \leq K$ 
	and $K_H \leq |\skel{H}| \leq \lambda \cdot b^{ai}$ for $H \in \cs(i)$.
	We are interested in upper bounding \cref{eq:update-complexity}.
	At any level $i$, there are $b^{-i}$ nodes, and
	the sum is maximized when all the $K_H$'s are equal. Depending on the relationship between $K$ and the level $i$, we have the following three cases:
	\begin{itemize}
		\item If $K \leq b^{-i}$, that is, the total update rank is less than the number of nodes at the level, then the sum is maximized if $K_H = 1$ for $K$-many nodes, and $K_H = 0$ for the rest.
		\item If $b^{-i} < K \leq b^{-i} \cdot (\lambda \cdot b^{ai})$, the sum is upper bounded by setting $K_H = K / b^{-i}$.
		\item If $K > O(b^{-i}) \cdot (\lambda \cdot b^{ai})$,  the sum is upper bounded by setting $K_H = \lambda \cdot b^{ai}$.
	\end{itemize}
	Then, we can bound the summation term in \cref{eq:update-complexity} by
	\begin{align*}
		&\phantom{{}={}} \sum_{\substack{0 \leq i \leq \eta \\ K \leq b^{-i}}} K (\lambda \cdot b^{ai})^2 + 
		\sum_{\substack{0 \leq i \leq \eta :\\ b^{-i} < K \leq \lambda \cdot b^{(a-1)i}}} (\lambda \cdot b^{ai})^2 \cdot b^{-i} \cdot (Kb^i)^{\omega-2} 
		+ \sum_{\substack{0 \leq i \leq \eta : \\ K > \lambda \cdot b^{(a-1)i}}} b^{-i} \cdot (\lambda \cdot b^{ai})^{\omega} \\
		&\leq 
		\lambda^2 K  \sum_{i= -\log_b K}^\eta b^{2ai} + 
		\lambda^2 K^{\omega-2}  \sum_{i=\frac{\log_b(K/\lambda)}{a-1}}^{-\log_b K} b^{(2a + \omega-3)i} 
		+ \lambda^\omega \sum_{i=0}^{\frac{\log_b(K/\lambda)}{a-1}} b^{(a \omega - 1) i}.
	\end{align*}
	We need to further consider different cases for the possible values of $K$, which affects the summation indices.
	If $\log_b(K/\lambda) < 0$, i.e.\ $K < \lambda$, the expression simplifies to
	\[
		\lambda^2 K^{1-2a} + \lambda^2 K^{\omega-2}.
	\]
	If $0 \leq \log_b(K/\lambda)/(a-1) \leq \eta$, i.e.\ $\lambda \leq K \leq \lambda \cdot b^{(a-1)\eta}$, the expression simplifies to
	\[
	\lambda^2 K^{1-2a} + \lambda^{\frac{\omega-1}{1 - \alpha}} K^{\frac{1-\alpha \omega}{1-\alpha}} 
	+  \lambda^\omega.
	\]
	And lastly, if $\log_b(K/\lambda)/(a-1) > \eta$, i.e\ $K > \lambda \cdot b^{(a-1) \eta}$, the expression simplifies to
	\[
	\lambda^\omega +  \lambda^\omega \cdot b^{(a\omega-1)\eta}.
	\]
	We combine the cases to arrive at the overall update complexity, having implicitly assumed that $\alpha < 1$.
	When $\alpha = 1$, the summation in \cref{eq:update-complexity} is maximized when $K_H = \min\{K, \lambda\} \cdot b^i$ for $H$ at level $i$. Then we can upper bound the summation term by
	\[
	\sum_{i=0}^\eta (\lambda \cdot b^{i})^2 \cdot b^{-i} \cdot (\min\{K, \lambda\}  \cdot b^i)^{\omega-2} \leq O(\lambda^2 \min\{K, \lambda\}^{\omega-2}).
	\]
\end{proof}

	\section{Proofs of main theorems}

For our main theorems, it remains to show that we can construct an appropriate $(a,b,\lambda)$-separator tree for $G_{\ma}$ in each of the scenarios:
when $G_{\ma}$ is $n^\alpha$-separable; when $\ma$ is the constraint matrix for a planar $k$-multicommodity flow instance; and when $G_{\ma}$ has a tree decomposition of width $\tau$.
Then, we apply \cref{lem:a-b-lambda-tree-complexity} to the separator tree get the complexity of the tree operator, which we combine with \cref{thm:ripm-main} to conclude the overall IPM running times.

\subsection{Proof of \cref{thm:main}}
First, we show how to construct a separator tree for an $n^\alpha$-separable graph, by modifying the proof from~\cite{Frederickson87}.

\begin{lemma} \label{lem:separable-graph-tree}
	Suppose $G_{\ma}$ is a graph on $n$ vertices and $m$ edges. 
	If $G_{\ma}$ is $n^{\alpha}$-separable for $\alpha < 1$, then $G_{\ma}$ admits an $(\alpha, b, c n^\alpha)$-separator tree, where $b \in (0,1)$ and $c > 0$ are some constants.
	Furthermore, if a balanced vertex separator for $G_{\ma}$ can be computed in $T(n)$ time, then the separator tree can be computed in $\O(T(n))$ time.
\end{lemma}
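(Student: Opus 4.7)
The plan is to build the separator tree top-down by repeated application of the balanced separator theorem from \ref{defn:separable-graph}, using a single weighting per node that simultaneously balances both the region size and the incoming boundary size. At a node $H$ with region of size $|V(H)|$ and boundary $\bdry{H}$, I would invoke the separator property with weights $w(v) = 1$ on the interior of $H$ and $w(v) = |V(H)|/|\bdry{H}|$ on $\bdry{H}$ (and just $w(v)=1$ uniformly if $\bdry{H}=\emptyset$). The total weight is $\Theta(|V(H)|)$, and the returned separator $\sep{H}$ of size $O(|V(H)|^\alpha)$ splits the remaining vertices into two parts each of weight at most $b' |V(H)|$ for the constant $b'$ from \ref{defn:separable-graph}. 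Reading off the two weight coordinates separately, each child of $H$ inherits at most a $b'$-fraction of the interior vertices and at most a $b'$-fraction of $|\bdry{H}|$. I would then take the two induced subgraphs, include $\sep{H}$ in both, partition the hyperedges incident to $\sep{H}$ arbitrarily between the two children, and stop the recursion once $|V(H)| = O(\rho)$.

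With this construction the three defining properties of an $(\alpha,b,cn^{\alpha})$-separator tree follow from short recurrences in the level $i$. Setting $b = b'$ (or a slightly larger constant in $(0,1)$), the region-size recurrence gives $|V(H)| = O(b^i n)$ at level $i$, hence $|\sep{H}| = O(|V(H)|^\alpha) = O(n^\alpha b^{\alpha i})$, and there are at most $O(b^{-i})$ nodes at level $i$. For the boundary, letting $\beta_i$ denote the maximum boundary size at level $i$ and using that a child's boundary is contained in $(\bdry{H} \cap A) \cup \sep{H}$ (with $A$ one of the two non-separator parts), the weighted-separator guarantee yields
\[
\beta_{i+1} \;\leq\; b\,\beta_i \,+\, O\!\left( (b^i n)^{\alpha}\right).
\]
Normalizing by $b^{(i+1)\alpha}$ and using $\alpha<1$ (so $b^{1-\alpha}<1$), the geometric series collapses to $\beta_i = O(n^\alpha b^{\alpha i})$, giving $|\skel{H}| \leq |\bdry{H}| + |\sep{H}| = O(n^\alpha b^{\alpha i})$ as required. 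Finally, the leaf condition $|V(H)| \leq O(\rho)$ holds by the stopping rule.

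For the runtime, the work at each node $H$ is dominated by one call to the balanced-separator oracle, costing $T(|V(H)|)$. Summing at a fixed level, and using that $T$ may be assumed superlinear and at-most-polynomial (so $k\,T(n/k) \leq T(n)$ up to constants), gives
\[
\sum_{H \in \cs(i)} T(|V(H)|) \;\leq\; O(b^{-i}) \cdot T\!\left(O(b^i n)\right) \;=\; O(T(n)).
\]
Summing over the $O(\log n)$ levels yields $\widetilde{O}(T(n))$ in total.

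The main obstacle I expect is making the simultaneous two-weight balance argument airtight: one must check that a \emph{single} invocation of the separator property against the combined weight $w$ really controls both the interior and boundary components of each child (as opposed to the combined weight leaking all of one type into one side), and that the constants $b$ and $c$ absorbed in the $O(\cdot)$'s are indeed absolute and independent of $i$. Handling the degenerate case $\bdry{H}=\emptyset$ (which occurs at the root) and verifying that the stopping rule at $|V(H)|=O(\rho)$ does not spoil the level count are both straightforward but need to be written out. Once these bookkeeping points are handled, the rest of the proof is a direct application of the balanced separator theorem and the two recurrences above.
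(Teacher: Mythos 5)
There is a genuine gap, and it is exactly the point you flag as your "main obstacle": the single combined weighting does not give per-class balance. Definition~\ref{defn:separable-graph} only guarantees $\max\{w(A),w(B)\}\le b'\,w(V)$ for the \emph{total} weight. With your choice (weight $1$ on interior vertices, $|V(H)|/|\bdry{H}|$ on boundary vertices) the total weight is $\approx 2|V(H)|$, so all you can "read off" for the boundary class is $|\bdry{H}\cap A|\le 2b'|\bdry{H}|$, which is vacuous whenever $b'\ge 1/2$ -- and $b'$ is just some constant in $(0,1)$ handed to you by separability. A perfectly legal separator could put the entire boundary on one side while staying balanced in combined weight. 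Consequently the recurrence $\beta_{i+1}\le b\,\beta_i+O((b^in)^{\alpha})$ is not established; what you actually get is $\beta_{i+1}\le \beta_i+O((b^in)^{\alpha})$, which only yields $\beta_i=O(n^{\alpha})$ with no decay in $i$, violating the requirement $|\skel{H}|\le O(n^{\alpha}b^{\alpha i})$ at level $i$. (If you additionally assumed $b'<1/2$, or boosted the balance by a constant number of recursive splits, the shape of your argument would close, but as written the claim "each child inherits at most a $b'$-fraction of each class" does not follow from the definition.)

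The paper circumvents this with two \emph{separate} separator invocations at a node when needed: first a separator balanced with respect to vertex count (unit weights), which controls $|V(H_j)|$; then, only if a child's boundary exceeds $c|V(H_j)|^{\alpha}$, a second separator balanced with respect to the boundary vertices alone (weight $1$ on $\bdry{H_j}$, $0$ elsewhere). The separators are unioned, the node gets up to four children, and the constant $c$ is chosen large enough that $b^{1-\alpha}\frac{c+1}{c}+\frac1c<1$ closes the boundary recurrence. Two secondary points: your bound of $O(b^{-i})$ nodes per level is asserted but needs an argument, since regions overlap on separator vertices (the paper proves it via a leaf-counting recursion with the separator sizes charged explicitly); and the runtime claim needs the paper's (or your stated) mild regularity assumption on $T$, which is fine. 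The core missing idea, though, is the second, boundary-weighted separator per node.
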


\begin{proof}
	Let $b' \in (0,1)$ and $c'=1$ (without loss of generality) be the parameters for $G_{\ma}$ being $n^\alpha$-separable.
	In the separator tree construction process, 
	assume inductively that we have constants $b \in (0,1)$ and $c > 0$, both to be chosen later, 
	such that for any node $H$ at level $i$, we have $|V(H)| \leq b^i n$ and $|\bdry{H}| \leq c n^\alpha \cdot b^{\alpha i}$.
	In the base case at the root node, we have $i=0$, and $|V(G_{\ma})| \leq n$ and $|\bdry{G_{\ma}}| = 0 \leq c n^\alpha$.
	
	We show how to construct the nodes at level $i+1$. Let $H$ be an already-constructed node at level $i$. There are three cases:
	\begin{enumerate}
		\item If $H$ satisfies $|V(H)| \leq b^{i+1} n$ and $|\bdry{H}| \leq cn^\alpha \cdot b^{\alpha (i+1)}$, put a copy of $H$ as its only child at level $i+1$.
		\item If $|V(H)| \geq b^{i+1} n$, then assign a weight of 1 to all vertices, find a balanced vertex separator $S(H)$, and partition $H$ accordingly into $H_1$ and $H_2$. Let us consider $H_1$; the analogous holds for $H_2$.
		
		By definition of separability, we know $|V(H_1)| \leq b' \cdot |V(H)| + |V(H)|^\alpha \leq b \cdot |V(H)| \leq {b}^{i+1} n$ as long as $b \in (b',1)$.
		If $|\bdry{H_1}| \leq c \cdot |V(H_1)|^\alpha$, then we can upper bound this expression by $cn^\alpha \cdot {b}^{\alpha (i+1)}$, and we are done.
		
		On the other hand, if $|\bdry{H_1}| > c \cdot |V(H_1)|^\alpha$, 
		then by definition of boundary, we have $|\bdry{H_1}| \leq |\bdry{H}| + |S(H)| \leq (c + 1)n^\alpha \cdot b^{\alpha i}$ using the guarantees at $H$.
		Next, we assign a weight of 1 to vertices in $\bdry{H_1}$ and 0 to all other vertices, find a balanced separator $S(H_1)$ of $H_1$ with respect to these weights, and create two children $D_1, D_2$ of $H_1$ accordingly.
		Then, for $j=1,2$, we have
		\begin{align*}
			|\bdry{D_j}| &\leq b \cdot |\bdry{H_1}| + |V(H_1)|^\alpha \\
			&\leq b(c+1) n^\alpha \cdot b^{\alpha i} + n^\alpha \cdot b^{\alpha (i+1)} \\
			&\leq \left(b^{1-\alpha} \cdot \frac{c+1}{c} + \frac{1}{c} \right) \cdot cn^\alpha \cdot b^{\alpha(i+1)},
		\end{align*}
		As long as $c$ is large enough so the expression in the parentheses to be less than 1.
		In this case, observe that we can add $S(H_1)$ to the balanced separator $S(H)$, and set $D_1$ and $D_2$ directly as the children of $H$.
		
		\item If $|V(H)| \leq b^{(i+1)n}$ and $|\bdry{H}| \geq cn^\alpha \cdot b^{\alpha (i+1)}$, then we apply case 2 with $H_1$ being $H$.
	\end{enumerate}
	
	So we have shown inductively that at the end of this construction, any node $H$ at level $i$ satisfies $|V(H)| \leq b^i n$ and $|\bdry{H}| \leq c n^\alpha \cdot b^{\alpha i}$.
	It follows that $|\skel{H}| \leq |S(H)| + |\bdry{H}| = O(c n^\alpha \cdot b^{\alpha i})$.
	
	Next, we show that there are only $O(b^{-i})$ nodes at level $i$.
	Let $L_i(n)$ denote the total number of boundary vertices with multiplicities, when carrying out the construction starting on a graph of size $n$ and ending when each leaf node $H$ satisfies the level-$i$ assumptions. We can recursively write
	\begin{alignat*}{3}
		L_i(k) &= \sum_{j=1}^4 L_i(b_j k + 3ck^\alpha), \qquad&&\text{if } k > C b^{i}n \\
		B_i(k) &= 1 &&\text{else.}
	\end{alignat*}
	where $\sum b_j = 1$, each $b_j \leq b'$, and $C$ is a positive constant we choose. To see this, note that a node of size $k$ has at most four children in the construction; the separator is of size $3ck^\alpha$ since we may need to compute up to three separators each of size $ck^\alpha$ and take their union; and child $j$ has at most $b_j k$ vertices that are not from the separator. 
	Solving the recursion yields $L_i(k) \leq k/(Cb^in) - \gamma k^\alpha$ for some constant $\gamma > 0$.
	Therefore, there are at most $L_i(n) \leq O(b^{-i})$ nodes at level $i$.
	
	Finally, it is straightforward to see that the separator tree can be computed in $\O(T(n))$ time, since the node sizes decrease by a geometric factor as we proceed down the tree during construction.
\end{proof}

\begin{proof}[Proof of \cref{thm:main}]
	We consider the cases when $\alpha = 1$ and $\alpha < 1$ separately.
	
	All hypergraphs are trivially $n$-separable with max hyperedge size $\rho = n$. In this case, let $\cs$ be the separator tree consisting of simply one node representing $G_{\ma}$, which is a $(1,1/2,n)$-separator tree.
	By \cref{lem:a-b-lambda-tree-complexity}, the tree operator data structure can be initialized in $O(m^\omega)$ time; the query complexity is $Q(K) = O(nK + n^2)$, and the update complexity is $U(K) = O(n^{\omega-1}K + n^2 K^{\omega-2})$.
	
	We apply \cref{thm:ripm-main} to get the overall runtime:
	\[
		\O \left(\sqrt{m} \log( \frac{R}{\eps r}) \cdot \sum_{\ell=0}^{\frac12 \log m} \frac{n 2^{2\ell} + n^2 + n^2 2^{2\ell (\omega-2)}}{2^\ell} \right) =  \O \left(\sqrt{m} n^2 \log( \frac{R}{\eps r}) \right).
	\]
	If $G_\ma$ is $n^\alpha$-separable for $\alpha < 1$, then by \cref{lem:separable-graph-tree}, $G_{\ma}$ admits a $(\alpha,b, cn^\alpha)$-separator tree computable in $\O(n)$ time.
	In this case, $\rho = O(1)$, and $\eta = O(\log_{1/b} n)$.
	Plugging the parameters into \cref{lem:a-b-lambda-tree-complexity}, we get the following  tree operator runtimes:
	
		\begin{itemize}
		\item The data structure can be initialize in $O \left(m +  n^{\alpha \omega} (1 + n^{1-\alpha \omega})\right) \leq O(m + m^{\alpha \omega})$ time. 
		\item The query complexity is $Q(K) \leq O(K + n^{2 \alpha} (1 + K^{1 - 2\alpha}))$.
		\item The update complexity is 
		\[
		U(K) \leq O \left( K + 
			n^{2\alpha} \min\{K, n^{\alpha}\}^{\omega-2} +
				n^{2\alpha} K^{1-2\alpha} + 
				n^{\frac{\alpha (\omega-1)}{1-\alpha}} K^\frac{1-\alpha\omega}{1-\alpha} \cdot \mathbbm{1}_{K \geq n^{\alpha}} \right).
			\]
		\end{itemize}
		We apply \cref{thm:ripm-main} to get the overall runtime.
		 \begin{align*}
		 	&\phantom{{}={}} \O \left(
		 	\sqrt{m} \log(\frac{R}{\eps r}) \right) \cdot 
		 	\sum_{\ell=0}^{\frac12 \log m} 
		 	\frac{ 2^{2\ell} + 
		 		n^{2\alpha} + 
		 		n^{\alpha \omega} \cdot \mathbbm{1}_{2^{2\ell} > n^\alpha} + 
		 		n^{2\alpha} 2^{(1-2\alpha)2\ell} + 
		 		n^{\frac{\alpha (\omega-1)}{1-\alpha}} 2^{\frac{1-\alpha\omega}{1-\alpha} 2\ell} \cdot \mathbbm{1}_{2^{2\ell} > n^\alpha} 
	 		}{2^\ell}
		 	\\
		 	&= \O \left(\sqrt{m} \log( \frac{R}{\eps r}) \right) \cdot 
		 	\left(\sqrt{m} + n^{2\alpha} + 
		 	n^{\alpha \omega - \frac{\alpha}{2}} + 
		 	n^{2\alpha} m^{1-2\alpha - \frac12} + 
		 	n^{\frac{\alpha (\omega-1)}{1-\alpha}} \left( n^{\frac{\alpha(1-\alpha \omega)}{1-\alpha}-\frac{\alpha}2} + m^{\frac{1-\alpha\omega}{1-\alpha} -\frac12} \right) \right) \\
		 	&=  \O \left( \left( m + m^{\frac12 + 2\alpha} \right) \cdot \log( \frac{R}{\eps r}) \right),
		 \end{align*}
	 	where in the last step, we used the fact $\alpha \omega - \frac{\alpha}{2} \leq 2 \alpha$.
\end{proof}

\subsection{Proof of \cref{thm:k-multicommodity-flow}}

Let $G = (V,E)$ denote the planar graph for the original problem, with
$V = \{v_1, \dots, v_n\}$ and $E = \{e_1, \dots, e_m\}$.
First, we write the LP in \cref{eq:k-commodity-LP} in standard form by adding slack variables $\vs \in \R^E$:
\begin{equation} \label{eq:k-mflow-standard-form} \tag{$P'$}
	\begin{split}
	\min \; \sum_{i=1}^k \; & \vc_i^\top \vf_i \\
	\text{s.t} \qquad\quad \mb^\top \vf_i &= \vd_i \qquad \forall i \in [k] \\
	\sum_{i=1}^k \vf_i + \vs &= \vu \\ 
	\vf_i &\geq \vzero \qquad\; \forall i \in [k]\\
	\vs &\geq \vzero
	\end{split}
\end{equation}

Let $\ma$ denote the full constraint matrix of \ref{eq:k-mflow-standard-form}. Then 
\begin{equation} 
\ma =
	\left[
	\begin{array}{cccc|c}
		\mb^\top & \mzero & \cdots & \mzero & \mzero\\
		\mzero & \mb^\top && \vdots & \vdots \\
		\; &\; & \ddots & & \\
		\mzero & \mzero & \cdots & \mb^\top & \mzero\\
		\hline
		\mi & \mi & \cdots & \mi & \mi
	\end{array}
	\right] \in \R^{(kn+m) \times (k+1)m}
\end{equation}
where the top left part of $\ma$ contains $k$ copies of $\mb^\top$ in  block-diagonal fashion, and all the identity matrices are of dimension $m \times m$.
The dual graph of $\mb^\top$ is precisely $G$.
Let $G_\ma$ be the dual graph of $\ma$. 

First, we describe $G_\ma$: 
It contains $k$ independent copies of the vertices $V$,
which we label with $V^i = (v^i_1, \dots, v^i_n)$,
so that $v^i_j$ is a copy of $v_j \in V$.
Additionally, $G_\ma$ contains $m$ vertices $u_1, \dots, u_m$, where the vertex $u_i$ is identified with edge $e_i \in E$.
For each edge $e_i \in E$ with endpoints $v_{i_1}, v_{i_2}$, there are $k$ hyper-edges in $G_\ma$ 
of the form $\{v_{i_1}^\ell, v_{i_2}^\ell, u_i\}$ for $\ell = 1, \dots, k$. Additionally, there are $m$ hyper-edges $f_1, \dots, f_m$ where $f_i$ contains only the vertex $u_i$.

Next, we show how to construct an appropriate separator tree efficiently.

\begin{claim}
	$G_{\ma}$ admits a $(\frac{1}{2}, b, kn^{1/2})$-separator tree that can be computed in $O(kn \log n)$ time.
\end{claim}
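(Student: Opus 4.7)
The plan is to build the separator tree for $G_{\ma}$ by reducing to the planar separator theorem. The key observation is that $G_{\ma}$ has a product-like structure: $k$ copies of $V(G)$ linked through slack vertices $u_i$ placed on the edges of $G$. A balanced separator of size $s$ in an appropriate planar graph derived from $G$ therefore lifts to a balanced separator of size $O(ks)$ in $G_{\ma}$, giving the desired $\lambda = kn^{1/2}$ scaling.

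First, I would build a $(1/2, b, cn^{1/2})$-separator tree $\cs_{G'}$ of the subdivided planar graph $G'$, obtained from $G$ by inserting $u_i$ as a degree-two vertex on each edge $e_i$. This is constructed via \cref{lem:separable-graph-tree}, but using a weighted planar separator at each level with weights $w(v_j) = k$ for $v_j \in V(G)$ and $w(u_i) = 1$ for slack vertices. These weights are designed so that $w(V(G')) = kn + m = |V(G_{\ma})|$, making balance in $G'$ correspond exactly to balance in $G_{\ma}$'s vertex count and yielding a balance constant $b < 1$ independent of $k$. Construction takes $O(n \log n)$ time.

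Next, I would lift the tree commodity-by-commodity: each node $H' \in \cs_{G'}$ becomes a node $H \in \cs_{G_{\ma}}$ whose vertex set contains $k$ copies of each $V(G)$-vertex and one copy of each slack vertex in $V(H')$. The separator at $H$ lifts $S(H')$ analogously, giving size $\leq k|S(H')| = O(kn^{1/2} b^{i/2})$ at level $i$, matching the required skeleton bound. Since the separator-tree definition requires leaves of size $O(\rho) = O(1)$, I would append a second phase: under each lifted leaf, construct a sub-tree of height $O(\log k)$ that recursively halves the commodity index set, with the involved slack vertices placed into the separator only at the root of the sub-tree (deeper splits need no new separator vertices, since distinct commodity copies share no vertex within a leaf region). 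A routine calculation shows this satisfies both the $O(b^{-i})$ node count and the $O(kn^{1/2} b^{i/2})$ skeleton bound at every level, with total height $O(\log(kn))$. The entire construction runs in $O(kn \log n)$ time, dominated by the $O(k)$-factor lifting cost of $\cs_{G'}$.

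The main obstacle is that a lifted separator is not automatically a valid hypergraph separator of $G_{\ma}$: a hyperedge $\{v_{i_1}^\ell, v_{i_2}^\ell, u_i\}$ straddles the partition whenever $v_{i_1}$ and $v_{i_2}$ land on opposite sides in $G'$, which by the separator property of $G'$ forces $u_i \in S(H')$. I would repair this by adding one endpoint of each such edge into $S(H')$, at most doubling $|S(H')|$ but preserving its $O(\sqrt{n})$ asymptotic. Verifying that this correction does not cascade through levels so that all $(1/2, b, kn^{1/2})$-parameters survive requires a careful level-by-level induction analogous to the proof of \cref{lem:separable-graph-tree}, but introduces no genuinely new ideas.
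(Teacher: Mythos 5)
Your construction reaches the same conclusion as the paper but by a genuinely different route. The paper never subdivides: it takes the $(\frac12,b,n^{1/2})$-separator tree of the planar graph $G$ itself (via \cref{lem:separable-graph-tree}) and lifts each node by making $k$ copies of its vertices and adding $u_j$ for every edge $e_j$ contained in the region, putting $u_j$ into the separator only when both endpoints of $e_j$ lie in the separator. Since a vertex separator of $G$ already forces every edge of $G$ --- and hence every hyperedge $\{v^{\ell}_{i_1}, v^{\ell}_{i_2}, u_i\}$ of $G_{\ma}$ --- to lie entirely on one side together with the separator, the straddling problem you flag as the ``main obstacle'' never arises, so no repair step and no cascading re-induction are needed. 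Your detour through the subdivided graph $G'$ creates that problem and then patches it; the patch is sound in the way you describe (each bad edge contributes its own subdivision vertex to $S'$, so at most $|S'|$ endpoints get added), but it buys nothing over separating $G$ directly. On the other hand, two of your refinements are more careful than the paper: the $k$-versus-$1$ vertex weighting makes balance of the lifted tree immediate (the paper argues this only loosely via $|E(\tilde H)| = O(|V(\tilde H)|)$ for planar $\tilde H$), and your commodity-splitting phase addresses the leaf condition $|V(H)| \leq O(\rho)$ in the $(a,b,\lambda)$-definition, which the paper silently leaves unaddressed (its leaves have size $O(k)$). One caveat on that phase: with binary halving, level $i_0 + t$ contains $\Theta(2^t b^{-i_0})$ nodes, which violates the $O(b^{-i})$ node-count property whenever the balance constant satisfies $b > 1/2$ (e.g.\ the usual planar $b = 2/3$); splitting the commodity set into $\lceil 1/b \rceil$ parts per level instead repairs this, so your claim survives, but the calculation is not as routine as stated.
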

\begin{proof}
	Let $G$ be the original planar graph which is $\sqrt{n}$-separable, and let $\tilde \cs$ be the $(\frac12, b, n^{1/2})$-separator tree for $G$ constructed using \cref{lem:separable-graph-tree} in $O(n \log n)$ time by \cite{lipton1979generalized}.
	We show how to construct a $(\frac12, b, kn^{1/2})$-separator tree $\cs$ for $G_{\ma}$ based on $\tilde\cs$. Without loss of generality, we ignore the hyper-edges $f_1, \dots f_m$ in this construction.
	
	Intuitively, $\cs$ will have the same tree structure as $\tilde\cs$, but each node will be larger by a factor of $O(k)$ due to the $k$ copies of $G$ in $G_{\ma}$.
	For each $\tilde H \in \tilde \cs$, we construct a corresponding $H \in \cs$ as follows:
	if $v_j \in \tilde H$, then $v^i_j \in H$ for all $i \in [k]$;
	if $e_j \in E(\tilde H)$, i.e. both endpoints of $e_j$ are in $\tilde H$, add $u_j$ to $H$.
	Since the $k$ copies $v^1_j, \dots, v^k_j$ are always grouped together, we will refer to them together as $v_j$ in $G_{\ma}$ as well.
	
	Let us show that this is indeed a $(\frac12, b, kn^{1/2})$-separator tree.
	Suppose $H$ is a node with children $D_1$ and $D_2$ in $\cs$, corresponding to nodes $\tilde H, \tilde D_1, \tilde D_2$ in $\tilde \cs$. 
	Let $S(H) \defeq V(D_1) \cap V(D_2)$, then $v_j \in S(H)$ iff $v_j \in S(\tilde H)$, and $u_j \in S(H)$ iff $e_j \in E(S(\tilde H))$ for all values of $j$.
	It is straightforward to see that $S(H)$ is indeed a separator of $H$.
	When it comes to the set of boundary vertices, 
	we see $v_j \in \bdry{H}$ iff $v_j \in \bdry{\tilde{H}}$, and
	if $u_j \in \bdry{H}$ with $v_{j_1}, v_{j_2}$ being the two endpoints of $e_j$, then $v_{j_1}, v_{j_2}$ are both in $\bdry{H}$.
	Since $G$ is a planar graph, the number of edges in $\tilde H$ is on the same order as the number of vertices, so we conclude that $|V(H)| \leq O(k) \cdot |V(\tilde H)|$, and similarly, $|\skel{H}| \leq O(k) \cdot |\skel{\tilde H}|$.
	Since node sizes in $\cs$ have increased by a factor of $O(k)$ compared to $\tilde \cs$, we conclude $\cs$ is a $(\frac12, b, kn^{1/2})$-separator tree.

	Finally, we can compute $\tilde \cs$ for $G$ in $O(n \log n)$ time, so we can compute $\cs$ in $O(kn \log n)$ time.
\end{proof}

We reduce our problem to minimum cost multi-commodity circulation problem in order to establish the existence of an interior point in the polytope, before invoking the RIPM in \cref{thm:IPM}.
For each commodity $i\in [k]$, we add extra vertices $s_i$ and $t_i$. Let $\boldsymbol{d}_i$ be the demand vector of the $i$-th commodity. For every vertex $v$ with $\boldsymbol{d}_{i,v}<0$, we add a directed edge from $s_i$ to $v$ with capacity $-\boldsymbol{d}_{i,v}$ and cost 0. 
For every vertex $v$ with $\boldsymbol{d}_{i,v}>0$, we add a directed edge from $v$ to $t_i$ with capacity $\boldsymbol{d}_{i,v}$ and cost 0. Then, we add a directed edge from $t_i$ to $s_i$ with capacity $4 k m M$ and cost $-4 k m M$. The modified graph $G'$ has only $2k$ extra vertices of the form $s_i$ and $t_i$ compared to $G_\ma$, so we can construct a $(\frac{1}{2}, b, kn^{1/2} + 2k)$-separator tree for $G'$ based on the $(\frac12, kn^{1/2})$-separator tree for $G_{\ma}$, where we include the extra vertices at every node of the tree.

To show the existence of the interior point, we remove all the directed edges that no single commodity flow from $s_i$ to $t_i$ can pass for any $i\in[k]$. This can be done by run BFS for $k$ times which takes $O(km)$ time.
For the interior point $\vf$, we construct this finding a circulation $\vf^{(e)}$ that passing through $e$ and $s_i$, $t_i$ for some $i$ with flow value $1/(10km)$ for all the remaining edge $e$. Then, since the capacities are integers, we find a feasible $\vf,\vs$ with value at least $1/(10km)$. This shows the inner radius $r$ of the polytope is at least $1/(10km)$.
For the $L$ and $R$, we note we can bound it by $O(kmM)$.

Let $\ma'$ be the constraint matrix of the reduced problem with dual graph $G'$.
The RIPM in \cref{thm:IPM} invokes the subroutine \textsc{Solve} twice. In the first run, we make a new constraint matrix by concatenating $\ma'$ three times. One can check that the dual graph is $G'$ with each edge duplicated three times, so the corresponding separator tree is straightforward to construct.

Now, we bounding the running time. The tree operator complexities are similar to the analysis in the previous section with an additional factor of $k$ in the expression for $\lambda$. 
The initialization time is $O(km +(kn^{1/2})^{\omega})$.
The query complexity is $Q(K) = O(K + k^2 n)$.
After simplifying, the update complexity is 
\[
U(K) = K + 
\begin{cases}
	k^2 n K^{\omega-2} &\text{if } K \leq kn^{1/2} \\
	(k n^{1/2})^{\omega} &\text{else}.
\end{cases}
\]
Note that the number of variables is $km$. Plugging our choice of $L$, $R$, and $r$, by \cref{thm:ripm-main}, the total runtime simplifies to 
\[
\O \left(k^{2.5}m^{1.5}\log(M/\eps)\right).
\]

\subsection{Proof of \cref{thm:treewidth}}

		First, we show how to construct a $(0, 1/2, O(\tau \log n))$-separator tree $\cs$ for $G_{\ma}$ when we have a tree decomposition of $G_{\ma}$ of width $\tau$.
		At the root of $\cs$, we can use the tree decomposition to compute a balanced separator $S$ of $G_{\ma}$ of size $O(\tau)$ in $\O(n \tau)$ time (c.f.~\cite[Theorem 4.17]{treeLP}), so that the two parts $A$ and $B$ of $G_{\ma} \setminus S$ each have size at most $\frac23 n$.
		We construct two children of the root node on the vertex sets $A \cup S$ and $B \cup S$ respectively, and apply this procedure recursively until the nodes are of size at most $9\tau$. 
		
		\begin{claim}
			There are $O(n/\tau)$-many leaves at the end of this construction. 
		\end{claim}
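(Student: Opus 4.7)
The plan is to combine a lower bound on the size of each non-root leaf with a nearly-matching upper bound on the total size summed across leaves, i.e.\ to run a simple potential argument.

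First I would observe that whenever we split an internal node of size $s > 9\tau$, the separator $S$ has size at most $c\tau$ (for a small constant $c$ from \cite[Theorem 4.17]{treeLP}) and the two sides $A,B$ of $V(H)\setminus S$ satisfy $|A|,|B|\le \tfrac{2}{3}s$. Since $|A|+|B|+|S|=s$, this also forces $|A|,|B|\ge \tfrac{1}{3}(s-|S|)$, so each child has vertex set of size at least $\tfrac{s}{3}+\tfrac{2}{3}|S|\ge \tfrac{s}{3}>3\tau$. In particular, every leaf other than possibly the root itself contains at least $3\tau$ vertices.

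Next I would track the potential $T \defeq \sum_{H\text{ current leaf}} |V(H)|$ as the construction unfolds. Initially $T=n$. When an internal node of size $s$ is replaced by its two children $A\cup S$ and $B\cup S$, its contribution to $T$ changes from $s$ to $(|A|+|S|)+(|B|+|S|)=s+|S|$, so $T$ increases by exactly $|S|\le c\tau$ per split. After $L-1$ splits we have $L$ leaves and
\[
T \;\le\; n + (L-1)\,c\tau.
\]
On the other hand, from the lower bound on non-root leaf sizes we get $T\ge 3\tau\cdot L$ (the case $n\le 9\tau$ is trivial since then $L=1$). Combining these two inequalities yields $(3-c)\tau\, L \le n$, hence $L = O(n/\tau)$.

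The only subtlety is arithmetic: the constant $c$ in the separator size must be strictly less than $3$ for the two estimates to combine. For tree decompositions this is immediate since a bag is itself a separator and $|S|\le \tau+1$. If a more lossy separator subroutine were used, one simply raises the stopping threshold from $9\tau$ to $k\tau$ for any $k>3c$; the same two inequalities then give $(\tfrac{k}{3}-c)\tau L\le n$, which is still $L=O(n/\tau)$. I do not anticipate any other obstacle.
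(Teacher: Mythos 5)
Your proof is correct, and it takes a mildly different route from the paper's. The paper sets up the recurrence $L(k)=L(k_1+\tau)+L(k_2+\tau)$ with $k_1+k_2+\tau=k$, $k_1,k_2\le\tfrac23 k$, and proves $L(k)\le 2(k/\tau-1)$ by induction, with the remark that balancedness keeps the recursion away from the regime $k\le 2\tau$. You instead run a global double-counting/potential argument on $\sum_{\text{leaf}}|V(H)|$: each split duplicates only the separator, so the total grows by at most $O(\tau)$ per split, while every non-root leaf retains at least a third of its parent's vertices and hence has size $>3\tau$; comparing the two bounds gives $L=O(n/\tau)$. These are the same two structural facts the paper's induction uses (the $\Omega(\tau)$ lower bound on leaf sizes is exactly what "not reaching the base case $k\le 2\tau$" encodes), but your packaging is amortized rather than recursive, and it has the virtue of making explicit how the constant in the separator size interacts with the stopping threshold $9\tau$ — a dependence the paper glosses over by taking the separator to have size exactly $\tau$ in its recurrence. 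Since the separator here is (essentially) a bag of the width-$\tau$ tree decomposition, your requirement $c<3$ is satisfied, and your fallback of enlarging the threshold covers any larger constant.

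One small inaccuracy that does not affect the argument: from the paper's balance guarantee $|A|,|B|\le\tfrac23 s$ you get $|A|\ge \tfrac{s}{3}-|S|$, not $|A|\ge\tfrac13(s-|S|)$; but the bound you actually need, namely that each child $A\cup S$ has size $s-|B|\ge s/3>3\tau$, follows immediately either way.
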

		\begin{proof}
			Let $L(k)$ denote the number of leaves when starting the construction with a size $k$ subgraph.
			We know $L(k) = 1$ if $k \leq 9\tau$, and $L(k) = L(k_1 + \tau) + L(k_2 + \tau)$
			if $k > 9\tau$, where $k_1 + k_2 + \tau = k \text{ and } k_1, k_2 \leq 2/3k$.
			By induction, we can show that $L(k) \leq 2(k/\tau - 1)$ when $k > 2\tau$, where the balanced separator crucially ensures that the recursion does not reach the base case of $k \leq 2\tau$.
		\end{proof}
		
		The resulting separator tree is binary, so there are at most $2^i$ nodes at level $i$.
		Since there are $L = O(n/\tau)$-many leaves, the height $\eta$ is at most $\eta \leq \log_2(n/\tau)$.
		The boundary of a node $H$ is contained in the union of balanced separators over its ancestors, so $|\skel{H}| \leq \tau \eta \leq O(\tau \log n)$.
		The max hyperedge size of $G_{\ma}$ is $\rho = \tau$.
	
		Using these values, we simplify the complexities in \cref{lem:a-b-lambda-tree-complexity}:
		The initialization time for the tree operator data structure is
		$\O \left(\tau^{\omega-1} m + \tau^\omega \left(1 + n/\tau \right) \right) = \O(\tau^{\omega-1} m)$.
		The query complexity of $\treeop$ is $Q(K) = \O \left(\tau K + \tau^2 \min\{K,L\} \right)$.
		The update complexity of $\treeop$ is 
		\[
		U(K) \leq \tau^{\omega-1} K +
		\begin{cases}
			\tau^2 K &\text{if $K \leq n$} \\
			\tau^{\omega} &\text{if $K > n$} 
		\end{cases}
		\]
		
		Finally, we apply \cref{thm:ripm-main} to get the overall runtime, which is clearly bounded by
		\begin{align*}
			\O \left(\sqrt{m} \log(\frac{R}{\eps r}) \right) \cdot 
			\sum_{\ell=0}^{\frac12 \log m} 
			\frac{\tau^2 2^{2\ell}}{2^\ell}=  \O \left( m \tau^2 \log(R/(\eps r)) \right).
		\end{align*}
		
		To obtain the faster runtime given in \cite{GS22}, we use the data structure restarting trick:
		Recall \textsc{MaintainApprox} guarantees there are $2^{2\ell}$-many coordinate updates to $\ox$ and $\os$ every $2^\ell$ steps, i.e.\ the number of coordinate updates grows superlinearly with respect to the total number of steps taken. 
		By reinitializing \textsc{MaintainApprox} with the exact solution once in a while, we limit the total number of coordinate updates.
		In the proof of \cref{thm:ripm-main}, we showed that running $M$ steps of the RIPM takes
		\[
		\O \left( U(m) + Q(m) + \eta^4 M \log( \frac{R}{\eps r}) \cdot \sum_{\ell=0}^{\log M} \frac{U(2^{2\ell}) + Q(2^{2\ell})}{2^\ell} \right)
		\]
		time,
		where $U(m) + Q(m)$ is the time to initialize the data structures and obtain the final exact solutions.
		There are $N=\sqrt{m}\log m\log(\frac{mR}{\eps r})$-many total IPM steps, and we reinitialize the data structures every $M$ steps. Then the total running time is (ignoring the big-O notation and log factors)
		\begin{align*}
			&\phantom{{}={}} \frac{N}{M} \left(U(m) + Q(m) + M \sum_{\ell=0}^{\log M} \frac{U(2^{2\ell}) + Q(2^{2\ell})}{2^\ell} \right) \\
			&= \frac{\sqrt{m}}{M} \left( \tau^{\omega-1} m + \tau^2 M^2 \right).
		\end{align*}
		The expression is minimized by taking $M = \sqrt{m} \tau^{\frac{\omega-3}{2}}$,
		which gives an overall runtime of
		\[
		\O \left( m\tau^{(\omega+1)/2} \log(R/(\eps r)) \right).
		\]
		\qed

	\subsection*{Acknowledgements}
	We thank the anonymous reviewers for helpful feedback.
	
	\bibliographystyle{alpha}
	\bibliography{main}
	
	\appendix
	
\section{Robust interior point method\label{sec:IPM}}

For completeness, we include the robust interior point method from \cite{dong2022nested}, 
developed in \cite{treeLPArxivV2}, which is a refinement of the methods in
\cite{CohenLS21, van2020deterministic}.
Although there are many other robust interior point methods, we simply
refer to this method as RIPM. Consider a linear program of the
form
\begin{equation}
	\min_{\vx\in\mathcal{P}} \vc^{\top}\vx\quad\text{where}\quad\mathcal{P}=\{\ma \vx=\vb,\; \vl\leq\vx\leq\vu\}\label{eq:LP}
\end{equation}
for some matrix $\ma\in\mathbb{R}^{n \times m}$. 
As with many other IPMs, RIPM follows the central path $\vx(t)$ from an interior point
($t\gg0$) to the optimal solution ($t=0$):
\[
\vx(t)\defeq\arg\min_{\vx\in\mathcal{P}}\vc^{\top}\vx-t\phi(\vx)\quad\text{where }\phi(\vx)\defeq-\sum_{i}\log(\vx_{i}-\vl_{i})-\sum_{i}\log(\vu_{i}-\vx_{i}),
\]
where the term $\phi$ controls how close the solution $\vx_{i}$ can be 
to the constraints $\vu_{i}$ and $\vl_{i}$.
Following the central path exactly is expensive. Instead, RIPM
maintains feasible primal and dual solution
$(\vx, \vs) \in \mathcal{P} \times \mathcal{S}$, where $\mathcal{S}$
is the dual space given by
$\mathcal{S} = \{\vs: \ma^{\top} \vy+\vs=\vc\text{ for some }\vy\}$, and
ensures $\vx(t)$ is an approximate minimizer.
Specifically, the optimality condition for $\vx(t)$ is given by
\begin{align}
	\mu^{t}(\vx,\vs)& \defeq\vs/t+\nabla\phi(\vx) = \vzero\label{eq:mu_t_def}\\
	(\vx,\vs) &\in \mathcal{P} \times\mathcal{S}\nonumber 
\end{align}
where $\mu^{t}(\vx,\vs)$ measures how close $\vx$ is to the minimizer $\vx(t)$.
RIPM maintains $(\vx,\vs)$ such that 
\begin{equation}
	\|\gamma^{t}(\vx,\vs)\|_{\infty}\leq\frac{1}{C\log m}\text{ where }\gamma^{t}(\vx,\vs)_{i}=\frac{\mu^{t}(\vx,\vs)_{i}}{(\nabla^{2}\phi (\vx))_{ii}^{1/2}}, \label{eq:gamma_t_def}
\end{equation}
for some universal constant $C$. The normalization term
$(\nabla^{2}\phi)_{ii}^{1/2}$ makes the centrality measure
$\|\gamma^{t}(\vx,\vs)\|_{\infty}$ scale-invariant in $\vl$ and $\vu$.

The key subroutine $\textsc{Solve}$ takes as input a point close
to the central path
$(\vx(t_{\mathrm{start}}),\vs(t_{\mathrm{start}}))$, and outputs
another point on the central path
$(\vx(t_{\mathrm{end}}),\vs(t_{\mathrm{end}}))$.  Each step of the
subroutine decreases $t$ by a multiplicative factor of
$(1-\frac{1}{\sqrt{m}\log m})$ and moves $(\vx,\vs)$ within
$\mathcal{P}\times\mathcal{S}$ such that $\vs/t+\nabla\phi(\vx)$ is
smaller for the current $t$.  \cite{treeLPArxivV2} proved that even if each
step is computed approximately, \textsc{IPM} still outputs a
point close to $(\vx(t_{\mathrm{end}}),\vs(t_{\mathrm{end}}))$ using
$\widetilde{O}(\sqrt{m}\log(t_{\mathrm{end}}/t_{\mathrm{start}}))$
steps. 
See \cref{alg:IPM_centering} for a simplified version.
\begin{algorithm}
	\caption{Robust Primal-Dual Interior Point Method from \cite{treeLPArxivV2}\label{alg:IPM_centering}}
	\begin{algorithmic}[1]
		
		\Procedure{$\textsc{RIPM}$}{$\ma \in \mathbb{R}^{n \times m}, \vb, \vc,\vl,\vu,\epsilon$}
		
		\State Define $L \defeq \| \vc \|_{2}$ and $R \defeq \|\vu-\vl\|_{2}$
		
		\State Define $\phi_{i}(x)\defeq-\log(\vu_{i}-x)-\log(x-\vl_{i})$ 
		
		\State Define $\mu_i^t(\vx, \vs) \defeq {\vs_i}/{t}+\nabla \phi_i(\vx_i)$
		
		\Statex
		
		\Statex $\triangleright$ Modify the linear program and obtain an
		initial $(\vx, \vs)$ for modified linear program
		
		\State Let $t=2^{21}m^{5}\cdot\frac{LR}{128}\cdot\frac{R}{r}$
		
		\State Compute $\vx_{c}=\arg\min_{\vl\leq\vx\leq\vu}\vc^{\top}\vx+t\phi(\vx)$
		and $\vx_{\circ}=\arg\min_{\ma\vx=\vb}\|\vx-\vx_{c}\|_{2}$
		
		\State Let $\vx=(\vx_{c},3R+\vx_{\circ}-\vx_{c},3R)$ and $\vs=(-t  \nabla\phi(\vx_{c}),\frac{t}{3R+\vx_{\circ}-\vx_{c}},\frac{t}{3R})$
		
		\State Let the new matrix $\ma^{\mathrm{new}} \defeq [\ma;\ma;-\ma]$, the
		new barrier
		\[
		\phi_{i}^{\mathrm{new}}(x)
		=\begin{cases}
			\phi_{i}(x) & \text{if }i\in[m],\\
			-\log x & \text{else}.
		\end{cases}
		\]
		\Statex $\triangleright$ Find an initial $(\vx,\vs)$ for the original
		linear program
		
		\State $((\vx^{(1)},\vx^{(2)},\vx^{(3)}),(\vs^{(1)},\vs^{(2)},\vs^{(3)}))\leftarrow\textsc{Solve}(\ma^{\mathrm{new}},\phi^{\mathrm{new}},\vx,\vs,t,LR)$
		
		\State $(\vx,\vs)\leftarrow(\vx^{(1)}+\vx^{(2)}-\vx^{(3)},\vs^{(1)})$
		
		\Statex 
		
		\Statex $\triangleright$ Optimize the original linear program
		
		\State $(\vx,\vs)\leftarrow\textsc{Solve}(\ma,\phi,\vx,\vs,LR,\frac{\epsilon}{4m})$
		\State \Return $\vx$
		
		\EndProcedure
		
		\Statex
		
		\Procedure{$\textsc{Solve}$}{$\ma,\phi,\vx,\vs,t_{\mathrm{start}},t_{\mathrm{end}}$}
		
		\State Define $\alpha \defeq \frac{1}{2^{20}\lambda}$ and $\lambda \defeq 64\log(256m^{2})$
		
		\State Let $t\leftarrow t_{\mathrm{start}}$, $\ox\leftarrow\vx,\os\leftarrow\vs,\ot\leftarrow t$
		
		\While{$t\geq t_{\mathrm{end}}$}
		
		\State $t\leftarrow\max((1-\frac{\alpha}{\sqrt{m}})t,t_{\mathrm{end}})$
		
		\State Update step size $h=-\alpha/\|\cosh(\lambda\gamma^{\ot}(\ox,\os))\|_{2}$
		where $\gamma$ is defined in \cref{eq:gamma_t_def} \label{line:step_given_begin}
		
		\State Update diagonal weight matrix $\mw=\nabla^{2}\phi(\ox)^{-1}$\label{line:step_given_3}
		
		\State Update step direction $\vv$ where  $\vv_{i}=\sinh(\lambda\gamma^{\ot}(\ox,\os)_i) \cdot  \mu^{\ot}(\ox,\os)_i$\label{line:step_given_end}
		
		\State Implicitly update $\vx, \vs$, with $\mproj_{\vw}\defeq\mw^{1/2}\ma^\top(\ma\mw\ma^\top)^{-1}\ma\mw^{1/2}$
		\begin{align*}
			\vx &\leftarrow \vx+h\mw^{1/2}(\vv - \mproj_{\vw} \vv), \\
			\vs &\leftarrow \vs+\ot h\mw^{-1/2} \mproj_{\vw} \vv
		\end{align*}

		\State Explicitly update $\ox, \os$ such that
		\begin{align*}
			\|\mw^{-1/2}(\ox-\vx)\|_{\infty} &\leq \alpha, \\
			\|\mw^{1/2}(\os-\vs)\|_{\infty} &\leq \ot \alpha
		\end{align*} 
		
		\State If $|\ot-t|\geq\alpha\ot$, update $\ot\leftarrow t$
		
		\EndWhile
		
		\State \Return $(\vx,\vs)$\label{line:step_user_output}
		
		\EndProcedure
		
	\end{algorithmic}
	
\end{algorithm}

RIPM calls \textsc{Solve} twice. The first call to
\textsc{Solve} finds a feasible point by following the central
path of the following modified linear program
\[
\min_{ \substack{\ma (\vx^{(1)}+\vx^{(2)}-\vx^{(3)})=\vb \\
		\vl\leq\vx^{(1)}\leq\vu,\; 
		\vx^{(2)}\geq\vzero,\; 
		\vx^{(3)}\geq\vzero}}
\vc^{(1)\top}\vx^{(1)}+\vc^{(2)\top}\vx^{(3)}+\vc^{(2)\top}\vx^{(3)}
\]
where $\vc^{(1)}=\vc$, and $\vc^{(2)},\vc^{(3)}$ are some positive
large vectors. The above modified linear program is chosen so that we
know an explicit point on its central path, and any approximate
minimizer to this new linear program gives an approximate central path
point for the original problem.
The second call to \textsc{Solve} finds an approximate solution
by following the central path of the original linear program. 


\RIPM*
\begin{proof}
	The number of steps follows from Theorem A.1 in \cite{treeLPArxivV2}, with
	the parameter $\vw_{i}=\nu_{i}=1$ for all $i$. 
	The number of coordinate
	changes in $\mw,\vv$ and the runtime of \cref{line:step_given_begin}
	to \cref{line:step_given_end} follows directly from the formula of
	$\mu^{t}(\vx,\vs)_{i}$ and $\gamma^{t}(\vx,\vs)_{i}$. For the bound
	for $h\|\vv\|_{2}$, it follows from
	\[	h\|\vv\|_{2}\leq\alpha\frac {\|\sinh(\lambda\gamma^{\ot}(\ox,\os))\|_{2}}{\|\cosh(\lambda\gamma^{\ot}(\ox,\os))\|_{2}} \leq \alpha = O\left(\frac{1}{\log m}\right).
	\]
\end{proof}

	\section{Maintaining the implicit representation} 

In this section, we give the general data structure \textsc{MaintainRep}, which implicitly maintains a vector $\vx$ throughout a call of \textsc{Solve} of \cref{alg:IPM_centering}.
We break up the representation into two parts, the first using the inverse tree operator, and the second using the tree operator.

First, we present some of the alternative decomposition properties of the tree operator.
\begin{definition}[Subtree operator]
	\label{defn:subtree-operator}
	Let $\treeop$ be a tree operator on $\ct$.
	Recall $\ct_H$ is the complete subtree of $\ct$ rooted at $H$.
	We define the subtree operator $\treeop^{(H)}$ at each node $H$ to be
	\begin{equation} \label{eq:tree-op-subtree-op}
		\treeop^{(H)} \defeq  \sum_{\text{leaf } L \in \ct_H} \treeop_{L\leftarrow H}.
	\end{equation}
\end{definition}

\begin{corollary}
	Based on the above definitions, we have
	\begin{equation} \label{eq:tree-op-decomp}
		\treeop = \sum_{H \in \ct} \treeop^{(H)} \mi_{F_H}.
	\end{equation}
	Furthermore, if $H$ has children $H_1, H_2$, then
	\begin{equation} \label{eq:tree-op-children-decomp}
		\treeop^{(H)} = \treeop^{(H_1)} \treeop_{H_1} + \treeop^{(H_2)} \treeop_{H_2}.
	\end{equation}
\end{corollary}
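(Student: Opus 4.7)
The plan is to prove both identities by unfolding the definitions and rearranging sums, with no algebraic obstacles; this is genuinely a corollary of \cref{def:tree-operator} and \cref{defn:subtree-operator}, so I expect it to be a short exercise.

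For the first identity, I would start with the definition
\[
\treeop = \sum_{\text{leaf } L,\, \text{node } H \,:\, L \in \ct_H} \treeop_{L \leftarrow H} \,\mi_{F_H}
\]
from \cref{def:tree-operator}. The summation is over pairs $(L,H)$ with $L$ a leaf in $\ct_H$; since the matrix $\mi_{F_H}$ depends only on $H$, I would group the sum by $H$ and pull $\mi_{F_H}$ to the right of the inner sum over $L$. The inner sum $\sum_{\text{leaf } L \in \ct_H} \treeop_{L \leftarrow H}$ is exactly $\treeop^{(H)}$ by \cref{defn:subtree-operator}, which immediately gives $\treeop = \sum_{H \in \ct} \treeop^{(H)} \mi_{F_H}$.

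For the second identity, I would start from $\treeop^{(H)} = \sum_{\text{leaf } L \in \ct_H} \treeop_{L \leftarrow H}$ and partition the leaves of $\ct_H$ according to whether they lie in $\ct_{H_1}$ or $\ct_{H_2}$, giving
\[
\treeop^{(H)} = \sum_{\text{leaf } L \in \ct_{H_1}} \treeop_{L \leftarrow H} + \sum_{\text{leaf } L \in \ct_{H_2}} \treeop_{L \leftarrow H}.
\]
The key observation is that for any leaf $L \in \ct_{H_j}$ (with $j \in \{1,2\}$), the unique path from $H$ down to $L$ factors through $H_j$, so by the product definition $\treeop_{L \leftarrow H} = \treeop_{L \leftarrow H_j}\,\treeop_{H_j}$. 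Factoring $\treeop_{H_j}$ out of each inner sum and recognizing the remaining sums as $\treeop^{(H_1)}$ and $\treeop^{(H_2)}$ yields the claimed recursion.

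The only mild subtlety is the ordering convention in the product defining $\treeop_{L \leftarrow H}$, namely that edge operators closer to the leaf appear on the left; this must be respected so that $\treeop_{H_j}$ is the \emph{rightmost} factor, matching the statement $\treeop^{(H)} = \treeop^{(H_1)}\treeop_{H_1} + \treeop^{(H_2)}\treeop_{H_2}$. Beyond double-checking this convention, there is no genuine obstacle. The argument also generalizes verbatim to any constant number of children of $H$, so the restriction to two children in the statement is only cosmetic.
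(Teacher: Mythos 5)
Your proof is correct and matches what the paper intends: the corollary is stated without proof precisely because it follows by the definition-unwinding you carry out — grouping the double sum over $(L,H)$ pairs by $H$ for the first identity, and splitting the leaves of $\ct_H$ between $\ct_{H_1}$ and $\ct_{H_2}$ together with the factorization $\treeop_{L \leftarrow H} = \treeop_{L \leftarrow H_j}\treeop_{H_j}$ for the second. You also correctly respect the ordering convention (edge operators nearer the leaf on the left, so $\treeop_{H_j}$ is the rightmost factor), including the degenerate case where $H_j$ is itself a leaf, where $\treeop_{L\leftarrow H_j}=\mi$.
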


The output of $\treeop$ when restricted to $E(H)$ for a node $H \in \ct$
can be written in two parts,
which is useful for our data structures.
The first part involves summing over all nodes in $\ct_H$, ie. descendants of $H$ and $H$ itself, and the second part involves a sum over all ancestors of $H$. 

\begin{lemma} \label{lem:treeop-subtree-ancestor-decomp}
	At any node $H \in \ct$, we have
	\[
	\mi_{E(H)} \treeop = \sum_{D \in \ct_H} \treeop^{(D)} \mi_{F_D} + \treeop^{(H)} \sum_{\text{ancestor $A$ of $H$}} \treeop_{H \leftarrow A} \mi_{F_A}.
	\]
\end{lemma}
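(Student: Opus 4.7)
The plan is to use the decomposition $\treeop = \sum_{A \in \ct} \treeop^{(A)} \mi_{F_A}$ from \eqref{eq:tree-op-decomp} and then carefully apply $\mi_{E(H)}$ by tracking which terms survive. The key observation is that each subtree operator $\treeop^{(A)}$ produces output supported on $E(A) = \bigcup_{\text{leaf } L \in \ct_A} E(L)$. Hence, when we left-multiply by $\mi_{E(H)}$, the only nodes $A$ whose contribution does not vanish are those with $E(A) \cap E(H) \neq \emptyset$. Since the $E(L)$'s at leaves are pairwise disjoint and the leaves partition into subtrees, this intersection is non-empty exactly when $A$ is a descendant of $H$, $A = H$, or $A$ is an ancestor of $H$.

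First, I would split the sum into these three cases. For $A \in \ct_H$ we have $E(A) \subseteq E(H)$, so $\mi_{E(H)} \treeop^{(A)} \mi_{F_A} = \treeop^{(A)} \mi_{F_A}$, giving the first term on the right-hand side directly. For $A$ not comparable to $H$ in the tree (neither ancestor nor descendant nor equal), $E(A) \cap E(H) = \emptyset$, so those terms vanish.

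The main content is handling ancestors $A$ of $H$. Here I would expand $\treeop^{(A)} = \sum_{\text{leaf } L \in \ct_A} \treeop_{L \leftarrow A}$ by definition, and observe that restricting to $E(H)$ keeps precisely the leaves $L \in \ct_H$. For any such $L$, the unique path from $A$ down to $L$ must pass through $H$, so writing the path as $(H_t = L, \ldots, H_j = H, \ldots, H_0 = A)$ and using \cref{defn:subtree-operator}'s ordering convention gives the factorization
\[
\treeop_{L \leftarrow A} = \treeop_{H_t} \cdots \treeop_{H_{j+1}} \cdot \treeop_{H_j} \cdots \treeop_{H_1} = \treeop_{L \leftarrow H} \,\treeop_{H \leftarrow A}.
\]
Summing over $L \in \ct_H$ then pulls out $\treeop_{H \leftarrow A}$ on the right and yields $\mi_{E(H)} \treeop^{(A)} = \treeop^{(H)} \treeop_{H \leftarrow A}$. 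Factoring $\treeop^{(H)}$ out of the ancestor sum produces the second term on the right-hand side.

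The only step requiring care is verifying the compositional identity $\treeop_{L \leftarrow A} = \treeop_{L \leftarrow H} \treeop_{H \leftarrow A}$ against the specific ordering in \cref{def:tree-operator}; once that is checked the rest is bookkeeping. I do not anticipate any obstacle beyond being precise about the indexing of the path and confirming that $\mi_{F_A}$ is untouched throughout, since it acts on the input side while $\mi_{E(H)}$ acts on the output side and commutes with scalars.
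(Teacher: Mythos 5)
Your proof is correct and follows essentially the same route as the paper: both classify the contributing nodes $A$ by whether they are descendants of $H$, equal to $H$, or ancestors of $H$ (the rest vanishing under $\mi_{E(H)}$ by disjointness of the leaf sets), and both handle the ancestor case via the factorization $\treeop_{L \leftarrow A} = \treeop_{L \leftarrow H}\,\treeop_{H \leftarrow A}$ followed by pulling out $\treeop^{(H)}$. Starting from \cref{eq:tree-op-decomp} rather than directly from the leaf--node double sum in \cref{def:tree-operator} is only a cosmetic difference.
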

\begin{proof}
	We consider the terms in the sum for $\treeop$ that map into to $E(H)$, which is precisely the set of leaf nodes in the subtree rooted at $H$.
	\[
	\mi_{E(H)} \treeop  = \sum_{\text{leaf } L \in \ct_H} \sum_{A : L \in \ct_A} \treeop_{L \leftarrow A} \mi_{F_A} .
	\]
	The right hand side involves a sum over the set $\{ (L, A) \;:\; \text{leaf } L \in \ct_H, L \in \ct_A\}$. Observe that $(L,A)$ is in this set if and only if $A$ is a descendant of $H$, or $A = H$, or $A$ is an ancestor of $H$. Hence, the summation can be written as
	\[
	\sum_{\text{leaf $L \in \ct_H$}} \sum_{\text{node $A \in \ct_H$}} \treeop_{L \leftarrow A}\mi_{F_A} + 
	\sum_{\text{leaf $L \in \ct_H$}} \sum_{\text{ancestor $A$ of $H$}} \treeop_{L \leftarrow A} \mi_{F_A} .
	\]
	The first term is precisely the first term in the lemma statement.
	For the second term, we can use the fact that $A$ is an ancestor of $H$ to expand $\treeop_{L \leftarrow A} = \treeop_{L \leftarrow H} \treeop_{H \leftarrow A}$. Then, the second term is
	\begin{align*}
		&\phantom{{}={}} \sum_{\text{leaf $L \in \ct_H$}} \sum_{\text{ancestor $A$ of $H$}} \treeop_{L \leftarrow H} \treeop_{H \leftarrow A} \mi_{F_A}  \\
		&= \sum_{\text{leaf $L \in \ct_H$}} \treeop_{L \leftarrow H} \left( \sum_{\text{ancestor $A$ of $H$}} \treeop_{H \leftarrow A} \mi_{F_A}  \right) \\
		&= \treeop^{(H)} \left(\sum_{\text{ancestor $A$ of $H$}} \treeop_{H \leftarrow A} \mi_{F_A}  \right),
	\end{align*}
	by definition of $\treeop^{(H)}$.
\end{proof}

Now, we consider the cost of applying the inverse tree operator and the tree operator.
\begin{lemma} \label{lem:itreeop-time}
	Let $\itreeop : \R^E \mapsto \R^V$ be an inverse tree operator on $\ct$ with query complexity $Q$.
	Given $\vv \in \R^E$, we can compute $\itreeop \vv$ as well as 
	$\vy_H \defeq \sum_{\text{leaf } L \in \ct_H} \itreeop_{H \leftarrow L} \vv$ for all $H \in \ct$ 
	in $O(Q(\eta K))$ time, where $K = \nnz(\vv)$ and $\eta$ is the height of $\ct$.
\end{lemma}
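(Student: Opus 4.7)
The plan is to exploit the recursive decomposition of $\itreeop$ along children of $\ct$, process the tree bottom-up, and bound the number of edge operator applications by $\eta K$.

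First I would set up the recursion. Let $\collN \defeq \{L : L \text{ is a leaf of } \ct \text{ with } \vv|_{E(L)} \neq \vzero\}$. Since each leaf of $\ct$ corresponds to a single hyperedge, $|\collN| \leq K$. Let $\collN' \defeq \pathT{\collN}$. Since $\ct$ has height $\eta$ and each leaf has at most $\eta$ proper ancestors, $|\collN'| \leq \eta K$. I would then show the recursion: for a leaf $L$, $\vy_L = \vv|_{E(L)}$; and for any internal node $H$ with children $D_1,\dots,D_c$,
\[
\vy_H = \sum_{j=1}^{c} \itreeop_{D_j}\, \vy_{D_j}.
\]
This follows from splitting the sum in the definition of $\vy_H$ according to which child's subtree contains $L$, and using the path decomposition $\itreeop_{H \leftarrow L} = \itreeop_{D_j}\, \itreeop_{D_j \leftarrow L}$ whenever $L \in \ct_{D_j}$, directly from \cref{defn:inverse-tree-operator}. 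Note that $\vy_H = \vzero$ for any $H \notin \collN'$, so only nodes in $\collN'$ need to be visited.

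The algorithm is then straightforward: initialize $\vy_L \leftarrow \vv|_{E(L)}$ for each $L \in \collN$, and process the remaining nodes of $\collN'$ in any bottom-up order (e.g.\ by decreasing level). At each non-leaf $H \in \collN'$, iterate over its (constantly many) children $D$ that lie in $\collN'$ and accumulate $\itreeop_D \vy_D$ into $\vy_H$; children outside $\collN'$ contribute zero and are skipped. After the pass, $\itreeop \vv = \sum_{H \in \collN'} \mi_{F_H} \vy_H$, which requires no further operator applications since the $F_H$'s are disjoint; it can be output in the natural sparse form indexed by $\collN'$.

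For the runtime, each non-root node $H \in \collN'$ contributes exactly one application of its edge operator $\itreeop_H$ (used when visiting its parent). Collecting these, the set of edge operators applied is indexed by a subset of $\collN'$ of size at most $\eta K$, and each application is of the form $\itreeop_H \vy_D$. By \cref{def:tree-operator-complexity} (combined with \cref{lem:treeop-transpose}, which gives $\itreeop$ the same query complexity as $\treeop$), the total cost of these applications is bounded by $Q(\eta K)$. The bookkeeping cost of traversing $\collN'$ and summing contributions is $O(\eta K)$, absorbed into $Q(\eta K)$ since $Q(k) \geq k$.

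The only subtle point is ensuring that the per-node additions required to accumulate contributions from multiple children do not inflate the bound beyond $Q(\eta K)$; this is handled by the constant-degree assumption on $\ct$, which limits the accumulation at each node to $O(1)$ vector additions of dimensions bounded implicitly within the $Q$ term. No further obstacles are anticipated.
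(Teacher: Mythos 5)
Your proof is correct and takes essentially the same approach as the paper's: the paper's (much terser) argument likewise drops the leaves where $\vv$ vanishes, rearranges the computation so each edge operator on $\pathT{\collN}$ is applied at most once, and bounds the cost by $Q(\eta K)$; your write-up just makes the bottom-up recursion $\vy_H = \sum_j \itreeop_{D_j}\vy_{D_j}$ explicit. One tiny point: in the general operator-tree setting a leaf need not correspond to a single hyperedge, but your bound $|\collN|\le K$ still holds since the sets $E(L)$ are pairwise disjoint.
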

\begin{proof}
	Recall the definition
	\[
		\itreeop \vv \defeq \sum_{\text{leaf $L$}} \left( \sum_{H :\; L \in \ct_H} 
		\mi_{F_H} \itreeop_{H \leftarrow L} \right) \vv.
	\]
	At a leaf node $L$, if we have $\vv_e = 0$ for all $e \in E(L)$, then we can ignore the term for $L$ in the outer sum. So we can reduce $\itreeop \vv$ to consist of at most $K$ terms in the outer sum.
	We can further rearrange the order of applying the edge operators so that each edge operator is applied at most once,
	and this naturally gives the values for all non-zero $\vy_H$'s.
	We bound the overall runtime loosely by $O(Q(\eta K))$.
\end{proof}

Unlike the inverse tree operator, the tree operator is applied downwards along a tree, and therefore we do not have non-trivial bounds on total number of edge operators applied.
Instead, we have a more general bound:

\begin{lemma} \label{lem:treeop-time}
	Let $\treeop : \R^V \mapsto \R^E$ be a tree operator on $\ct$ with query complexity $Q$.
	Given $\vz \in \R^V$, we can compute $\treeop \vv$ in $O(Q(|E|))$ time.
\end{lemma}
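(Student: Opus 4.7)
The plan is to compute $\treeop \vz$ by a single top-down pass through $\ct$, propagating a running partial sum from each node to its children. For each node $H \in \ct$, I will maintain the vector
\[
\vy_H \defeq \sum_{A \,:\, H \in \ct_A} \treeop_{H \leftarrow A} \mi_{F_A} \vz,
\]
which aggregates the contributions from $H$ and all its ancestors. The base case at the root is $\vy_G = \mi_{F_G} \vz$, and for any non-root node $D$ with parent $H$, splitting off the $A = D$ term from the sum and using $\treeop_{D \leftarrow A} = \treeop_D \,\treeop_{H \leftarrow A}$ for the remaining terms gives the one-step recursion
\[
\vy_D \,=\, \treeop_D \vy_H + \mi_{F_D} \vz.
\]
Thus a preorder traversal of $\ct$ computes $\vy_H$ at every node using exactly one application of the edge operator $\treeop_D$ per non-root node $D$.

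Next I will argue that collecting $\vy_L$ at every leaf $L$ yields the output. By specializing \cref{lem:treeop-subtree-ancestor-decomp} to a leaf $H = L$, the descendant sum collapses to the single term $D = L$ with $\treeop^{(L)} = \mi$, so
\[
\mi_{E(L)} \treeop \vz = \mi_{F_L} \vz + \sum_{\text{ancestor } A \text{ of } L} \treeop_{L \leftarrow A} \mi_{F_A} \vz = \vy_L.
\]
Concatenating the $\vy_L$'s across the (edge-disjoint) leaves reconstructs $\treeop \vz$ in its entirety.

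Finally I will bound the runtime. By the construction of $\ct$ in \cref{subsec:proj-op-defn}, each leaf of $\ct$ contains exactly one hyperedge, so $\ct$ has $|E|$ leaves; since $\ct$ is constant-degree with every internal node having at least two children, the total node count is $O(|E|)$. The algorithm therefore invokes $O(|E|)$ edge operators, and by \cref{def:tree-operator-complexity} the total cost of all these applications is bounded by $O(Q(|E|))$.

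I do not anticipate a real obstacle: the only potentially delicate point is verifying the recursion for $\vy_D$, which follows immediately by unrolling $\treeop_{D \leftarrow A}$, and the observation that the total number of nodes in $\ct$ (not merely the number of leaves) is $O(|E|)$, which comes from constant-degree together with the construction of $\ct$.
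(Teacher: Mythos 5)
Your proof is correct and is essentially the paper's argument: the paper's proof is the one-line observation that $\treeop\vz$ can be computed by applying each edge operator at most once, and since the leaves partition $E$ there are $O(|E|)$ edge operators, giving cost $O(Q(|E|))$ by the definition of query complexity. Your top-down recursion $\vy_D = \treeop_D \vy_H + \mi_{F_D}\vz$ is just an explicit realization of that single pass, so there is no substantive difference.
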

\begin{proof}
	We simply observe that we can compute $\treeop \vv$ by applying each edge operator at most once.
	Since the leaf nodes partition the set $E$, we know in $\ct$, there are $O(|E|)$ edge operators in total, so the overall time is at most $O(Q(|E|))$.
\end{proof}

With the appropriate partial computations taking advantage of the decomposition of $\itreeop$, we can maintain $\itreeop \vv$ efficiently for dynamic $\itreeop$ and $\vv$. Specifically, we use the following property:

\begin{lemma}
	Given a vector $\vv \in \R^E$, let $\vy_H \defeq \sum_{\text{leaf } L \in \ct_H} \itreeop_{H \leftarrow L} \vv$ for each $H \in \ct$. If $H$ has children $H_1, H_2$, then
	\begin{equation} \label{eq:itreeop-children-decomp}
		\vy_H = \itreeop_{H_1} \vy_{H_1} + \itreeop_{H_2} \vy_{H_2}.
	\end{equation}
	Furthermore,
	\begin{equation} \label{eq:itreeop-y}
	\sum_{H \in \ct} \mi_{F_H}  \vy_H = \itreeop \vv.
	\end{equation}
\qed
\end{lemma}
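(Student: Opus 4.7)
The plan is to prove both parts of the lemma by direct manipulation of the definitions of $\itreeop$ and of $\vy_H$. No new ingredients are required beyond \cref{defn:inverse-tree-operator}; the only observation needed is that for an ancestor-descendant pair the edge operator product factors along the path. I expect neither part to present a genuine obstacle: the first is a path-concatenation identity, and the second is a swap of summation order.

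For the first identity, I would begin by noting that because $H_1$ and $H_2$ are the (constant number of) children of $H$ and the subtrees $\ct_{H_1}, \ct_{H_2}$ are disjoint with union $\ct_H \setminus \{H\}$, the set of leaves of $\ct_H$ partitions as the leaves of $\ct_{H_1}$ together with the leaves of $\ct_{H_2}$ (assuming $H$ itself is not a leaf; otherwise the claim is vacuous). For any leaf $L \in \ct_{H_j}$, the unique path in $\ct$ from $L$ up to $H$ passes through $H_j$, so by \cref{defn:inverse-tree-operator} the product along this path factors as
\[
\itreeop_{H \leftarrow L} = \itreeop_{H_j} \cdot \itreeop_{H_j \leftarrow L}.
\]
Substituting into the definition of $\vy_H$ and splitting the sum over leaves according to which subtree they lie in gives
\[
\vy_H = \itreeop_{H_1} \sum_{\text{leaf } L \in \ct_{H_1}} \itreeop_{H_1 \leftarrow L} \vv + \itreeop_{H_2} \sum_{\text{leaf } L \in \ct_{H_2}} \itreeop_{H_2 \leftarrow L} \vv = \itreeop_{H_1} \vy_{H_1} + \itreeop_{H_2} \vy_{H_2},
\]
as desired.

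For the second identity, I would start from the definition of the inverse tree operator applied to $\vv$:
\[
\itreeop \vv = \sum_{\text{leaf $L$, node $H$}:\; L \in \ct_H} \mi_{F_H} \itreeop_{H \leftarrow L} \vv.
\]
Swapping the order of summation so that $H$ is the outer index and $L$ ranges over leaves of $\ct_H$, and pulling $\mi_{F_H}$ (which does not depend on $L$) outside, gives
\[
\itreeop \vv = \sum_{H \in \ct} \mi_{F_H} \sum_{\text{leaf } L \in \ct_H} \itreeop_{H \leftarrow L} \vv = \sum_{H \in \ct} \mi_{F_H} \vy_H,
\]
which is the stated identity. The only thing I want to be careful about is verifying that the index set $\{(L,H) : L \in \ct_H, L \text{ a leaf}\}$ used in \cref{defn:inverse-tree-operator} is exactly reindexed by the double sum above, which it is because every leaf lies in $\ct_H$ for all of its ancestors and for itself. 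This completes the proof.
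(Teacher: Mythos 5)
Your proof is correct. The paper states this lemma without any proof (the immediate \qed signals the authors regard it as following directly from \cref{defn:inverse-tree-operator}), and your argument — factoring $\itreeop_{H \leftarrow L} = \itreeop_{H_j} \itreeop_{H_j \leftarrow L}$ for leaves $L$ in the subtree of the child $H_j$, and regrouping the double sum defining $\itreeop$ by the node $H$ — is precisely the routine verification the authors intended, with the only subtle points (the leaf partition of $\ct_H$ among the children's subtrees and the reindexing of the pair set $\{(L,H): L \in \ct_H\}$) handled explicitly.
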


\begin{lemma} \label{lem:itreeop-diff-time}
	Let $\itreeop : \R^E \mapsto \R^V$ be an inverse tree operator with query complexity $Q$.
	Let $\itreeop^\new$ be $\itreeop$ with $K$ updated edge operators.
	Suppose we know $\itreeop \vv$, and we know $\vy_H \defeq \sum_{\text{leaf } L \in \ct_H} \itreeop_{H \leftarrow L} \vv$ at all nodes $H$, then we can compute $(\itreeop^\new - \itreeop) \vv$ and the $\vy_H^{\new}$'s in $O(Q(\eta K))$ time.
\end{lemma}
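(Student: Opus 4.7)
Let $\collN \subseteq \ct$ denote the set of at most $K$ nodes whose edge operators have been updated. The first observation is that by the definition of $\vy_H$ (cf.\ \cref{lem:itreeop-time}), $\vy_H$ depends only on edge operators $\itreeop_D$ at strict descendants $D$ of $H$ in $\ct_H$. Consequently, $\vy_H^\new = \vy_H$ at every node $H$ that is not a strict ancestor of some node in $\collN$, so only the $\vy_H$'s with $H \in \pathT{\collN}$ can possibly change, and $|\pathT{\collN}| \le O(\eta K)$ since $\ct$ has height $\eta$.

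The main step is to recompute $\vy_H^\new$ at each affected node by a bottom-up traversal of $\pathT{\collN}$, using the recursion $\vy_H^\new = \itreeop_{H_1}^\new \vy_{H_1}^\new + \itreeop_{H_2}^\new \vy_{H_2}^\new$ from \cref{eq:itreeop-children-decomp}, where for any child $H_j \notin \pathT{\collN}$ we simply reuse the already-stored $\vy_{H_j}$, and where $\itreeop_{H_j}^\new = \itreeop_{H_j}$ whenever $H_j \notin \collN$. Each affected node therefore triggers at most a constant number of edge-operator applications, so the whole sweep performs $O(\eta K)$ edge-operator applications in total. By the definition of query complexity (\cref{def:tree-operator-complexity}), which charges an aggregate cost of $Q(K')$ for any collection of $K'$ edge-operator applications across distinct nodes, this sweep costs $O(Q(\eta K))$ time.

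Finally, once the changed $\vy_H^\new$'s are known, \cref{eq:itreeop-y} gives
\[
(\itreeop^\new - \itreeop)\, \vv \;=\; \sum_{H \in \pathT{\collN}} \mi_{F_H} (\vy_H^\new - \vy_H),
\]
since $\vy_H^\new - \vy_H = \vzero$ outside $\pathT{\collN}$. Assembling this sum takes time proportional to $\sum_{H \in \pathT{\collN}} |F_H|$, which is already dominated by the $O(Q(\eta K))$ cost of producing the $\vy_H^\new$ values (the outputs of the edge-operator applications are supported on precisely these $F_H$'s). The main bookkeeping obstacle will be to identify $\pathT{\collN}$ and process its nodes in a valid bottom-up order (for instance, by bucketing the nodes in $\collN$ by their level in $\ct$ and then walking upward level by level), so that when we reach an affected $H$ the values $\vy_{H_1}^\new, \vy_{H_2}^\new$ at its children are already available. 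Once this ordering is in place, the charging of all edge-operator work to a single invocation of $Q(\eta K)$ is exactly the convention built into \cref{def:tree-operator-complexity}, and the claimed runtime follows.
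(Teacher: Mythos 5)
Your proposal is correct and follows essentially the same route as the paper: you identify that only $\vy_H$ for $H \in \pathT{\collN}$ (at most $O(\eta K)$ nodes) can change, recompute them bottom-up via \cref{eq:itreeop-children-decomp} reusing unchanged children, and extract $(\itreeop^\new-\itreeop)\vv$ from \cref{eq:itreeop-y}, charging the $O(\eta K)$ edge-operator applications to $Q(\eta K)$. Your write-up simply spells out details (the dependence of $\vy_H$ on strict descendants only, the level-by-level ordering) that the paper leaves implicit.
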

\begin{proof}
	Observe that for a node $H \in \ct$, if no edge operator in $\ct_H$ was updated, then $\vy_H$ remains the same.
	We use \cref{eq:itreeop-children-decomp} to compute $\vy_H^\new$ up the tree for the $O(\eta K)$-many nodes that admit changes,
	and then \cref{eq:itreeop-y} to extract the change $(\itreeop^\new - \itreeop) \vv$.
\end{proof}

Now we are ready for the complete data structure involving the inverse tree operator.

\begin{algorithm}
	\caption{Dynamic data structure to maintain cumulative $\itreeop \vv$}\label{algo:inverse-tree-operator}
	\begin{algorithmic}[1]
		\State \textbf{dynamic data structure \textsc{InverseTreeOp}}
		\State \textbf{member:}
		\State \hspace{4mm} $\ct$: tree supporting $\itreeop$ with edge operators on the edges
		\State \hspace{4mm} $\vw \in \R^m$: dynamic weight vector
		\State \hspace{4mm} $\vv \in \R^n$: dynamic vector
		\State \hspace{4mm} $c, \zprev, \zsum \in \R^n$: coefficient, result vectors
		\State \hspace{4mm} $\vy_H \in \R^{n}$ for each $H \in \ct$: sparse partial computations
		\State
		\Procedure{\textsc{Initialize}}{$\ct, \vw^{\init}, \vv^{\init}$}
		\State $\vw \leftarrow \vw^{\init}, \vv \leftarrow \vv^{\init}, c \leftarrow 0, \zsum \leftarrow \vzero$
		\State Initialize $\itreeop$ on $\ct$ based on $\vw$
		\State Compute $\itreeop \vv$ and $\vy_H$'s, set $\zprev \leftarrow \itreeop \vv$
		\EndProcedure
		\State
		\Procedure{\textsc{Reweight}}{$\delta_{\vw}$}
		\State $\vw^\new \leftarrow \vw + \delta_{\vw}$
		\State Let $\itreeop^\new$ be the new tree operator using $\vw^\new$
		\State $\vz' \leftarrow (\itreeop^\new - \itreeop) \vv$, and update $\vy_H$'s \Comment{\cref{lem:itreeop-diff-time}}
		\State $\zprev \leftarrow \zprev + \vz'$
		\State $\zsum \leftarrow \zsum - c \cdot \vz'$
		\State $\vw \leftarrow \vw^\new, \itreeop \leftarrow \itreeop^\new$
		\EndProcedure
		\State
		\Procedure{\textsc{Move}}{$h$, $\delta_{\vv}$}
		\State Compute $\vz' \defeq \itreeop \delta_{\vv}$ and the $\vy'_H \defeq \sum_{\text{leaf $L \in \ct_H$}} \itreeop_{H \leftarrow L} \delta_{\vv}$ for each node $H$ \Comment{\cref{lem:itreeop-time}}
		\State $\zprev \leftarrow \zprev + \vz'$, and $\vy_H \leftarrow \vy_H + \vy'_H$ for each node $H$
		\State $\zsum \leftarrow \zsum - c \vz'$
		\State $c \leftarrow c + h$
		\State $\vv \leftarrow \vv + \delta_{\vv}$
		\EndProcedure
	\end{algorithmic}
\end{algorithm}

\begin{theorem}[Inverse tree operator data structure]
	Let $\vw \in \R^m$ be the weights changing at every step of \textsc{Solve},
	and let $\vv \in \R^n$ be a dynamic vector.
	Suppose $\itreeop : \R^m \mapsto \R^n$ is an inverse tree operator dependent on $\vw$ supported on $\ct$ with query complexity $Q$ and update complexity $U$.
	Let $\eta$ be the height of $\ct$.
	Then the data structure \textsc{InverseTreeOp} (\cref{algo:inverse-tree-operator}) maintains $\vz^{(k)} \defeq \sum_{i=1}^k h^{(i)} \itreeop^{(i)} \vv^{(i)}$ so that at the end of each step $k$,
	the variables in the algorithm satisfy
	\begin{itemize}
		\item $\vz = c \zprev + \zsum$,
		\item $\zprev = \itreeop \vv$, and
		\item $\vy_H = \sum_{\text{leaf } L \in \ct_H} \itreeop_{H \leftarrow L} \vv$ for all nodes $H$.
	\end{itemize}
	The data structure is initialized via \textsc{Initialize} in $O(U(m) + Q(m))$ time.
	At step $k$, there is one call \textsc{Reweight}$(\delta_{\vw})$ taking $O(U(K) + Q(\eta K))$ time, where $K = \nnz(\delta_{\vw})$, 
	followed by one call of \textsc{Move}$(h, \delta_{\vv})$ taking $O(Q(\eta \cdot \nnz(\delta_{\vv})))$ time.
\end{theorem}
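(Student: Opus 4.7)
The plan is a standard induction-plus-runtime argument. For correctness I would proceed by induction on the step index $k$, establishing that immediately after each procedure call completes we have (i) $\vz = c \cdot \zprev + \zsum$, (ii) $\zprev = \itreeop \vv$ at the current values of $\itreeop$ and $\vv$, and (iii) $\vy_H = \sum_{\text{leaf } L \in \ct_H} \itreeop_{H \leftarrow L} \vv$ for every $H \in \ct$. The base case is handled directly by \textsc{Initialize}, which sets $c = 0$, $\zsum = \vzero$, and then populates $\zprev$ together with all of the $\vy_H$ via \cref{lem:itreeop-time}.

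For the inductive step I would separately analyze \textsc{Reweight} and \textsc{Move}. The key observation for \textsc{Reweight} is that the implicit vector $\vz$ must not change, since no new step has been taken and only the operator has been updated. With $\vz' \defeq (\itreeop^\new - \itreeop) \vv$, the update $\zprev \leftarrow \zprev + \vz'$ gives $\zprev = \itreeop^\new \vv$ by invariant (ii), and the compensating update $\zsum \leftarrow \zsum - c \cdot \vz'$ leaves $c \cdot \zprev + \zsum$ unchanged, preserving invariant (i). For \textsc{Move} with $\vz' \defeq \itreeop \delta_\vv$, a direct algebraic check shows
\[
(c + h)(\zprev + \vz') + (\zsum - c \vz') \;=\; (c \zprev + \zsum) + h \cdot \itreeop(\vv + \delta_\vv) \;=\; \vz + h \cdot \itreeop^{(k)} \vv^{(k)},
\]
so the new value of $c \cdot \zprev + \zsum$ is exactly $\vz^{(k)}$. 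Invariant (iii) is maintained in \textsc{Reweight} via \cref{lem:itreeop-diff-time} (applied to $\vv$ and the changed edge operators) and in \textsc{Move} via \cref{lem:itreeop-time} (applied to $\delta_\vv$), after which the new $\vy_H$'s are obtained by adding the returned increments to the current ones.

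For the runtime analysis, \textsc{Initialize} requires $U(m)$ time to construct $\itreeop$ from $\vw^\init$ and $Q(m)$ time to evaluate $\itreeop \vv^\init$ together with every $\vy_H$ via \cref{lem:itreeop-time}. For \textsc{Reweight}$(\delta_\vw)$ with $K = \nnz(\delta_\vw)$, updating the edge operators of $\itreeop$ takes $U(K)$ time by definition of update complexity, and \cref{lem:itreeop-diff-time} produces $\vz'$ together with the updated $\vy_H$'s in $O(Q(\eta K))$ time; the remaining vector arithmetic touches $O(\eta K)$ coordinates, which is absorbed since $Q$ is concave and satisfies $Q(k) \geq k$. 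For \textsc{Move}$(h, \delta_\vv)$ with $K = \nnz(\delta_\vv)$, \cref{lem:itreeop-time} gives $\itreeop \delta_\vv$ and all $\vy'_H$ in $O(Q(\eta K))$ time, which dominates the scalar-vector bookkeeping on $\zprev$, $\zsum$, $c$.

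The main subtlety is to justify that both \textsc{Reweight} and \textsc{Move} perform only $O(\eta K)$-sized work on the $\vy_H$'s rather than revisiting all of $\ct$. This is exactly what \cref{lem:itreeop-diff-time} provides: the set of $\vy_H$ affected by changes to $K$ edge operators consists of the ancestors of those edges (at most $\eta K$ nodes), and each affected $\vy_H$ can be refreshed using the two-child recursion $\vy_H = \itreeop_{H_1} \vy_{H_1} + \itreeop_{H_2} \vy_{H_2}$, whose total cost is bounded by $Q(\eta K)$ because only $O(\eta K)$ edge operators are invoked. With this in hand the invariants and time bounds assemble directly into the theorem's stated runtimes.
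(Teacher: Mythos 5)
Your proposal is correct and follows essentially the same route as the paper's proof: the same three invariants are verified procedure by procedure with the same algebra for \textsc{Initialize}, \textsc{Reweight}, and \textsc{Move}, and the runtimes are obtained from \cref{lem:itreeop-time} and \cref{lem:itreeop-diff-time} exactly as the paper does. Your added remark that the affected $\vy_H$'s are confined to the $O(\eta K)$ ancestors of the changed operators is a welcome explicit justification of a step the paper leaves implicit, but it is not a different argument.
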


\begin{proof}
	In the data structure, we always maintain $\zprev$ and the $\vy_H$'s together. Specifically, at every step, we update the $\vy_H$'s up the tree using the recursive property \cref{eq:itreeop-children-decomp} only at the necessary nodes, and from the $\vy_H$'s, we get $\zprev = \sum_H \mi_{F_H} \vy_H$.
	
	Consider \textsc{Initialize}. At the end of the function, the variables satisfy
	\[
		\vz \defeq c \zprev + \zsum = 0 \cdot \itreeop \vv + \vzero = \vzero,
	\]
	and $\zprev = \itreeop \vv$, as required.
	
	Let us consider \textsc{Reweight}. Let the superscript $^\new$ denote the value of an algorithm variable at the end of the function, and let no superscript denote the value at the start.
	\begin{align*}
		\vz^\new &= c^\new \zprev^\new + \zsum^\new \\
		&= c (\zprev + \vz') + \zsum - c \vz' \\
		&= c \zprev + \zsum, \\
		\text{and } \zprev^\new &= \zprev + (\itreeop^\new - \itreeop) \vv \\
		&= \itreeop^\new \vv,
	\end{align*}
	as required.
	Similarly, let us consider \textsc{Move}:
	\begin{align*}
		\vz^\new &= c^\new \zprev^\new + \zsum^\new \\ 
		&= (c + h) (\zprev + \vz') + \zsum - c  \vz' \\
		&= c \zprev + \zsum + h \zprev,\\
		\text{and } \zprev^\new &= \zprev + \itreeop (\vv^\new - \vv) \\
		&= \itreeop \vv + \itreeop (\vv^\new - \vv) \\
		&= \itreeop \vv^\new,
	\end{align*}
	which is exactly the update we want to make to $\vz$, and the invariant we want to maintain.
	
	The runtimes follow directly from \cref{lem:itreeop-time,lem:itreeop-diff-time}.
\end{proof}


Next, we present the tree operator data structure, which is significantly more involved compared to the inverse tree operator.
Applying the tree operator involves going down the tree to the leaves, which is too costly to do at every step.
To circumvent the issue, we use lazy computations.

\begin{algorithm}
	\caption{Dynamic data structure to maintain cumulative $\treeop \vz$}
	\label{algo:tree-operator}
	\begin{algorithmic}[1]
		\State \textbf{dynamic data structure \textsc{TreeOp}}
		\State \textbf{member:}
		\State \hspace{4mm} $\ct$: tree supporting $\treeop$
		\State \hspace{4mm} $\vw \in \R^m$: dynamic weight vector
		\State \hspace{4mm} $\vz\in \R^n$: dynamic vector
		\State \hspace{4mm} $\vu_H$ for each $H \in \ct$: lazy pushdown computation vectors
		
		\State
		\Procedure{\textsc{Initialize}}{$\ct, \vw^{\init}, \vz^{\init}, \vx^{\init}$}
		\State $\vw \leftarrow \vw^{\init}, \vz \leftarrow \vz^{\init}$
		\State Initialize $\treeop$ on $\ct$ based on $\vw$
		\State $\vu_H \leftarrow \vzero$ for each non-leaf $H \in \ct$
		\State $\vu_H \leftarrow \vx^{\init}|_{E(H)}$ for each leaf $H \in \ct$
		\EndProcedure
		\State
		\Procedure{\textsc{Reweight}}{$\delta_{\vw}$}
		\State $\vw \leftarrow \vw + \delta_{\vw}$
		\State Let $\treeop^\new$ be the new tree operator wrt the new weights
		\State Let $\collN$ be all nodes $H$ where $\treeop_H$ changed
		\For{$H \in \pathT{\collN}$ going down the tree level by level}
		\State \textsc{Pushdown}$(H)$
		\EndFor
		\For{$H \in \pathT{\collN}$ going down the tree level by level}
		\State $\vu_H \leftarrow c \vz|_{F_H}$
		\State \textsc{Pushdown}$(H)$
		\EndFor
		\State $\treeop \leftarrow \treeop^\new$
		\For{$H \in \pathT{\collN}$ going down the tree level by level}
			\State $\vu_H \leftarrow - c \vz|_{F_H}$
			\State \textsc{Pushdown}$(H)$
		\EndFor
		\EndProcedure
		\State
		\Procedure{Move}{$\delta_{\vz}$}
		\State $\vz \leftarrow \vz + \delta_{\vz}$
		\EndProcedure
		\State
		\Procedure{\textsc{Exact}}{}
		\For{$H \in \ct$ going down the tree level by level}
			\State $\vu_H \leftarrow \vu_H + \vz|_{F_H}$
			\State \textsc{Pushdown}$(H)$
		\EndFor
		\State \Return $\vx$ defined by $\vx|_{E(H)} \defeq \vu_H$ at each leaf $H \in \ct$
		\EndProcedure
	\end{algorithmic}
\end{algorithm}

\begin{algorithm}
	\caption{Dynamic data structure to maintain cumulative $\treeop \vz$, con't}
	\begin{algorithmic}[1]
	\State \textbf{dynamic data structure \textsc{TreeOp}}
	\Procedure{\textsc{Pushdown}}{$H \in \ct$}
	\For{each child $D$ of $H$}
	\State $\vu_{D} \leftarrow \vu_{D} + \treeop_D \vu_H$
	\EndFor
	\State $\vu_{H} \leftarrow \vzero$
	\EndProcedure
\end{algorithmic}
\end{algorithm}

\begin{theorem}[Tree operator data structure]
	 \label{thm:tree-operator-data-struct}
	Let $\vw \in \R^m$ be the weights changing at every step of \textsc{Solve}.
	Suppose $\treeop : \R^n \mapsto \R^m$ is a tree operator dependent on $\vw$ supported on $\ct$ with query complexity $Q$ and update complexity $U$.
	Let $\vz \in \R^n$ be the vector maintained by \cref{algo:inverse-tree-operator},
	so that at the end of step $k$,
	$\vz = \sum_{i=1}^k h^{(i)} \itreeop^{(i)} \vv^{(i)}$.
	Then the data structure \textsc{TreeOp} (\cref{algo:tree-operator}) implicitly maintains $\vx$
	so that at the end of step $k$,
	\[
	\vx^{(k)} = \vx^{\init} + \sum_{i=1}^k \treeop^{(i)} \itreeop^{(i)} \vv^{(i)}.
	\]
	
	The data structure is initialized via \textsc{Initialize} in $O(U(m))$ time.
	At step $k$, there is one call to \textsc{Reweight}$(\delta_{\vw})$ taking $O(U(K) + Q(\eta K))$ time, where $K = \nnz(\delta_{\vw})$, followed by one call to \textsc{Move}$(\delta_{\vz})$ taking $O(\nnz(\delta_{\vz}))$ time.
	At the end of \textsc{Solve}, $\vx$ is returned via \textsc{Exact} in $O(Q(m))$ time.
\end{theorem}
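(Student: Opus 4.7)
The plan is to establish and maintain the invariant
\[
\vx \;=\; \sum_{H \in \ct} \treeop^{(H)}\, \vu_H \;+\; \treeop\, \vz
\]
throughout the life of the data structure, where $\vz$ is the shared cumulative vector from \textsc{InverseTreeOp}. Combined with the decomposition $\treeop \vz = \sum_H \treeop^{(H)} \vz|_{F_H}$ from \cref{eq:tree-op-decomp} and the fact that $\vz^{(k)} = \sum_{i\le k} h^{(i)} \itreeop^{(i)} \vv^{(i)}$ at step $k$, this invariant is equivalent to the target representation $\vx = \vx^{\init} + \sum_i h^{(i)} \treeop^{(i)} \itreeop^{(i)} \vv^{(i)}$ claimed by the theorem.

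The backbone of the argument is a \emph{pushdown-invariance lemma}: since $\treeop^{(H)} = \sum_{\text{child } D} \treeop^{(D)} \treeop_D$ by \cref{eq:tree-op-children-decomp}, the combined operation $\vu_D \mathrel{+}= \treeop_D \vu_H$ for every child $D$ followed by $\vu_H \leftarrow \vzero$ preserves $\sum_H \treeop^{(H)} \vu_H$. This makes the other procedures routine to verify. \textsc{Initialize} places $\vu_L = \vx^{\init}|_{E(L)}$ at leaves (zero elsewhere) with $\vz = \vzero$; since $\treeop^{(L)}$ is the identity on $E(L)$ at a leaf, the invariant collapses to $\vx = \vx^{\init}$. \textsc{Move} changes only $\vz$ by $\delta_{\vz} = h\, \itreeop \vv$, producing the desired increment $\treeop\,\delta_{\vz} = h\, \treeop\, \itreeop\, \vv$ in $\vx$. \textsc{Exact} adds $\vz|_{F_H}$ at every node, absorbing the $\treeop \vz$ term so that $\vx = \sum_H \treeop^{(H)} \vu_H$, and the top-down pushdown sweep then forces all mass to the leaves, yielding $\vu_L = \vx|_{E(L)}$.

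The main obstacle is \textsc{Reweight}, where $\treeop$ is swapped to $\treeop^\new$ while $\vx$ and $\vz$ are held fixed; the invariant then demands that $\sum_H \treeop^{(H)} \vu_H$ change by precisely $(\treeop - \treeop^\new) \vz$ in order to cancel the change in the $\treeop \vz$ term. The key structural observation is that $\treeop^{(H)}$ depends only on the edge operators lying strictly inside $\ct_H$, so $\treeop^{(H)}$ remains unchanged for every $H \notin \pathT{\collN}$; hence all adjustments can be confined to $\pathT{\collN}$. Phase 1 pushes down existing contributions at $\pathT{\collN}$ under the old $\treeop$, zeroing $\vu_H$ on $\pathT{\collN}$ while preserving the sum. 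Phase 2 injects $\vz|_{F_H}$ at each $H \in \pathT{\collN}$ and pushes down under $\treeop$, adding $\sum_{H \in \pathT{\collN}} \treeop^{(H)} \vz|_{F_H}$ to the sum; because the resulting $\vu$ is supported outside $\pathT{\collN}$, where $\treeop^{(H)}$ and $\treeop^{(H),\new}$ agree, the swap $\treeop \leftarrow \treeop^\new$ does not alter the sum. Phase 3 symmetrically injects $-\vz|_{F_H}$ at each $H \in \pathT{\collN}$ and pushes down under $\treeop^\new$, subtracting $\sum_{H \in \pathT{\collN}} \treeop^{(H),\new} \vz|_{F_H}$. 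The net change is $\sum_{H \in \pathT{\collN}} (\treeop^{(H)} - \treeop^{(H),\new}) \vz|_{F_H} = (\treeop - \treeop^\new) \vz$, exactly as required.

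For the runtime bounds, \textsc{Initialize} is dominated by building $\treeop$ at cost $O(U(m))$. In \textsc{Reweight}, $|\pathT{\collN}| \le O(\eta K)$ because each of the $K$ nodes in $\collN$ contributes at most $\eta$ ancestors; the operator update costs $O(U(K))$, and the three phases apply edge operators (old or new) at $O(\eta K)$ distinct nodes for a total of $O(Q(\eta K))$ by the definition of query complexity. \textsc{Move} merely modifies $\vz$ in $O(\nnz(\delta_{\vz}))$ time. \textsc{Exact} applies every edge operator of $\ct$ exactly once along its top-down sweep, costing $O(Q(m))$.
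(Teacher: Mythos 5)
Your proposal is correct and takes essentially the same route as the paper's own proof: the same invariant $\vx = \treeop \vz + \sum_{H \in \ct} \treeop^{(H)} \vu_H$ (the paper carries an extra scalar $c$ on the $\treeop\vz$ term only for compatibility with \textsc{MaintainRep}, which is inessential here), the same pushdown-invariance claim based on \cref{eq:tree-op-children-decomp}, and the same three-round cancellation accounting for \textsc{Reweight} using the fact that $\treeop^{(H)}$ changes only for $H \in \pathT{\collN}$, where $\vu_H = \vzero$ after the second round. The runtime analysis ($O(U(m))$ initialization, $O(U(K)+Q(\eta K))$ per \textsc{Reweight} since $|\pathT{\collN}| \leq \eta K$, $O(\nnz(\delta_{\vz}))$ per \textsc{Move}, and $O(Q(m))$ for \textsc{Exact}) also matches the paper's.
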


\begin{proof}
	We will show that the data structure maintains the implicit representation via the identity
	\begin{equation} \label{eq:tree-operator-invariant}
		\vx = c \treeop \vz + \sum_{H \in \ct} \treeop^{(H)} \vu_H,
	\end{equation}
	where the RHS expression refers to the state of the variables at the end of step $k$ during the algorithm.

	At a high level, the variables $\treeop$ and $\vz$ in the data structure at step $k$ represent the latest $\treeop^{(k)}$ and $\vz^{(k)}$. 
	We need to introduce additional vectors $\vu_H$ at every node $H$ which intuitively stores lazy computations at node $H$, in order to take advantage of the tree structure of $\treeop$. 
	The function \textsc{Pushdown} performs the accumulated computation at $H$, and moves the result to its children nodes to be computed lazily at a later point.
	The next claim describes this process formally.
	
	\begin{claim} \label{claim:pushdown}
		Let $H \in \ct$ be a non-leaf node. \textsc{Pushdown}$(H)$ does not change the value of the implicit representation in \cref{eq:tree-operator-invariant}. 
		Also, at the end of the procedure, $\vu_{H} = \vzero$.
	\end{claim}
	\begin{proof}
		For any variable in the algorithm, we add the superscript $^\new$ to mean its state at the end of \textsc{Pushdown}; if there is no superscript, then it refers to the state at the start.
		
		We show the claim for when $H$ has two children $H_1, H_2$. 
		Note that $\treeop$ and $\vz$ are not touched by \textsc{Pushdown}, so we may ignore the term $c \treeop \vz$ in \cref{eq:tree-operator-invariant}. 
		Then,
		\begin{align*}
			&\phantom{{}={}} \sum_{H' \in \ct} {\treeop^{(H')}} \vu_{H'}^\new \\
			&= \treeop^{(H)} \vu_H^\new + \sum_{i=1,2} \treeop^{(H_i)} \vu_{H_i}^\new + \sum_{H' \in \ct, H' \neq H, H_1, H_2} \treeop^{(H')} \vu_{H'} \tag{expand terms} \\
			&= \sum_{i=1,2}  \treeop^{(H_i)} (\vu_{H_i} + \treeop_{H_i} \vu_H) + 
			\sum_{H' \in \ct, H' \neq H, H_1, H_2} \treeop^{(H')} \vu_{H'} \tag{substitute values} \\
			&= \sum_{i=1,2}  \treeop^{(H_i)} \treeop_{H_i} \vu_H + 
			\sum_{H \in \ct, H' \neq H} \treeop^{(H')} \vu_{H'} \\
			&= \treeop^{(H)} \vu_H + \sum_{H \in \ct, H' \neq H} \treeop^{(H')} \vu_{H'} \tag{by \cref{eq:tree-op-children-decomp}} \\
			&= \sum_{H' \in \ct} \treeop^{(H')} \vu_{H'},
		\end{align*}
		so the implicit representation of $\vx$ has not changed in value.
	\end{proof}
	This claim can be generalized from $H \in \ct$ to $\mathcal{H} \subseteq \ct$; we omit the full details.
	Next, we show that the implicit representation of $\vx$ by \cref{eq:tree-operator-invariant} is correctly maintained
	after reweight.
	
	\begin{claim}
		After the $k$-th call \textsc{Reweight}, the value of $\vx$ is unchanged, while the value of $\treeop$ is updated to $\treeop^{(k)}$ which is a function of $\vw^{(k)}$.
	\end{claim}
	\begin{proof}
		We begin by observing that if $H \notin \pathT{\collN}$, then $\treeop^{(H)\new} = \treeop^{(H)}$ by definition, as there are no edges in $\ct_H$ with updated operators.
		
		At a high level, we traverse the subtree $\pathT{\collN}$ three rounds and perform \textsc{Pushdown} at every node.
		During the first round, we simply push down the current $\vu_H$ values at each node $H$. 
		By \cref{claim:pushdown}, we know this does not change the value of the implicit representation.
		
		During the second round, we first initialize $\vu_H \leftarrow c \vz|_{F_H}$ at each node $H \in \pathT{\collN}$, and then perform \textsc{Pushdown}. Since \textsc{Pushdown} does not affect the value of the implicit representation, we can use the initial change in $\vu_H$ to determine the overall change in the implicit representation.
		Crucially, note that we perform \textsc{Pushdown} using the old tree operator.
		So, the change in value of the implicit representation is given by
		\begin{align*}
			+ c \sum_{H \in \pathT{\collN}} \treeop^{(H)} \vz|_{F_H}.
		\end{align*}
	
		After the second round of \textsc{Pushdown}, we update the tree operator $\treeop$ to $\treeop^\new$.
		Note that $\treeop^{(H)}$ changes if and only if $H \in \pathT{\collN}$, and in this case, $\vu_H = \vzero$.
		So, updating the tree operator at this point induces a change in the value of the implicit representation of
		\begin{align*}
			c \treeop^\new \vz -  c \treeop \vz = c \sum_{H \in \ct} \left(\treeop^{(H) \new} - \treeop^{(H)} \right)\vz|_{F_H}
			= c \sum_{H \in \pathT{\collN}} \left(\treeop^{(H) \new} - \treeop^{(H)} \right)\vz|_{F_H}.
		\end{align*}
		
		During the third round, we initialize $\vu_H \leftarrow - c \vz|_{F_H}$ at each node $H \in \pathT{\collN}$ and perform \textsc{Pushdown}.
		Similar to the first round, the change to the value of the implicit representation induced by this round is given by
		\begin{align*}
			- c \sum_{H \in \pathT{\collN}} \treeop^{(H) \new} \vz|_{F_H}.
		\end{align*}
		The sum of the changes from each of the three rounds is exactly $\vzero$, so we conclude the value of the implicit representation did not change.
	\end{proof}
	
	Finally, we consider the other functions.

	For \textsc{Initialize}, we see that by substituting the values assigned during \textsc{Initialize} and applying the definition from \cref{eq:tree-op-decomp}, we have
	\[
		c \treeop \vz + \sum_{H \in \ct} {\treeop^{(H)}} \vu_H = 
		\vx^\init + \treeop \vz,
	\]
	where $\treeop$ is the initial $\treeop^{\init}$ and $\vz$ is the initial $\vz^{\init}$, which is exactly how we want to initialize $\vx$.
	
	For \textsc{Move}, we see the value of $\vx$ is incremented by $\treeop^{(k)} (\vz^{(k)} - \vz^{(k-1)})$ after the step $k$.
	By definition of $\vz$, we know $\vz^{(k)} - \vz^{(k-1)} = h^{(k)} \itreeop^{(k)} \vv^{(k)}$, so we conclude \textsc{Move} correctly makes the update $h^{(k)} \treeop^{(k)} \itreeop^{(k)} \vv^{(k)}$.
	
	For \textsc{Exact}, we calculate the value of $\vx$ explicitly by performing the computation $\sum_{H \in \ct} \treeop^{(H)} (\vu_H + \vz|_{F_H})$ using a sequence of \textsc{Pushdown}'s down the tree.
	The final answer $\vx$ is stored in parts in the $\vu_H$'s along the leaf nodes.
	
	Note that by definition of the query complexity of $\treeop$, \textsc{Pushdown} uses $O(Q(1))$ time. 
	The remaining runtimes are straightforward.
\end{proof}

Finally, 
we combine \cref{algo:inverse-tree-operator} and \cref{algo:tree-operator} to get the overall data structure \textsc{MaintainRep} for maintaining $\vx$ throughout \textsc{Solve} as given by \cref{eq:simplified-x-update}.
We omit the pseudocode implementation.

\begin{proof}[Proof of \cref{thm:maintain_representation}]
	We use one copy of \textsc{InverseTreeOp}, which maintains $\vz \defeq c \zprev + \zsum$.
	We want to use \textsc{TreeOp} to maintain $\vz$ which is given in two terms by \textsc{InverseTreeOp}.
	To resolve this, we can simply use two copies of the data structure and track the two terms in $\vz$ separately; 
	then we correctly maintain $\vx$. 
	During \textsc{Solve}, at step $k$, we first call \textsc{Reweight} and \textsc{Move} in \textsc{InverseTreeOp}, followed by \textsc{Reweight} and \textsc{Move} in each copy of \textsc{TreeOp}.
	The runtimes follow in a straightforward manner.
\end{proof}

	\section{Maintaining vector approximation}
\label{sec:sketch}

We include this section for completeness; all techniques are from \cite{dong2022nested}.

Recall at every step of the IPM, we want to maintain approximate vectors $\ox, \os$ so that
\begin{align*}
	\norm{ \mw^{-1/2} (\ox - \vx)}_\infty \leq \delta \quad \text{and} \quad 
	\norm{ \mw^{1/2} (\os- \vs)}_\infty \leq \delta'
\end{align*}
for some error tolerances $\delta$ and $\delta'$.

In the previous section, we showed how to use \textsc{MaintainRep} to maintain $\vx$ implicitly throughout \textsc{Solve} in the IPM.
In this section, we give a data structure to efficiently maintain an approximate vector $\ox$ to the $\vx$ from \textsc{MaintainRep}, so that at every step,
\[
	\norm{\mw^{-1/2} \left(\ox - \vx\right)}_\infty \leq \delta.
\]
In the remainder of this section, we crucially assume that $\vw$ is a function of $\ox$ coordinate-wise, which is indeed satisfied by the RIPM framework.

\begin{remark}
	In our problem setting, we do not have full access to the exact vector $\vx$. 
	The algorithms in the next two subsections however will refer to $\vx$ for readability and modularity. 
	We observe that access to $\vx$ is limited to two types: 
	accessing the JL-sketches of specific subvectors, and accessing exact coordinates and other specific subvectors of sufficiently small size.
\end{remark}

In \cref{subsec:sketch_vector_to_change}, we reduce the problem of maintaining $\ox$ to detecting coordinates of $\vx$ with large changes. 
In \cref{subsec:sketch_change_to_sketch}, we detect coordinates of $\vx$ with large changes using a sampling technique on a constant-degree tree, where Johnson-Lindenstrauss sketches of subvectors of $\vx$ are maintained at each node the tree. 
In \cref{subsec:sketch_maintenance}, we show how to compute and maintain the necessary collection of JL-sketches on the operator tree $\ct$; in particular, we do this efficiently with only an implicit representation of $\vx$.
Finally, we put the three parts together to prove \cref{thm:soln-approx}.

For notational simplicity, we use $\md \defeq \mw^{-1/2}$. 
Recall we use the superscript $^{(k)}$ to denote the variable at the end of step $k$; that is, $\md^{(k)}$ and $\vx^{(k)}$ are the values of $\md$ and $\vx$ at the end of step $k$. Step 0 is the initialization step.

\subsection{Reduction to change detection} 
\label{subsec:sketch_vector_to_change}

In this section, we show that in order to maintain an approximation $\ox$ to some vector $\vx$, it suffices to detect coordinates of $\vx$ that change a lot. 

We make use of dyadic intervals. At step $k$, for each $\ell$ such that $k \equiv 0 \bmod 2^\ell$, we find the index set $I_{\ell}^{(k)}$ that contains all coordinates $i$ of $\vx$ such that $\vx_i^{(k)}$ changed significantly compared to $\vx_i^{(k-2^\ell)}$, that is, compared to $2^\ell$ steps ago. Formally:

\begin{definition} \label{defn:sketch-index-set}
	At step $k$ of the IPM, for each $\ell$ such that $k \equiv 0 \bmod 2^\ell$, we define 
	\begin{align*}
		I_{\ell}^{(k)} \defeq &\; \{i\in[n]: \md_{ii}^{(k)} \cdot|\vx_{i}^{(k)}-\vx_{i}^{(k-2^{\ell})}|\geq\frac{\delta}{2\left\lceil \log m\right\rceil }, \\
		&\;\text{and $\ox_i$ has not been updated after the $(k-2^{\ell})$-th step}\}.
	\end{align*}
\end{definition}

We show how to find the sets $I_\ell^{(k)}$ with high probability in the next section.
Assuming the correct implementation, we have the following data structure for maintaining the desired approximation $\ox$:

\begin{algorithm}
	\caption{Data structure \textsc{AbstractMaintainApprox}, Part 1}
	\label{algo:maintain-vector}
	
	\begin{algorithmic}[1]
		\State \textbf{data structure} \textsc{AbstractMaintainApprox} 
		\State \textbf{private : member}
		\State \hspace{4mm} $\ct$: constant-degree rooted tree with height $\eta$ and $m$ leaves 
		\Comment leaf $i$ corresponds to $\vx_i$
		\State \hspace{4mm} $\sketchlen \defeq \Theta(\eta^{2}\log(\frac{m}{\rho}))$: sketch dimension
		\State \hspace{4mm} $\mphi \sim \mathbf{N}(0,\frac{1}{w})^{w \times m}$: JL-sketch matrix
		\State \hspace{4mm} $\delta>0$: additive approximation error
		\State \hspace{4mm} $k$: current step
		\State \hspace{4mm} $\ox \in \R^{m}$: current valid approximate vector
		\State \hspace{4mm} $\{\vx^{(j)} \in \R^{m}\}_{j=0}^{k}$: list of previous inputs 
		\State \hspace{4mm} $\{\md^{(j)}\in\R^{m\times m}\}_{j=0}^{k}$: list of previous diagonal scaling matrices 
		
		\State
		\Procedure{\textsc{Initialize}}{$\ct, \vx\in\R^{m}, \md\in\R_{>0}^{m\times m}, \rho>0, \delta>0$}
		\State $\ct \leftarrow \ct$, $\delta\leftarrow\delta$, $k \leftarrow 0$
		\State $\ox\leftarrow\vx, \vx^{(0)} \leftarrow \vx, \md^{(0)} \leftarrow \md$
		\State sample $\mphi \sim \mathbf{N}(0,\frac{1}{w})^{w \times m}$
		\EndProcedure
		
		\State
		\Procedure{\textsc{Update}}{$\vx^\new \in\R^{m}, \md^\new \in\R_{>0}^{m\times m}$}
		\State $k\leftarrow k+1$, $\vx^{(k)}\leftarrow\vx^\new$, $\md^{(k)}\leftarrow\md^\new$
		\EndProcedure
		
		\State
		\Procedure{\textsc{Approximate}}{}
		\State $I \leftarrow \emptyset$
		
		\ForAll {$0 \leq \ell < \left\lceil \log m\right\rceil $ such that $k \equiv 0\bmod2^{\ell}$}
		
		\State $I_{\ell}^{(k)} \leftarrow \textsc{FindLargeCoordinates}(\ell)$
		\label{line:set-I_ell^k}
		\State $I \leftarrow I\cup I_{\ell}^{(k)}$
		\EndFor
		
		\If{$k \equiv 0 \bmod 2^{\left\lceil \log m\right\rceil}$} 
		\State $I\leftarrow[m]$ \label{line:set-ox-to-x}
		\Comment Update $\ox$ in full every $2^{\left\lceil \log m\right\rceil}$ steps
		\EndIf
		\State $\ox_{i}\leftarrow\vx_{i}^{(k)}$ for all $i\in I$ \label{line:update-ox}
		\State \Return $\ox$
		
		\EndProcedure
		\algstore{approx-vector-break-point}
	\end{algorithmic}
\end{algorithm}
\addtocounter{algorithm}{-1}

\begin{lemma}[Approximate vector maintenance] \label{lem:maintain-approx}
	Suppose \textsc{FindLargeCoordinates}$(\ell)$ is a procedure in \textsc{AbstractMaintainApprox} that correctly computes the set $I_\ell^{(k)}$ at the $k$-th step.
	Then the deterministic data structure \textsc{AbstractMaintainApprox} in \cref{algo:maintain-vector} maintains an approximation $\ox$ of $\vx$ with the following procedures:
	\begin{itemize}
		\item \textsc{Initialize}$(\ct, \vx$, $\md$, $\rho > 0$, $\delta>0)$: 
		Initialize the data structure at step 0 with tree ${\ct}$, initial vector $\vx$, initial diagonal scaling matrix $\md$, target additive approximation error $\delta$, and success probability $1 - \rho$. 
		
		\item \textsc{Update}$(\vx^\new$, $\md^\new$): 
		Increment the step counter and update vector $\vx$ and diagonal scaling matrix $\md$. 
		
		\item \textsc{Approximate}:
		Output a vector $\ox$ such that $\|\md (\vx-\ox)\|_{\infty}\leq\delta$ for the latest $\vx$ and $\md$ with probability at least $1 - \rho$.
	\end{itemize}
	At step $k$, the procedure \textsc{Update} is called, followed by \textsc{Approximate}.
	Suppose $\|\md^{(k)}(\vx^{(k)}-\vx^{(k-1)})\|_2 \leq\beta$
	\emph{for all steps $k$}, and $\md$ is a function of $\ox$ coordinate-wise.
	Then, for each $\ell \geq 0$, the data structure updates $O(2^{2\ell}(\beta/\delta)^{2}\log^{2}m)$ coordinates of $\ox$ every $2^\ell$ steps.
\end{lemma}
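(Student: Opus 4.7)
The proof splits into two parts: correctness of the approximation bound $\|\md(\vx - \ox)\|_\infty \leq \delta$, and the per-level coordinate update count.

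\textbf{Correctness.} Fix a coordinate $i$ and a step $k$, and let $j$ be the most recent step at which $\ox_i$ was updated (taking $j = 0$ for initialization). The full-refresh in \cref{line:set-ox-to-x} guarantees $k - j < 2^{\lceil \log m \rceil}$, so I decompose $(j, k]$ into $t = O(\log m)$ right-aligned dyadic subintervals $(k_s, k_{s-1}]$ of length $2^{\ell_s}$ with $2^{\ell_s} \mid k_{s-1}$ and $\ell_s < \lceil \log m \rceil$, using the standard greedy dyadic construction. Since $\md$ depends on $\ox$ coordinate-wise and $\ox_i$ is constant on $(j, k]$, the scalar $\md_{ii}$ is constant on this interval, so $\md_{ii}^{(k_{s-1})} = \md_{ii}^{(k)}$ for every $s$. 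At step $k_{s-1}$, \textsc{Approximate} invoked \textsc{FindLargeCoordinates}$(\ell_s)$ by the divisibility condition, and $i \notin I_{\ell_s}^{(k_{s-1})}$ because otherwise $\ox_i$ would have been written at that step, contradicting the choice of $j$. Hence each increment satisfies $\md_{ii}^{(k)} |\vx_i^{(k_{s-1})} - \vx_i^{(k_s)}| < \delta / (2 \lceil \log m \rceil)$, and the triangle inequality telescoped over $s = 1, \dots, t$ gives the desired $\md_{ii}^{(k)}|\vx_i^{(k)} - \vx_i^{(j)}| \leq \delta$, where the factor of $2$ in the threshold absorbs the constant in the dyadic-decomposition size.

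\textbf{Update count.} Fix $\ell$ and a step $k$ with $2^\ell \mid k$. Every $i \in I_\ell^{(k)}$ satisfies $\md_{ii}^{(k)} |\vx_i^{(k)} - \vx_i^{(k-2^\ell)}| \geq \delta/(2\lceil \log m \rceil)$, and by the same stale-$\ox_i$ argument as above has $\md_{ii}^{(k')} = \md_{ii}^{(k)}$ for all $k' \in [k-2^\ell, k]$. Writing $\mi_I$ for the restriction to $I = I_\ell^{(k)}$ and using the hypothesis $\|\md^{(k')}(\vx^{(k')} - \vx^{(k'-1)})\|_2 \leq \beta$, the $\ell_2$ triangle inequality yields
\[
\|\mi_I \md^{(k)}(\vx^{(k)} - \vx^{(k-2^\ell)})\|_2 \leq \sum_{k' = k-2^\ell+1}^{k} \|\mi_I \md^{(k')}(\vx^{(k')} - \vx^{(k'-1)})\|_2 \leq 2^\ell \beta.
\]
Combining with the coordinate-wise lower bound $\md_{ii}^{(k)}|\vx_i^{(k)} - \vx_i^{(k-2^\ell)}| \geq \delta/(2\lceil \log m \rceil)$ gives
\[
|I_\ell^{(k)}| \cdot \left(\frac{\delta}{2 \lceil \log m \rceil}\right)^2 \leq 2^{2\ell} \beta^2,
\]
so $|I_\ell^{(k)}| = O(2^{2\ell} (\beta/\delta)^2 \log^2 m)$. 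Since $I_\ell^{(\cdot)}$ is computed at most twice in any window of $2^\ell$ consecutive steps, the coordinate updates attributable to level $\ell$ in such a window match the claimed bound.

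\textbf{Main obstacle.} The principal subtlety is the interaction between the stale comparison in the definition of $I_\ell^{(k)}$ (stated in the current metric $\md^{(k)}$) and the time-varying $\md$; both halves of the argument rely essentially on the hypothesis that $\md$ is a coordinate-wise function of $\ox$, which pins $\md_{ii}$ constant over exactly those steps where $\ox_i$ is stale. Once that observation is isolated, the dyadic telescoping for correctness and the $\ell_2$-based counting for the update bound both proceed mechanically; the remaining bookkeeping (aligning the refresh schedule with $2^{\lceil \log m \rceil}$ and verifying that every $\ell_s$ in the decomposition satisfies $\ell_s < \lceil \log m \rceil$) is routine.
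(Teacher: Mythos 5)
Your proof is correct and follows essentially the same route as the paper's: the same dyadic telescoping over right-aligned power-of-two intervals, using that $\md_{ii}$ is frozen exactly while $\ox_i$ is stale (the coordinate-wise dependence of $\md$ on $\ox$), gives the $\ell_\infty$ guarantee, and the same counting of $|I_{\ell}^{(k)}|$ against the per-step bound $\|\md^{(k)}(\vx^{(k)}-\vx^{(k-1)})\|_2\leq\beta$ gives the update count. The only cosmetic difference is that you bound $\|\mi_{I}\md^{(k)}(\vx^{(k)}-\vx^{(k-2^{\ell})})\|_2$ via the $\ell_2$ triangle inequality, whereas the paper applies Cauchy--Schwarz coordinate-wise and sums squares; both yield $|I_{\ell}^{(k)}|=O(2^{2\ell}(\beta/\delta)^{2}\log^{2}m)$.
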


\begin{proof}[Proof of \cref{lem:maintain-approx}]
	The failure case arises from \textsc{FindLargeCoordinates}.
	Assuming \textsc{FindLargeCoordinates} returns the correct set of coordinates,
	we prove the correctness of \textsc{Approximate}. 
	
	Fix some coordinate $i\in[m]$ and fix some step $k$. 
	Suppose the latest update to $\ox_i$ is $\ox_i\leftarrow \vx_i^{(k')}$ at step $k'$. 
	So $\md_{ii}^{(d)}$ is the same for all $k\ge d >k'$, 
	and $i$ is not in the set $I_{\ell}^{(d)}$ returned by \textsc{FindLargeCoordinates} for all $k\ge d>k'$. 
	Since we set $\ox\leftarrow\vx$ every $2^{\left\lceil \log m\right\rceil }$ steps by \cref{line:set-ox-to-x}, 
	we know $k-2^{\left\lceil \log m\right\rceil }\leq k'<k$.
	Using dyadic intervals, we can define a sequence $k_0, k_1, \dots, k_s$ with $s\leq2\left\lceil \log m\right\rceil$, 
	where $k'=k_{0}<k_{1}<k_{2}<\cdots<k_{s}=k$, 
	each $k_{j+1}-k_{j}$ is a power of $2$, 
	and $(k_{j+1}-k_{j}) \mid k_{j+1}$.
	Hence, we have 
	\[
	\vx_{i}^{(k)}-\ox_{i}^{(k)}=\vx_{i}^{(k_{s})}-\ox_{i}^{(k_{0})}=\vx_{i}^{(k_{s})}-\vx_{i}^{(k_{0})} = \sum_{j=0}^{s-1} \left(\vx_{i}^{(k_{j+1})}-\vx_{i}^{(k_{j})}\right).
	\]
	We know that $\md_{ii}^{(d)}$ is the same for all $k\ge d >k'$. By the guarantees of \textsc{FindLargeCoordinates}, we have
	\[
	\md_{ii}^{(k)} \cdot|\vx_{i}^{(k_{j+1})}-\vx_{i}^{(k_{j})}|= 
	\md_{ii}^{(k_{j+1})} \cdot|\vx_{i}^{(k_{j+1})} -\vx_{i}^{(k_{j})}|\leq\frac{\delta}{2\left\lceil \log m\right\rceil }
	\] for all $0\le j<s$. 
	Summing over all $j=0,1,\ldots,s-1$ gives 
	\[
	\md_{ii}^{(k)} \cdot|\vx_{i}^{(k)}-\ox_{i}^{(k)}|\leq\delta.
	\]
	Hence, we have $\|\md (\vx-\ox)\|_{\infty}\leq\delta$.
	
	Fix some $\ell$ with $k \equiv 0 \bmod 2^\ell$. We bound the number of coordinates in $I^{(k)}_\ell$.
	For any $i\in I_{\ell}^{(k)}$, we know $\md_{ii}^{(j)}=\md_{ii}^{(k)}$ for all $j>k-2^{\ell}$ because $\ox_{i}$ did not change in the meanwhile. By definition of $I_\ell^{(k)}$, we have
	\[
	\md_{ii}^{(k)} \cdot\sum_{j=k-2^{\ell}}^{k-1}|\vx_{i}^{(j+1)}-\vx_{i}^{(j)}| \geq 
	\md_{ii}^{(k)} \cdot|\vx_{i}^{(k)}-\vx_{i}^{(k-2^{\ell})}|\geq\frac{\delta}{2\left\lceil \log m\right\rceil }.
	\]
	
	Using $\md_{ii}^{(j)}=\md_{ii}^{(k)}$ for all $j>k-2^{\ell}$ again, the above inequality yields 
	\begin{align*}
		\frac{\delta}{2\left\lceil \log m\right\rceil } &\leq\sum_{j=k-2^{\ell}}^{k-1} \md_{ii}^{(j+1)} |\vx_{i}^{(j+1)}-\vx_{i}^{(j)}| \\
		&\leq \sqrt{2^{\ell}\sum_{j=k-2^{\ell}}^{k-1} {\md_{ii}^{(j+1) 2}} |\vx_{i}^{(j+1)}-\vx_{i}^{(j)}|^{2}}. \tag{by Cauchy-Schwarz}
	\end{align*}
	Squaring and summing over all $i\in I_{\ell}^{(k)}$ gives 
	\begin{align*}
		\Omega \left(\frac{2^{-\ell}\delta^{2}}{\log^{2}m}\right)|I_{\ell}^{(k)}| &\leq \sum_{i\in I_{\ell}^{(k)}}\sum_{j=k-2^{\ell}}^{k-1}\md_{ii}^{(j+1) 2}|\vx_{i}^{(j+1)}-\vx_{i}^{(j)}|^{2} \\
		&\leq
		\sum_{i=1}^{m}\sum_{j=k-2^{\ell}}^{k-1}\md_{ii}^{(j+1) 2}|\vx_{i}^{(j+1)}-\vx_{i}^{(j)}|^{2} \\
		&\leq2^{\ell}\beta^{2},
	\end{align*}
	where we use $\|\md^{(j+1)}(\vx^{(j+1)}-\vx^{(j)})\|_2 \leq \beta$
	at the end. Hence, we have 
	\[
	|I_{\ell}^{(k)}|=O(2^{2\ell}(\beta/\delta)^{2}\log^{2}m).
	\]
	In other words, for each $\ell \geq 0$, we update $|I_{\ell}^{(k)}|$-many coordinates of $\ox$ at step $k$ when $k \equiv 0 \mod 2^\ell$.
	So we conclude that
	for each $\ell \geq 0$, we update $O(2^{2\ell}(\beta/\delta)^{2}\log^{2}m)$-many coordinates of $\ox$ every $2^\ell$ steps.
\end{proof}

\subsection{From change detection to sketch maintenance}
\label{subsec:sketch_change_to_sketch}

Now we discuss the implementation of \textsc{FindLargeCoordinates}$(\ell)$ to find the set $I_{\ell}^{(k)}$ in \cref{line:set-I_ell^k} of \cref{algo:maintain-vector}.
We accomplish this by repeatedly sampling a coordinate $i$ with probability proportional to $\left({\md_{ii}^{(k)}} (\vx_{i}^{(k)}-\vx_{i}^{(k-2^{\ell})}) \right)^{2}$,
among all coordinates $i$ where $\ox_i$ has not been updated since $2^\ell$ steps ago.
With high probability, we can find all indices in $I_\ell^{(k)}$ in this way efficiently.
To implement the sampling procedure, we make use of a data structure based on segment trees~\cite{CLRS} along with sketching based on the Johnson-Lindenstrauss lemma.

Formally, we define the vector $\vq \in \R^m$ where $\vq_i \defeq \md^{(k)}_{ii} (\vx^{(k)}_i-\vx^{(k-2^{\ell})}_i)$ if $\ox_i$ has not been updated after the $(k-2^\ell)$-th step, and $\vq_i = 0$ otherwise. 
Our goal is precisely to find all large coordinates of $\vq$.

Let $\ct$ be a constant-degree rooted tree with $m$ leaves, where leaf $i$ represents coordinate $\vq_i$, which we call a \emph{sampling tree}.
For each node $u \in \ct$, we define $E(u) \subseteq [m]$ to be the set of indices of leaves in the subtree rooted at $u$.
We make a random descent down $\ct$, in order to sample a coordinate $i$ with probability proportional to $\vq_i^2$.
At a node $u$, for each child $u'$ of $u$, the total probability of the leaves under $u'$ is given precisely by $\norm{\vq|_{E(u')}}_2^2$. We can estimate this by the Johnson-Lindenstrauss lemma using a sketching matrix $\mphi$. Then we randomly move from $u$ down to child $u'$ with probability proportional to the estimated value.
To tolerate the estimation error, when reaching some leaf node representing coordinate $i$, we accept with probability proportional to the ratio between the exact probability of $i$ and the estimated probability of $i$. If $i$ is rejected, we repeat the process from the root again independently. 

\begin{algorithm}
	\caption{Data structure \textsc{AbstractMaintainApprox}, Part 2}
	\label{algo:change-detection}
	\begin{algorithmic}[1]
		\renewcommand{\thealgorithm}{}
		\algrestore{approx-vector-break-point}
		
		\Procedure{FindLargeCoordinates}{$\ell$}
		\State \LeftComment $\overline{\md}$: diagonal matrix such that 
		\[
		\overline{\md}_{ii}=\begin{cases}
			\md_{ii}^{(k)} & \text{\text{if }}\text{$\ox_i$ has not been updated after the  $(k-2^{\ell})$-th step}\\
			0 & \text{otherwise.}
		\end{cases}
		\]
		\State \LeftComment $\vq \defeq \overline{\md} (\vx^{(k)}-\vx^{(k-2^{\ell})})$ \Comment{vector to sample coordinates from}
		\State
		\State $I \leftarrow \emptyset$ \Comment{set of candidate coordinates}
		\For{$M_{\ell} \defeq \Theta(2^{2\ell}(\beta/\delta)^{2}\log^{2}m\log(m/\rho))$ iterations}\label{line:outer-loop}
		
		\State \LeftComment Sample coordinate $i$ of $\vq$ w.p. proportional to $\vq_{i}^{2}$ by random descent down $\ct$ to a leaf
		\While{\textbf{true}} \label{line:sample-loop}
		\State $u\leftarrow\textrm{root}(\ct)$, $p_{u}\leftarrow1$ \label{line:sample-start}
		\While{$u$ is not a leaf node} 
		\State Sample a child $u'$ of $u$ with probability 
		\[
		\mathbf{P} (u\rightarrow u')\defeq\frac{\|\mphi_{E(u')} \vq\|_{2}^{2}}{\sum_{\text{child $u''$ of $u$}} \|\mphi_{E(u'')}\vq\|_{2}^{2}}
		\] \label{line:sample-a-child}
		\Comment let $\mphi_{E(u)} \defeq \mphi\mi_{E(u)}$ for each node $u$
		\State $p_{u}\leftarrow p_{u}\cdot \mathbf{P} (u\rightarrow u')$
		\State $u\leftarrow u'$
		\EndWhile 
		\State \textbf{break} with probability $p_{\mathrm{accept}}\defeq \norm{\vq|_{E(u)}}^{2} /(2\cdot p_{u}\cdot\|\mphi\vq\|_{2}^{2})$ \label{line:sample-filter} 
		\EndWhile
		
		\State $I\leftarrow I\cup E(u)$ \label{line:sample-one-cor}
		\EndFor
		
		\State \Return $\{i\in I \;:\; \vq_{i} \geq\frac{\delta}{2\left\lceil \log m\right\rceil }\}$.
		\EndProcedure
	\end{algorithmic}
\end{algorithm}

\begin{lemma} 
	\label{lem:change-detection}
	Assume that $\|{\md^{(k+1)}} (\vx^{(k+1)}-\vx^{(k)})\|_2 \leq \beta$ for all IPM steps $k$. 
	Let $\rho < 1$ be any given failure probability, and let $M_{\ell} \defeq \Theta(2^{2\ell}(\beta/\delta)^{2}\log^{2}m\log(m/\rho))$ be the number of samples \cref{algo:change-detection} takes.
	Then with probability $\geq1-\rho$, during the $k$-th call of \textsc{Approximate},  \cref{algo:change-detection} finds the set $I_{\ell}^{(k)}$ correctly. 
	Furthermore, the while-loop in \cref{line:sample-loop} happens only $O(1)$ times in expectation per sample.  
\end{lemma}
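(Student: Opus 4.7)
The plan is to prove the two claims in the lemma separately, both hinging on a Johnson--Lindenstrauss (JL) guarantee for the sketch matrix $\mphi$. First, I would condition on a good event: for each of the $O(m)$ subsets $\{E(u) : u \in \ct\}$, the sketch satisfies $\|\mphi_{E(u)}\vq\|_2^2 = (1 \pm 1/\eta)\|\vq|_{E(u)}\|_2^2$. Since the sketch dimension is $\sketchlen = \Theta(\eta^2 \log(m/\rho))$, a standard JL tail bound together with a union bound over the $O(m)$ tree nodes makes the failure probability at most $\rho/2$.

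Next, I analyze one call of the inner \textbf{while}-loop conditional on this event. As the walk descends from root to a leaf corresponding to coordinate $i$, the telescoping product of sampled transition probabilities gives
\[
    p_u = \prod_{\text{edges on path}} \mathbf{P}(u \to u')
    = \Theta(1) \cdot \frac{\vq_i^2}{\|\vq\|_2^2},
\]
where the $\Theta(1)$ factor absorbs the accumulated $(1 \pm 1/\eta)^{2\eta}$ multiplicative errors from JL at each of the $\eta$ levels. Substituting into the acceptance rule shows $p_{\mathrm{accept}} = \vq_i^2/(2 p_u \|\mphi\vq\|_2^2)$ lies in a bounded subinterval of $(0,1]$, so the overall probability that a given outer iteration outputs $i$ equals $\Theta(\vq_i^2/\|\vq\|_2^2)$. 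Summing over $i$ gives a per-iteration acceptance probability of $\Theta(1)$, which establishes the $O(1)$ bound on the expected number of inner trials per outer iteration.

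For correctness, I need to bound $\|\vq\|_2^2$ from above. The coordinate-wise dependence of $\vw$ (hence $\md$) on $\ox$ implies that on the support of $\vq$, $\md_{ii}^{(j)} = \md_{ii}^{(k)}$ for all $j \in (k - 2^\ell, k]$. Writing the telescoping difference and applying Cauchy--Schwarz over the $2^\ell$ consecutive step differences, together with the per-step bound $\|\md^{(j)}(\vx^{(j)} - \vx^{(j-1)})\|_2 \leq \beta$, yields $\|\vq\|_2^2 \leq 2^{2\ell} \beta^2$. For any $i \in I_\ell^{(k)}$, the definition gives $\vq_i^2 \geq \delta^2/(4 \lceil \log m \rceil^2)$, so the probability that a single outer iteration samples $i$ is $\Omega(\delta^2/(2^{2\ell} \beta^2 \log^2 m))$. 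With $M_\ell = \Theta(2^{2\ell}(\beta/\delta)^2 \log^2 m \log(m/\rho))$ outer iterations, the probability of missing $i$ is at most $\rho/(2m)$; a union bound over the at most $m$ elements of $I_\ell^{(k)}$ finishes the proof (the final filtering step $\vq_i \geq \delta/(2 \lceil \log m \rceil)$ in the algorithm ensures no false positives).

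The main technical subtlety will be calibrating the JL accuracy against the tree height $\eta$: the multiplicative errors accumulate across $\eta$ sampling levels, and keeping the product within a constant factor is what forces the quadratic dependence $\sketchlen = \Theta(\eta^2 \log(m/\rho))$. Everything else is careful bookkeeping: verifying that the acceptance probability is indeed at most $1$ under the good event, confirming the telescoping identity for $p_u$, and the Cauchy--Schwarz step bounding $\|\vq\|_2$ by $2^\ell \beta$.
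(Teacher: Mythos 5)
Your proposal follows essentially the same route as the paper's proof: a JL guarantee with a union bound over the tree nodes, the telescoping product giving $p_u \approx \|\vq|_{E(u)}\|_2^2/\|\vq\|_2^2$, a constant acceptance probability for the rejection step, the bound $\|\vq\|_2 \le 2^\ell \beta$ using that $\md_{ii}$ is frozen on the support of $\vq$, and then $M_\ell$ independent samples plus a union bound, with the final threshold filter removing false positives. The only calibration issue is the one you yourself flag: per-node accuracy $1\pm 1/\eta$ is too loose to guarantee $p_{\mathrm{accept}}\le 1$ under the good event (the paper takes accuracy $1\pm\tfrac{1}{9\eta}$, affordable with $\sketchlen=\Theta(\eta^2\log(m/\rho))$, so that $(1-\tfrac{1}{3\eta})^{\eta}$ stays close enough to $1$), and tightening that constant makes your argument coincide with the paper's.
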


\begin{proof}The proof is similar to Lemma 6.17 in \cite{treeLPArxivV2}. We
	include it for completeness. 
	For a set $S$ of indices, let $\mi_{S}$ be the $m \times m$ diagonal matrix that is one on $S$ and zero otherwise. 
	
	We first prove that \cref{line:sample-filter} breaks with probability
	at least $\frac{1}{4}$. By the choice of $\sketchlen$, Johnson--Lindenstrauss
	lemma shows that $\|\mphi_{E(u)}\vq\|_{2}^{2}=(1\pm\frac{1}{9\eta})\|\mi_{E(u)}\vq\|_{2}^{2}$
	for all $u \in \ct$ with probability at least $1-\rho$. 
	Therefore, the probability we move from a node $u$ to its child node $u'$ is given by
	\[
	\mathbf{P} (u\rightarrow u')= \left(1\pm\frac{1}{3\eta} \right) \frac{\|\mi_{E(u')}\vq\|_{2}^{2}}{\sum_{u''\text{ is a child of }u}\|\mi_{E(u'')}\vq\|_{2}^{2}}
	=
	\left(1\pm\frac{1}{3\eta}\right)\frac{\|\mi_{E(u')}\vq\|_{2}^{2}}{\|\mi_{E(u)}\vq\|_{2}^{2}}.
	\]
	Hence, the probability the walk ends at a leaf $u \in \ct$ is given by
	\[
	p_{u}= \left(1\pm\frac{1}{3\eta} \right)^{\eta} \frac{\|\mi_{u}\vq\|_{2}^{2}}{\|\vq\|_{2}^{2}}=(1\pm\frac{1}{3\eta})^{\eta}\frac{\norm{\vq|_{E(u)}}^{2}}{\|\vq\|_{2}^{2}}.
	\]
	Now, $p_{\mathrm{accept}}$ on \cref{line:sample-filter} is at least
	\[
	p_{\mathrm{accept}}=\frac{\norm{\vq|_{E(u)}}^{2}} {2\cdot p_{u}\cdot\|\mphi\vq\|_{2}^{2}}
	\geq 
	\frac{\norm{\vq|_{E(u)}}^{2}} {2\cdot(1+\frac{1}{3\eta})^{\eta} \frac{\norm{\vq|_{E(u)}}^{2}}{\|\vq\|_{2}^{2}}\cdot\|\mphi\vq\|_{2}^{2}}
	\geq
	\frac{\|\vq\|_{2}^{2}}{2\cdot(1+\frac{1}{3\eta})^{\eta}\|\mphi\vq\|_{2}^{2}}
	\geq 
	\frac{1}{4}.
	\]
	On the other hand, we have that $p_{\mathrm{accept}}\leq\frac{\|\vq\|_{2}^{2}}{2(1-\frac{1}{3\eta})^{\eta}\|\mphi\vq\|_{2}^{2}}<1$
	and hence this is a valid probability.
	
	Next, we note that $u$ is accepted on \cref{line:sample-filter} with probability
	\[
	p_{\mathrm{accept}} p_{u}=\frac{\norm{\vq|_{E(u)}}^{2}}{2\cdot\|\mphi\vq\|_{2}^{2}}.
	\]
	Since $\|\mphi\vq\|_{2}^{2}$ remains the same in all iterations, this probability is proportional to $\norm{\vq|_{E(u)}}^{2}$.
	Since the algorithm repeats when $u$ is rejected, on \cref{line:sample-one-cor},
	$u$ is chosen with probability exactly $\norm{\vq|_{E(u)}}^{2}/\|\vq\|^{2}$.
	
	Now, we want to show the output set is exactly $\{i \in [n] :|\vq_{i}|\geq\frac{\delta}{2\left\lceil \log m\right\rceil }\}$.
	Let $S$ denote the set of indices where $\ox$ did not update between the $(k-2^\ell)$-th step and the current $k$-th step.
	Then
	\begin{align*}
		\|\vq\|_{2} &=\|\mi_{S} \md^{(k)} (\vx^{(k)}-\vx^{(k-2^{\ell})})\|_{2}\\
		& \leq\sum_{i=k-2^{\ell}}^{k-1}\|\mi_{S} \md^{(k)} (\vx^{(i+1)}-\vx^{(i)})\|_{2}\\
		& =\sum_{i=k-2^{\ell}}^{k-1}\|\mi_{S} \md^{(i+1)} (\vx^{(i+1)}-\vx^{(i)})\|_{2}\\
		& \leq\sum_{i=k-2^{\ell}}^{k-1}\| \md^{(i+1)} (\vx^{(i+1)}-\vx^{(i)})\|_{2} \\
		&\leq 2^{\ell}\beta,
	\end{align*}
	where we used $\mi_S \md^{(i+1)} = \mi_S \md^{(k)}$, because $\ox_{i}$ changes whenever $\md_{ii}$ changes at a step. Hence,
	each leaf $u$ is sampled with probability at least $\norm{\vq|_{E(u)}}^{2}/(2^{\ell}\beta)^{2}$.
	If $|\vq_{i}|\geq\frac{\delta}{2\left\lceil \log m\right\rceil }$,
	and $i \in E(u)$ for a leaf node $u$, then the coordinate $i$ is not in $I$ with probability at most
	\[
	\left(1-\frac{\norm{\vq|_{E(u)}}^{2}}{(2^{\ell}\beta)^{2}} \right)^{M_{\ell}}
	\leq \left(1-\frac{1}{2^{2\ell+2}(\beta/\delta)^{2}\left\lceil \log m\right\rceil ^{2}}\right)^{M_{\ell}}\leq\frac{\rho}{m},
	\]
	by our choice of $M_{\ell}$. Hence, all $i$ with $|\vq_{i}|\geq\frac{\delta}{2\left\lceil \log m\right\rceil }$
	lies in $I$ with probability at least $1-\rho$. This proves that
	the output set is exactly $I_{\ell}^{(k)}$ with probability at least
	$1-\rho$. 
\end{proof} 

\begin{remark}
	
	In \cref{algo:change-detection}, we only need to compute $\|\mphi_{E(u)}\vq\|_{2}^{2}$ for $O(M_{\ell})$ many nodes $u \in \ct$. 
	Furthermore, 
	the randomness of the sketch is not leaked and we can use the same
	random sketch $\mphi$ throughout the algorithm. 
	This allows us to efficiently maintain $\mphi_{E(u)}\vq$ for each $u \in \ct$ throughout the IPM.
	
\end{remark}
	
\subsection{Sketch maintenance} \label{subsec:sketch_maintenance}

In \textsc{FindLargeCoordinates} in the previous subsection, we assumed the existence of a constant degree sampling tree $\ct$, 
and for the dynamic vector $\vq$, 
the ability to access $\mphi_{E(u)} \vq$ at each node $u \in \ct$ and $\vq|_{E(u)}$ at each leaf node $u$.

In this section, we consider when the required sampling tree is the operator tree $\ct$ supporting a tree operator $\treeop$, and
the vector $\vq$ is $\vx \defeq \treeop \vz + \sum_{H \in \ct} \treeop^{(H)} \vu_H$,
where each of $\treeop, \vz$ and the $\vu_H$'s undergo changes at every IPM step. 
We present a data structure that implements two features efficiently on $\ct$:
\begin{itemize}
	\item access $\vx|_{E(H)}$ at every leaf node $H$,
	\item access $\mphi_{E(H)} \vx$ at every node $H$, where $\mphi_{E(H)}$ is $\mphi$ restricted to columns given by $E(H)$.
\end{itemize}

\begin{algorithm}
	\caption{Data structure for maintaining $\mphi \vx$, Part 1}
	\label{algo:maintain-sketch}
	\begin{algorithmic}[1]
		\State \textbf{data structure} \textsc{MaintainSketch} 
		\State \textbf{private : member}
		\State \hspace{4mm} $\ct$ : rooted constant degree tree, where at every node $H$, there is
		\State \hspace{12mm} $\ms^{(H)} \in \R^{\sketchlen \times |\skel{H}|}$ : sketched subtree operator $\mphi \treeop^{(H)}$
		\State \hspace{12mm} $\vt^{(H)} \in \R^{\sketchlen}$ : sketched vector $\mphi \sum_{H' \in \ct_H} \treeop^{(H')}\vu_{H'}$
		\State \hspace{4mm} $\mphi\in\R^{\sketchlen\times m}$ : JL-sketch matrix
		\State \hspace{4mm} $\treeop \in \R^{m \times n}$ : dynamic tree operator on $\ct$
		\State \hspace{4mm} $\vu_H$ at every $H \in \ct$ : dynamic vectors
		
		\State
		\Procedure{\textsc{Initialize}}{tree $\ct$, $\mphi\in\R^{\sketchlen\times m}$, tree operator $\treeop$, $\vu_H$ for each $H \in \ct$}
		\State $\mphi \leftarrow\mphi, \ct \leftarrow \ct, \treeop \leftarrow \treeop, \vu_H \leftarrow \vu_H$ for each $H \in \ct$
		\State $\ms^{(H)} \leftarrow \vzero$, $\vt^{(H)} \leftarrow \vzero$ for each $H \in \ct$
		\State {$\textsc{Update}$}({$V(\ct)$}) 
		\EndProcedure
		
		\State
		\Procedure{\textsc{Update}}{$\mathcal{H} \defeq $ set of nodes admitting implicit representation changes}
		\For{$H \in \mathcal{P}_{\ct}(\mathcal{H})$ going up the tree level by level}
		\State $\ms^{(H)} \leftarrow \sum_{\text{child $D$ of $H$}} \ms^{(D)} \treeop_{D}$ \label{line:collect}
		\State $\vt^{(H)} \leftarrow \ms^{(H)}\vu_H + \sum_{\text{child  $D$ of $H$}} \vt^{(D)}$ \label{line:collect2} \label{line:collecty}
		\EndFor
		\EndProcedure
		
		\State
		\Procedure{$\textsc{SumAncestors}$}{$H \in \ct$}
		\If {\textsc{Update} has not been called since the last call to \textsc{SumAncestors}$(H)$}
		\State \Return the result of the last \textsc{SumAncestors}$(H)$
		\EndIf
		\If {$H$ is the root} 
		\Return $\vzero$
		\EndIf
		\State \Return $\treeop_{H} (\vu_P + \textsc{SumAncestors}(P))$ 
		\Comment $P$ is the parent of $H$
		\EndProcedure
		\State
		\Procedure{$\textsc{Estimate}$}{$H \in \ct$}
		\State Let $\vy$ be the result of \textsc{SumAncestors}$(H)$
		\State \Return $\ms^{(H)}\vy + \vt^{(H)}$
		\EndProcedure
		\State
		\Procedure{$\textsc{Query}$}{leaf $H \in \ct$}
		\State \Return $\vu_H + \textsc{SumAncestors}(H)$
		\EndProcedure
	\end{algorithmic}
\end{algorithm}

\begin{lemma} \label{lem:maintain-sketch}
	Let $\ct$ be a constant degree rooted tree with height $\eta$ supporting tree operator $\treeop$ with query complexity $Q$.
	Let $\sketchlen = \Theta(\eta^{2}\log(\frac{m}{\rho}))$ be as defined in \cref{algo:maintain-vector},
	and let $\mphi\in\R^{\sketchlen\times m}$ be a JL-sketch matrix.
	Then \textsc{MaintainSketch} (\cref{algo:maintain-sketch}) is a data structure that maintains $\mphi \vx$, 
	where $\vx$ is implicitly represented by
	\[
		\vx \defeq \treeop \vz + \sum_{H \in \ct} \treeop^{(H)} \vu_H.
	\]
	The data structure supports the following procedures: 
	\begin{itemize}
		\item \textsc{Initialize}(operator tree $\ct$, implicit $\vx$):
		Initialize the data structure and compute the initial sketches in $O(Q(\sketchlen m))$ time.
		\item \textsc{Update}($\collN \subseteq \ct$):
		Update all the necessary sketches in $O(\sketchlen\cdot Q(\eta |\collN|))$ time, where
		$\collN$ is the set of all nodes $H$ where $\vu_H$ or $\treeop_{H}$ changed.
		\item \textsc{Estimate}$(H \in \ct)$: Return $\mphi_{E(H)} \vx$.
		\item \textsc{Query}$(H \in \ct)$: Return $\vx|_{E(H)}$.
	\end{itemize}

	If we call \textsc{Query} on $N$ nodes, the total runtime is $O(Q(\sketchlen \eta N))$.

	If we call \textsc{Estimate} along a sampling path (by which we mean starting at the root, calling estimate at both children of a node, and then recursively descending to one child until reaching a leaf), and then we call \textsc{Query} on the resulting leaf, and we repeat this $N$ times with no updates during the process,
	then the total runtime of these calls is $O(Q(\sketchlen \eta N))$.
\end{lemma}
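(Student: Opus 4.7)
}

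The plan is to prove correctness via two invariants maintained at every node $H \in \ct$ after each call to \textsc{Update}:
\[
\ms^{(H)} = \mphi \treeop^{(H)} \quad \text{and} \quad \vt^{(H)} = \mphi \sum_{H' \in \ct_H} \treeop^{(H')} \vu_{H'}.
\]
I would verify these by induction on the tree, processing nodes bottom-up. The inductive step is exactly the recursive decomposition in \cref{eq:tree-op-children-decomp}: if $H$ has children $D_1,\dots,D_k$, then $\treeop^{(H)} = \sum_i \treeop^{(D_i)} \treeop_{D_i}$, so left-multiplying by $\mphi$ and applying the inductive hypothesis yields $\ms^{(H)} = \sum_i \ms^{(D_i)} \treeop_{D_i}$, which is precisely \cref{line:collect} of \textsc{Update}. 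An analogous computation, separating the $\treeop^{(H)}\vu_H$ term from the contributions of each child's subtree, gives the formula for $\vt^{(H)}$ in \cref{line:collect2}. The restriction of \textsc{Update} to $\pathT{\collN}$ is justified by observing that $\ms^{(H)}$ only depends on the edge operators and subtree structure below $H$, and $\vt^{(H)}$ on the $\vu$-values there; nodes outside the ancestor paths of $\collN$ see no changes in either.

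For the query procedures, I would invoke \cref{lem:treeop-subtree-ancestor-decomp} to split $\mi_{E(H)} \vx$ into a subtree contribution $\sum_{D \in \ct_H} \treeop^{(D)}(\cdot)$ (captured by $\vt^{(H)}$ together with the $\vz|_{F_D}$ that gets folded in) and an ancestor contribution $\treeop^{(H)}\sum_{A} \treeop_{H \leftarrow A}(\cdot)$. For \textsc{Estimate}$(H)$, the ancestor contribution is obtained by induction in \textsc{SumAncestors}, yielding $\vy = \sum_A \treeop_{H\leftarrow A} \vu_A$; then $\ms^{(H)} \vy + \vt^{(H)} = \mphi \treeop^{(H)} \vy + \mphi \sum_{H' \in \ct_H} \treeop^{(H')} \vu_{H'}$, which by the decomposition equals $\mphi_{E(H)} \vx$. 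For \textsc{Query} at a leaf $H$, $\treeop^{(H)} = \mi$ and $\ct_H=\{H\}$, so the decomposition collapses to $\vu_H + \sum_A \treeop_{H\leftarrow A}\vu_A$, precisely what the algorithm returns.

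For the runtimes, \textsc{Initialize} does a single bottom-up pass applying every edge operator of $\treeop$ once per row of $\mphi$, giving $O(Q(\sketchlen\, m))$ via the query-complexity definition applied to $\sketchlen$ independent copies. For \textsc{Update}$(\collN)$, the set $\pathT{\collN}$ has $O(\eta |\collN|)$ nodes and each requires $O(\sketchlen)$ edge operator applications (one per row of $\mphi$ in \cref{line:collect}, plus a multiplication by $\vu_H$ in \cref{line:collecty}), so the cost is $O(\sketchlen\cdot Q(\eta|\collN|))$ by concavity of $Q$. For the batched \textsc{Query} bound, the key point is that \textsc{SumAncestors} caches its result at each ancestor, so across $N$ queries at most $O(\eta N)$ distinct \textsc{SumAncestors} computations occur, each applying one edge operator; combined with the final vector of width proportional to the skeleton, the total is bounded by $O(Q(\sketchlen \eta N))$. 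The \textsc{Estimate}+\textsc{Query} sampling-path bound is analogous: along each of the $N$ paths only $O(\eta)$ ancestors are newly visited (others are cached), and each invocation applies $O(\sketchlen)$-worth of edge operator work.

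The main obstacle I anticipate is the runtime accounting for the batched \textsc{Query}/\textsc{Estimate} setting. The naive per-call bound would multiply by $\eta$ at every leaf, giving $O(N\eta \cdot Q(\sketchlen))$, which is strictly worse than the claimed $O(Q(\sketchlen\eta N))$. The crux is to show that the caching in \textsc{SumAncestors} together with the concavity of $Q$ (which lets us amortize large edge operators near the root against small ones near the leaves) collapses the per-path work into a single $Q(\sketchlen\eta N)$ application. I would formalize this by charging each edge operator application to a distinct ancestor visited at most once across all $N$ calls and then bundling the work via $\sum_H Q(\sketchlen \cdot 1) \le Q(\sketchlen \cdot \eta N)$ using concavity.
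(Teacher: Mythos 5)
Your proposal is correct and follows essentially the same route as the paper's proof: the same two invariants $\ms^{(H)}=\mphi\treeop^{(H)}$ and $\vt^{(H)}=\mphi\sum_{H'\in\ct_{H}}\treeop^{(H')}\vu_{H'}$ maintained bottom-up via \cref{eq:tree-op-children-decomp}, correctness of \textsc{Estimate}/\textsc{Query} via \cref{lem:treeop-subtree-ancestor-decomp} together with the recursion in \textsc{SumAncestors}, and the same amortization for the batched bounds (memoization guarantees each visited node triggers exactly one edge-operator application, and the visited set has size $O(\eta N)$).

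One sentence in your last paragraph should be repaired: you propose to bundle per-node costs via $\sum_{H}Q(\sketchlen\cdot 1)\le Q(\sketchlen\,\eta N)$ ``using concavity.'' Concavity together with $Q(0)=0$ gives subadditivity, so the inequality runs the other way, $Q(\sketchlen\,\eta N)\le\sum_{H}Q(\sketchlen)$; charging $Q(\sketchlen)$ separately to each of the $O(\eta N)$ visited nodes only reproduces the naive bound $\eta N\cdot Q(\sketchlen)$ that you correctly identified as too weak. The right justification --- which your earlier sentences already contain and which the paper uses --- is to invoke \cref{def:tree-operator-complexity} once on the entire batch: across the $N$ paths, each visited node contributes one edge-operator--vector product (inside \textsc{SumAncestors}) and $\sketchlen$ row-times-vector products (for $\ms^{(H)}$ in \textsc{Estimate}), so the total cost is bounded directly by $O(Q(\sketchlen\,\eta N))$, with no per-node bundling step at all.
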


We note that $\treeop \vz = \sum_{H \in \ct} \treeop^{(H)} \vz|_{F_H}$.
For simplicity, it suffices to give the algorithm for sketching the simpler $\vx \defeq \sum_{H \in \ct} \treeop^{(H)} \vu_H$.

\begin{proof}
	Let us consider the correctness of the data structure, starting with the helper function \textsc{SumAncestors}.
	We implement it using recursion and memoization as it is crucial for bounding subsequent runtimes.
	
	\begin{claim}
	\textsc{SumAncestors}$(H \in \ct)$ returns $\sum_{\text{ancestor $A$ of $H$}} \treeop_{H \leftarrow A}  \vu_{A}$.
	\end{claim}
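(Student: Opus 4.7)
The plan is to prove the claim by induction on the depth of $H$ in $\ct$, using the fact that \textsc{SumAncestors} follows exactly the recursive decomposition of $\treeop_{H \leftarrow A}$ along the path from $A$ to $H$.

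For the base case, when $H$ is the root, it has no ancestors (by the convention in the preliminaries that a node is not its own ancestor), so the sum $\sum_{\text{ancestor $A$ of $H$}} \treeop_{H \leftarrow A} \vu_A$ is empty and equals $\vzero$. The procedure returns $\vzero$ in this case, matching the sum.

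For the inductive step, let $P$ be the parent of $H$. By the induction hypothesis,
\[
\textsc{SumAncestors}(P) = \sum_{\text{ancestor $A$ of $P$}} \treeop_{P \leftarrow A} \vu_{A}.
\]
Then the procedure returns
\[
\treeop_H \bigl(\vu_P + \textsc{SumAncestors}(P)\bigr) = \treeop_H \vu_P + \sum_{\text{ancestor $A$ of $P$}} \treeop_H \treeop_{P \leftarrow A} \vu_A.
\]
The key fact I would invoke is the recursive structure of the path operators from Definition~\ref{def:tree-operator}: for an ancestor $A$ of $P$, the path in $\ct$ from $A$ to $H$ is obtained by appending the single edge $P \to H$ to the path from $A$ to $P$, so $\treeop_{H \leftarrow A} = \treeop_H \cdot \treeop_{P \leftarrow A}$; and $\treeop_{H \leftarrow P} = \treeop_H$ since the path from $P$ to $H$ has length one. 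Combining with the observation that the set of ancestors of $H$ equals $\{P\}$ together with the ancestors of $P$, the expression above equals exactly $\sum_{\text{ancestor $A$ of $H$}} \treeop_{H \leftarrow A} \vu_A$, as required.

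Finally, I would briefly remark that the memoization step at the beginning of \textsc{SumAncestors} does not affect correctness: since the cached value is returned only when no \textsc{Update} has occurred since it was computed, the $\vu_A$'s and edge operators $\treeop_{H_i}$ along the path are unchanged, so returning the cached result is equivalent to recomputing the sum. The main content is thus a short, clean induction; no genuine obstacle is expected.
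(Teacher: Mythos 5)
Your proof is correct and follows essentially the same route as the paper: the paper's argument is exactly this recursion (root returns $\vzero$; otherwise the sum over ancestors of $H$ equals $\treeop_H(\vu_P + \sum_{\text{ancestor }A\text{ of }P}\treeop_{P\leftarrow A}\vu_A)$), just stated more tersely without spelling out the induction or the memoization remark.
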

	\begin{proof}
	At the root, there are no ancestors, hence we return the zero matrix. 
	When $H$ is not the root, suppose $P$ is the parent of $H$. Then we can recursively write 
	\[
	\sum_{\text{ancestor $A$ of $H$}} \treeop_{H \leftarrow A} \vu_A = \treeop_{H} \left(\vu_P + \sum_{\text{ancestor $A$ of $P$}} \treeop_{P \leftarrow A} \vu_A \right).
	\]
	The procedure implements the right hand side, and is therefore correct.
	\end{proof}
	
	Assuming we correctly maintain $\ms^{(H)} \defeq \mphi \treeop^{(H)}$ and $\vt^{(H)} \defeq \mphi \sum_{H' \in \ct_H} \treeop^{(H')} \vu_{H'}$ at every node $H$, 
	\textsc{Estimate} and \textsc{Query} return the correct answers by the tree operator decomposition given in \cref{lem:treeop-subtree-ancestor-decomp}.
	
	For \textsc{Update}, note that if a node $H$ is not in $\collN$ and it has no descendants in $\collN$, then by definition, the sketches at $H$ are not changed.
	Hence, it suffices to update the sketches only at all nodes in $\mathcal{P}_{\ct}(\mathcal{H})$. 
	We update the nodes from the bottom of $\ct$ upwards, so that when we are at a node $H$, 
	all the sketches at its descendant nodes are correct. 
	Therefore, by definition, the sketches at $H$ is also correct.
	
	Now we consider the runtimes:
	
	\paragraph{\textsc{Initialize}:}
	It sets the sketches to $\vzero$ in $O(\sketchlen m)$ time, and then calls \textsc{Update} to update the sketches everywhere on $\ct$. 
	By the correctness runtime of \textsc{Update}, this step is correct and runs in $\O(Q(\sketchlen m))$ time.
	
	\paragraph{\textsc{Update}(set of nodes $\mathcal{H}$ admitting implicit representation changes):}
	
	First note that $|\mathcal{P}_{\ct}(\collN)| \leq \eta |\collN|$.
	For each node $H \in \collN$ with children $D_1, D_2$, 
	\cref{line:collect} multiplies each row of $\ms^{(D_1)}$ with $\treeop_{(D_1, H)}$,  
	each row of $\ms^{(D_2)}$ with $\treeop_{D_2}$, and sums the results.
	Summing over $\sketchlen$-many rows and over all nodes in $\mathcal{P}_{\ct}(\mathcal{H})$,
	we see the total runtime of \cref{line:collect} is $O(Q(\sketchlen \eta |\collN|))$.
	
	\cref{line:collect2} multiply each row of $\ms^{(H)}$ with a vector and then performs a constant number of additions of $\sketchlen$-length vectors. Since $\ms^{(H)}$ is computed for all $H \in \mathcal{P}_\ct(\collN)$ in $O(Q(\sketchlen \eta |\collN|))$ total time, this must also be a bound on their number of total non-zero entries. Since each $\ms^{(H)}$ is used once in \cref{line:collect2} for a matrix-vector multiplication, the total runtime of \cref{line:collect2} is $O(Q(\sketchlen \eta |\collN|))$. 
	
	All other lines are not bottlenecks.
	
	\paragraph{Overall \textsc{Estimate} and \textsc{Query} time along $N$ sampling paths:}
	We show that if we call \textsc{Estimate} along $N$ sampling paths
	each from the root to a leaf, and we call $\textsc{Query}$
	on the leaves, the total cost is $O(Q(\sketchlen \eta N))$:
	
	Suppose the set of nodes visited is given by $\mathcal{H}$, then $|\mathcal{H}| \leq \eta N$.
	Since there is no update, and \textsc{Estimate} is called for a node only after it is called for its parent, 
	we know that $\textsc{SumAncestors}(H)$ is called exactly once for each $H \in \mathcal{H}$.
	Each $\textsc{SumAncestor}(H)$ multiplies a unique edge operator $\treeop_{(H,P)}$ with a vector. 
	Hence, the total runtime of \textsc{SumAncestors} is $Q(|\mathcal{H}|)$.
	
	Finally, each \textsc{Query} applies a leaf operator to the output of a unique \textsc{SumAncestors} call, so the overall runtime is certainly bounded by $O(Q(|\mathcal{H}|))$.
	Similarly, each \textsc{Estimate} multiplies $\ms^{(H)}$ with the output of a unique \textsc{SumAncestors} call.
	This can be computed as $\sketchlen$-many vectors each multiplied with the \textsc{SumAncestors} output. Then two vectors of length $\sketchlen$ are added. Summing over all nodes in $\collN$, the overall runtime is $O(Q(\sketchlen|\mathcal{H}|)) = O(Q(\sketchlen \eta N))$.
	
	\paragraph{\textsc{Query} time on $N$ leaves:}
	Since this is a subset of the work described above, the runtime must also be bounded by $O(Q(\sketchlen \eta N))$.
	
\end{proof}

\subsection{Proof of \crtcref{thm:soln-approx}}

We combine the previous three subsections for the overall approximation procedure.
It is essentially \textsc{AbstractMaintainApprox} in \cref{algo:maintain-vector}, 
with the abstractions replaced by a data structure implementation.
We omit the pseudocode and simply describe the functions.
\solnApprox*

\begin{proof}
We apply \cref{lem:maintain-approx} using $\vx$ maintained by \textsc{MaintainRep} and $\md \defeq \mw^{-1/2}$ from \textsc{Solve}.
We create $O(\log m)$ copies of \textsc{MaintainSketch} (\cref{lem:maintain-sketch}),
so that for each $0 \leq \ell \leq O(\log m)$, 
we have one copy $\texttt{sketch}_{\ell,x}$ which maintains sketches of $\mphi \omd \vx^{(k)}$ at step $k$, 
and one copy $\texttt{sketch}_{\ell}$ which maintains sketches of $\mphi \omd \vx^{(k - 2^\ell)}$ at step $k \geq 2^{\ell}$,
where $\omd$ is defined so $\omd_{i,i} = \md_{i,i}$ if $\ox_i$ has not been updated after the $k-2^\ell$-th step, and $\omd_{i,i} = 0$ otherwise (as needed in \cref{algo:change-detection}).
Note that $\omd$ can be absorbed into the tree operator in the implicit representation of $\vx$, so \cref{lem:maintain-sketch} does indeed apply.

To access skeches of the vector
$\vq \defeq \omd (\vx^{(k)} - \vx^{(k - 2^\ell)})$
as needed in \textsc{FindLargeCoordinates} in \cref{algo:change-detection}, 
we can simply access the corresponding sketch in $\texttt{sketch}_{\ell,x}$ and $\texttt{sketch}_{\ell}$, and then take the difference.

We now describe each procedure in words, and then prove their correctness and runtime. 
\paragraph{\textsc{Initialize}$(\ct, \vx, \md, \rho, \delta)$:}

This procedure implements the initialization of \textsc{AbstractMaintainApprox} to approximate the dynamic vector $\vx$ which is given implicitly.
The initialization steps described in \cref{algo:maintain-vector} takes $O(\sketchlen m)$ time.
Then, we initialize the $O(\log m)$ copies of \textsc{MaintainSketch} in $O(Q(\sketchlen m) \log m)$ time by \cref{lem:maintain-sketch}. 

\paragraph{\textsc{Update}$(\vx^\new, \md^\new)$:}
To implement \textsc{Update}, it suffices to update all the sketching data structures.
Let us fix $\ell$, and consider the update time for $\texttt{sketch}_{\ell,x}$ and $\texttt{sketch}_{\ell}$.

\cref{lem:maintain-approx} shows that throughout \textsc{Solve},
there are $O(2^{2\ell}(\beta/\delta)^{2}\log^{2}m)$-many coordinate updates to $\ox$ every $2^\ell$ steps.
Since $\md$ is a function of $\ox$ coordinate-wise, $\ox_{i}= \vx_{i}^{(k-1)}$ for all $i$ where $\md_{ii}^{(k)}\neq\md_{ii}^{(k-1)}$ by \cref{line:update-ox}.
The diagonal matrix $\omd$ is the same as $\md$,
except $\omd_{ii}$ is temporarily zeroed out for $2^\ell$ steps after $\overline{\vx}_i$ changes at a step. 
So, the overall number of coordinate changes to $\omd$ is $O(2^{2\ell})$-many every $2^\ell$ steps.

Let $S^{(k)}$ denote the number of nodes $H$ where $\treeop_{H}$ or $\vu_H$ in the implicit representation of $\vx$ changed at step $k$. Additionally, since 
the sketching data structures maintain some variant of $\omd \vx$ (where $\omd$ is viewed as absorbed in the tree operator), every coordinate change in $\omd$ implies an edge operator update.
Now we apply \cref{lem:maintain-sketch} to conclude that the total time for all \textsc{Update} calls for $\texttt{sketch}_{\ell,x}$ and $\texttt{sketch}_{\ell}$ over $N$ steps is:
\begin{align*}
	&\phantom{{}={}} O(1) \cdot \left(\sum_{k=1}^N Q \left(\sketchlen \eta S^{(k)} \right) + 
	\frac{N}{2^\ell} \cdot Q(\sketchlen \eta \cdot 2^{2\ell}) \right) 
	\leq O(\sketchlen \eta) \cdot \left( \sum_{k=1}^N Q(S^{(k)}) + 
	\frac{N}{2^\ell} \cdot Q(2^{2\ell}) \right) .
\end{align*}
We then sum this over all $\ell$ to get the total update time for the sketching data structures.

\paragraph{\textsc{Approximate}:}

There are two operations to be implemented in the subroutine \textsc{FindLargeCoordinates}$(\ell)$:
Accessing $\mphi_{E(u)} \vq$ at a node $u$, and accessing $\vq|_{E(u)}$ at a leaf node $u$.
For the first, we call $\texttt{sketch}_{\ell,x}.\textsc{Estimate}(u) - \texttt{sketch}_{\ell}.\textsc{Estimate}(u)$.
For the second, we call $\texttt{sketch}_{\ell,x}.\textsc{Query}(u) - \texttt{sketch}_{\ell}.\textsc{Query}(u)$.

To set $\ox_i$ as $\vx^{(k)}_i$ for a single coordinate at step $k$ as needed in \cref{line:update-ox},
we find the leaf node $H$ containing the edge $e$, and call $\texttt{sketch}_{0,x}.\textsc{Query}(H)$. 
This returns the sub-vector $\vx^{(k)}|_{E(H)}$, from which we can extract $\vx^{(k)}_{i}$ and set $\ox_i$ to be the value. This line is not a bottleneck in the runtime.

We compute the total runtime over $N$ \textsc{Approximate} calls.
For every $\ell \geq 0$, we call \textsc{FindLargeCoordinates}($\ell$) once every $2^\ell$ steps, for a total of $N/2^\ell$ calls.
In a single call,
$M_\ell \defeq \Theta(2^{2\ell} (\beta/\delta)^2 \log^2 m \log (mN /\rho))$ sampling paths are explored in the $\texttt{sketch}_{\ell}$ and $\texttt{sketch}_{\ell,x}$ data structures by \cref{lem:change-detection}, where a sampling path correspond to one iteration of the while-loop. This takes a total of $O(Q (\sketchlen \eta M_{\ell}))$ time by \cref{lem:maintain-sketch}.
Therefore, for every fixed $\ell$, the total time for all \textsc{FindLargeCoordinates}$(\ell)$ calls is
\[
	\frac{N}{2^\ell} \cdot O \left(Q(\sketchlen \eta M_{\ell}) \right).
\]
The total time for all \textsc{LargeCoordinates} calls is obtained by summing over all values of $\ell = 0, \dots, \log N$.
To achieve overall failure probability at most $\rho$, it suffices to set the failure probability of each call to be $O(\rho/N)$.
\end{proof}

We sum up the initialization time, update and approximate time for all values of $\ell = 0, \dots, \log N$ and over $N$ total steps of \textsc{Solve}, to get the overall runtime of the data structure:
\begin{align*}
	&\phantom{{}={}} \O(Q(\sketchlen m)) + 
	O(\sketchlen \eta) \sum_{k=1}^N Q(S^{(k)})  
	+ O(\sketchlen \eta) \sum_{\ell=0}^{\log N} \frac{N}{2^\ell} \left( Q(2^{2\ell}) + Q(M_{\ell}) \right) \\
	&= \O(\eta^3 (\beta/\delta)^2 \log^3(mN/\rho)) \left( Q(m) + \sum_{k=1}^N Q(S^{(k)}) +  \sum_{\ell=0}^{\log N} \frac{N}{2^\ell} \cdot Q(2^{2\ell}) \right).
\end{align*}

\end{document}